\documentclass[dvipsnames,letterpaper,11pt]{article}

\usepackage[utf8]{inputenc}
\usepackage[english]{babel}

\usepackage{booktabs}
\usepackage{multirow}

\usepackage{xspace}
\usepackage{bm}
\usepackage[tbtags]{amsmath}
\usepackage{amsfonts,amsthm,amssymb}
\usepackage{mathrsfs} 
\usepackage{mathtools}
\usepackage{thmtools}
\usepackage[]{mdframed}

\usepackage{tikz}

\usetikzlibrary{shadows.blur}

\usetikzlibrary{shapes.geometric}

\usepackage{verbatim}
\usepackage{subcaption}

\usepackage[ruled,linesnumbered]{algorithm2e}

\usepackage[margin=1in]{geometry}

\usepackage[dvipsnames,usenames]{xcolor}

\usepackage[pagebackref=true,colorlinks,urlcolor=blue,linkcolor=purple,citecolor=blue,bookmarks,bookmarksopen,bookmarksnumbered]{hyperref}

\usepackage[capitalise,nameinlink]{cleveref}

\usepackage{tcolorbox}

\usepackage{todonotes} 
\usepackage{tablefootnote}

\usepackage{mleftright}

\usepackage{dsfont}
\usepackage[T1]{fontenc}

\newtheorem{theorem}{Theorem}[section]
\newtheorem{lemma}[theorem]{Lemma}

\newtheorem{definition}[theorem]{Definition}

\newtheorem{corollary}[theorem]{Corollary}

\newtheorem{fact}[theorem]{Fact}

\newtheorem{claim}[theorem]{Claim}

\newcommand{\NZerr}{\mathsf{ExtErr}}
\newcommand{\Ext}{\mathsf{Ext}}

\DeclareMathOperator*{\E}{\mathbb{E}}

\newcommand{\PRG}{\mathsf{PRG}}

\numberwithin{equation}{section}

\title{Explicit Min-wise Hash Families with Optimal Size}
\author{Xue Chen\thanks{\tt{xuechen1989@ustc.edu.cn}, University of Science and Technology of China \& Hefei National Laboratory, Hefei 230088, China. Supported by Innovation Program for Quantum Science and Technology 2021ZD0302901, NSFC 62372424, and CCF-HuaweiLK2023006.}
\and Shengtang Huang\thanks{\tt{peanuttang@mail.ustc.edu.cn}, School of the Gifted Young, University of Science and Technology of China.}
\and Xin Li \thanks{\tt{lixints@cs.jhu.edu}, Johns Hopkins University. Supported by NSF CAREER Award CCF-1845349 and NSF Award CCF-2127575.}}
\date{}

\begin{document}

\maketitle

\begin{abstract}
    We study explicit constructions of min-wise hash families and their extension to $k$-min-wise hash families. Informally, a min-wise hash family guarantees that for any fixed subset $X \subseteq [N]$, every element in $X$ has an equal chance to have the smallest value among all elements in $X$; a $k$-min-wise hash family guarantees this for every subset of size $k$ in $X$. Min-wise hash is widely used in many areas of computer science such as sketching \cite{Cohen2016}, web page detection \cite{Henzinger06}, and $\ell_0$ sampling \cite{sampling_survey}. For applications like similarity estimation \cite{Cohen_similarity_est} and rarity estimation \cite{DM_similarity_rarity}, the space complexity of their streaming algorithms is roughly equal to the number of random bits used to construct such families.
    
    The classical works by Indyk \cite{minwise_Indyk} and P{\u{a}}tra{\c{s}}cu and Thorup \cite{PatrascuT16} have shown $\Theta(\log (1/\delta))$-wise independent families give min-wise hash of multiplicative (relative) error $\delta$, resulting in a construction with $\Theta(\log (1/\delta) \log N)$ random bits. While this is optimal for constant errors, it leaves a gap to the existential bound of $O(\log (N / \delta))$ bits whenever $\delta$ is sub-constant, which is needed in several applications. Based on a reduction from pseudorandom generators for combinatorial rectangles by Saks, Srinivasan, Zhou and Zuckerman \cite{SSZZ00}, Gopalan and Yehudayoff \cite{GY15} improved the number of bits to $O(\log N \log \log N)$ for polynomially small errors $\delta$. However, no construction with $O(\log N)$ bits (polynomial size family) and sub-constant error was known before.

    In this work, we continue and extend the study of constructing ($k$-)min-wise hash families from pseudorandomness for combinatorial rectangles and read-once branching programs. Our main result gives the first explicit min-wise hash families that use an optimal (up to constant) number of random bits and achieve a sub-constant (in fact, almost polynomially small) error, specifically, an explicit family of $k$-min-wise hash with $O(k \log N)$ bits and $2^{-O\left(\frac{\log N}{\log \log N}\right)}$ error. This improves all previous results for any $k=\log^{O(1)} N$ under $O(k \log N)$ bits. Our main techniques involve several new ideas to adapt the classical Nisan-Zuckerman pseudorandom generator to fool min-wise hashing with a \emph{multiplicative} error.
\end{abstract}

\thispagestyle{empty}

\newpage
\setcounter{page}{1}

\section{Introduction}\label{sec:intro}

Min-wise hash families play a crucial role in the design of graph algorithms and streaming algorithms. Notable applications include similarity estimation \cite{Cohen_similarity_est}, rarity estimation \cite{DM_similarity_rarity}, data mining \cite{HGKI02_data_mining,Henzinger06}, sketching \cite{Cohen2016}, and $\ell_0$ sampler \cite{sampling_survey,McGregor_survey}. In this work, we study explicit constructions of min-wise hash families with small size. In pseudorandomness, this is equivalent to studying the seed length (number of random bits used) to generate a hash function. In big data algorithms, this is the space complexity of applying min-wise hash families.

Following the standard notation of the min-wise hash \cite{BRODER2000630, minwise_Indyk, minwise_FPS11}, we consider multiplicative (relative) errors with respect to the fair probability in this work. Although this is different from the standard additive errors in pseudorandomness, multiplicative errors are crucial for many algorithmic applications of min-wise hash such as similarity estimation \cite{Cohen_similarity_est} and $\ell_0$ sampling \cite{McGregor_survey}.

\begin{definition}
    Let $a = b \pm \delta$ denote $a \in [b - \delta, b + \delta]$. Then a min-wise hash family $\mathcal{H}=\{h: [N] \rightarrow [M]\}$ of error $\delta$ satisfies that for any $X \subseteq [N]$ and any $y \in X$,
    \begin{align}
    \Pr_{h \sim \mathcal{H}}\left[ h(y) < \min_{x \in X \setminus y} h(x) \right] = \frac{1 \pm \delta}{|X|}.
    \label{eq:def_minwise}
    \end{align}
    
    Moreover, a $k$-min-wise hash family of error $\delta$ satisfies that for any $X \subseteq [N]$ and any $Y \in {X \choose \le k}$,
    \begin{align}
    \Pr_{h \sim \mathcal{H}}\left[ \max_{y \in Y} h(y) < \min_{x \in X \setminus Y} h(x) \right] = \frac{1 \pm \delta}{{|X| \choose |Y|}}.
    \label{eq:def_kminwise}
    \end{align}    

    We sometimes call $\log|\mathcal{H}|$ the \emph{seed length} of the hash family.
\end{definition}

In this work, we will focus on the case of $M=\Omega(N/\delta)$ such that the uniform distribution over all functions from $[N]$ to $[M]$ satisfies \eqref{eq:def_minwise} and \eqref{eq:def_kminwise} \cite{minwise_Indyk,minwise_FPS11}. Specifically, let $U$ be the uniform distribution over all functions $h : [N] \to [M]$. Then $M=\Omega(N/\delta)$ implies
\begin{align}
    \Pr_{h \sim U}\left[ h(y) < \min_{x \in X \setminus y} h(x) \right] = \frac{1 \pm \delta}{|X|} \text{ and } \Pr_{h \sim U}\left[ \max_{y \in Y} h(y) < \min_{x \in X \setminus Y} h(x) \right] = \frac{1 \pm \delta}{{|X| \choose |Y|}}. \label{eq:fair_prob}
\end{align}
This is equivalent to the requirement $|X| = O(\delta N)$ for $M=N$ in previous works \cite{SSZZ00,minwise_Indyk,minwise_FPS11}. Moreover, most applications of min-wise hash could choose the image size $|M|$ to guarantee \eqref{eq:fair_prob}.

Hence, constructing min-wise hash is equivalent to constructing pseudorandom generators (PRGs) with \emph{multiplicative} errors. In this work, we will consider both sides of \eqref{eq:def_kminwise} and \eqref{eq:fair_prob} as the targets of our PRGs.

Although ($k$-)min-wise hash has found a variety of applications in computer science, the primary approaches of explicit constructions are based on $t$-wise independent hash families \cite{minwise_Indyk, minwise_FPS11} and pseudorandom generators for combinatorial rectangles \cite{SSZZ00, GY15}. For min-wise hash, Indyk \cite{minwise_Indyk} showed that any $O(\log (1/\delta))$-wise independent family is a min-wise hash family of error $\delta$. Since $t$-wise independent families need $\Theta(t \log NM)$ random bits, this construction needs $O(\log (1 / \delta) \log NM) $ bits. In fact, P{\u{a}}tra{\c{s}}cu and Thorup \cite{PatrascuT16} showed a matching lower bound: $\Omega(\log (1 / \delta))$-wise independence is necessary to have error $\delta$. In contrast, non-explicitly it is known that one can use $O(\log (NM / \delta))$ random bits to construct min-wise hash families of error $\delta$. Therefore, although the construction in \cite{minwise_Indyk} is optimal for \emph{constant errors}, it fails to be optimal whenever the error is \emph{sub-constant}.

For $k$-min-wise hash, Feigenblat, Porat and Shiftan \cite{minwise_FPS11} showed that $t$-wise independent family is also a $k$-min-wise hash family of error $\delta$ when $t = O(\log (1 / \delta) + k \log \log (1 / \delta))$. In turn, this construction needs $O\big((\log (1 / \delta) + k \log \log (1 / \delta) ) \cdot \log NM \big)$ random bits, which still leaves a gap to the optimal result of $O\big( k \log NM + \log (1/\delta) \big)$ bits for sub-constant $\delta=o(1)$. 

At the same time, Saks, Srinivasan, Zhou and  Zuckerman \cite{SSZZ00} reduced the construction of min-wise hash to pseudorandom generators for combinatorial rectangles of \emph{polynomially small errors}. This reduction translates polynomially small (additive) errors to a multiplicative error like $\delta/|X|$ (described in \hyperref[sec:hash_poly_error]{Appendix A}). Based on this reduction, Gopalan and Yehudayoff \cite{GY15} provided a min-wise hash family of $O(\log NM \log \log NM)$ bits for any polynomially small error. Although this improves the result by Indyk \cite{minwise_Indyk} of $O(\log^2 NM)$ bits when $\delta$ is polynomially small, it does not provide a construction with $O(\log NM)$ bits even when $\delta$ is a constant.

In this work, we study explicit constructions of min-wise hash with small sizes and (almost) polynomially small errors.
Our constructions are well motivated, given that in practice, 
some applications of min-wise hash  require small errors in which the seed length becomes the bottleneck on the space complexity of streaming algorithms. For example, a primary application of min-wise hash is $\ell_0$ sampling in the streaming model. Many graph streaming algorithms need $(k>1)$-min-wise hash with a sub-constant error $\delta=o(1)$ (see Table 1 in \cite{KNPWWY17}) and have space complexity that is equal to the seed length of the min-wise hash family times the number of hashes used \cite{AGM12}. Also, similarity estimation \cite{Cohen_similarity_est,DM_similarity_rarity} applies $k$-min-wise hash of error $\delta$ directly to approximate the Jaccard similarity coefficient $\hat{S}(A,B):= \frac{|A \cap B|}{|A \cup B|}$ between two sets $A$ and $B$ as $(1\pm\delta) \cdot \frac{|A \cap B|}{|A \cup B|} \pm \frac{O(1)}{\delta \cdot k}$, so the space complexity is equal to  the seed length of the $k$-min-wise hash family here.

Furthermore, from a different aspect, given the connection between min-wise hash families and PRGs for combinatorial rectangles shown by Saks, Srinivasan, Zhou and Zuckerman~\cite{SSZZ00}, a natural direction is to apply the results on the long line of research on PRGs for combinatorial rectangles to construct better min-wise hash families.
Constructing pseudorandom generators for combinatorial rectangles have been extensively studied (e.g., \cite{ASWZ96_PRG_rect_comb,LLSZ95,Lu02,GMRTV, GY15} to name a few) because they are related to fundamental problems in theoretical computer science such as derandomizing logspace computation and approximately counting the number of satisfying assignments of a CNF formula. While early works \cite{ASWZ96_PRG_rect_comb,Lu02} in the 90s have already provided PRGs with seed length $O(\log NM)$ and slightly sub-constant errors (e.g., $2^{-\sqrt{\log NM}}$ \cite{ASWZ96_PRG_rect_comb} and $2^{-\log^{2 / 3} NM}$ \cite{Lu02}), no construction of min-wise hash family with $O(\log NM)$ bits and a sub-constant error was known before.

The main bottleneck is that min-wise hash requires a multiplicative error such as $\delta/|X|$. Even for a constant $\delta$, this becomes a polynomially small error like $1/N$ when $|X|=\Omega(N)$. Hence those PRGs for combinatorial rectangles with $O(\log NM)$ seed length do not give a min-wise hash directly. In fact, even after so many years of extensive study on PRGs for combinatorial rectangles, we still don't have explicit constructions of such PRGs with $O(\log NM)$ seed length and $1 / (NM)^{O(1)}$ additive error. Therefore, directly applying these PRGs is not enough to get a min-wise hash family with seed length $O(\log NM)$. To address this barrier, in this work we provide several new ideas to extend the Nisan-Zuckerman PRG \cite{NZ96} and construct min-wise hash families with $O(\log NM)$ seed length and almost polynomially small errors.

\subsection{Our Results}

Our main results are an explicit construction of min-wise hash families with seed length $O(\log N)$ and almost polynomially small errors and its generalization to $k$-min-wise hash. For ease of exposition, we assume $M=(N/\delta)^{O(1)}$.

\begin{theorem}\label{thm:inform_min_wise_log_bits}
    Given any $N$, there exists an explicit family of min-wise hash of $O(\log N)$ bits and (multiplicative) error $\delta=2^{-O\left(\frac{\log N}{\log \log N}\right)}$.
\end{theorem}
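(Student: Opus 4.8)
The plan is to adapt the Nisan–Zuckerman pseudorandom generator to the multiplicative-error regime required by min-wise hashing. The starting observation is that, by \eqref{eq:fair_prob}, once $M=\Omega(N/\delta)$ the uniform hash satisfies the min-wise property exactly, so our job reduces to building a small-seed distribution on functions $h:[N]\to[M]$ that $\delta$-fools, in the \emph{multiplicative} sense, the family of events $E_{X,y}=\{h(y)<\min_{x\in X\setminus y}h(x)\}$. The natural first step is to reformulate such an event as a combinatorial rectangle / read-once branching program test: think of the elements of $X$ as coordinates, the value $h(x)$ as the symbol read at coordinate $x$, and the event as ``coordinate $y$ produces the unique minimum.'' Conditioned on a threshold $\tau$ (the tentative value of $\min_{x\in X}h(x)$), the event factorizes over the coordinates $x\in X\setminus y$ into independent ``$h(x)>\tau$'' constraints, which is exactly a combinatorial rectangle; summing over $\tau$ recovers the min-wise probability. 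This is the reduction of Saks–Srinivasan–Zhou–Zuckerman, and the excerpt already grants it to us (Appendix A).

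**Next I would** run the NZ extractor-based recycling argument. Partition the $N$ hash values into $O(\log N / \log\log N)$ blocks, each of length roughly $\log\log N$ in a suitable sense; within a block, use full independence (cheap, since a block is short); across blocks, use a seeded extractor to ``recycle'' randomness — the key point of NZ being that after conditioning on the portion of $X$ that falls in the first few blocks, the min-constraint on the remaining blocks still has enough min-entropy (in the right conditional sense) that an extractor output looks uniform to it. The seed cost is $O(\log N)$ for the block randomness plus $O(\log N)$ per extractor invocation, and with $O(\log N/\log\log N)$ blocks and careful seed reuse one keeps the total at $O(\log N)$. The error per block composes additively in the standard NZ analysis, giving a total additive error of roughly $2^{-\Omega(\log N/\log\log N)}$ on each rectangle.

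**The hard part will be** converting the additive guarantee into a multiplicative one for the min-wise event, because the fair probability $1/|X|$ can be as small as $1/N$, far below the additive error any $O(\log N)$-seed generator can hope to achieve directly. The idea to get around this is the same trick that powers the SSZZ reduction: one does not need to fool each conditional rectangle to additive error $\delta/|X|$; rather, one fools the \emph{whole} summation over thresholds $\tau$, and the contributions of individual $\tau$'s, while individually tiny, have enough structure (they are monotone/telescoping in $\tau$) that a constant \emph{relative} error on each ``tail'' quantity $\Pr[\min_{x\in X}h(x)>\tau]$ suffices, and these tail quantities are \emph{not} polynomially small until $\tau$ is large. Concretely, I would split the range of $\tau$ at the point where the tail probability drops below the additive error; above that point the surviving mass is already $\le\delta/|X|$ and can be absorbed, and below it the tail is large enough that additive error translates to multiplicative error. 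Making the interface between the NZ block decomposition and this threshold decomposition work simultaneously — i.e., ensuring the extractor still fools the rectangle \emph{after} conditioning on both a prefix of blocks and a threshold regime, and that min-entropy is not destroyed — is the technical crux, and is presumably where the ``several new ideas'' promised in the abstract are needed. The $k$-min-wise extension then follows by the same argument applied to the event ``the $k$ coordinates of $Y$ occupy the $k$ smallest values,'' paying an extra factor $k$ in the seed for the $k$ thresholds, yielding $O(k\log N)$ bits.
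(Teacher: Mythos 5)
Your high-level framing is right—use the SSZZ threshold decomposition to express the min-wise event as a sum over $\theta$ of combinatorial-rectangle events, and then try to fool these with an NZ-style generator—but the proposal has a gap precisely at the step you flag as the crux, and the block/seed parametrization would not even yield the claimed additive error. With only $O(\log N/\log\log N)$ blocks and independent extractor seeds, a total budget of $O(\log N)$ bits forces each seed to have length $O(\log\log N)$, so the per-extractor NZ error is $2^{-O(\log\log N)}=1/\mathrm{polylog}(N)$, far above $2^{-\Omega(\log N/\log\log N)}$. The paper instead uses $\ell=2^{\log N/\log\log N}$ buckets (exponentially more than you propose) with seeds of length $\Theta(\log N/\log\log N)$ each, and—this is essential—it does \emph{not} sample those $\ell$ seeds independently: they are generated from a PRG for combinatorial rectangles ($\mathsf{PRG}_1$ from \Cref{thm:PRG_comb_rect}) so the seed cost stays $O(\log N)$. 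Your parametrization, taken literally, cannot reach the stated error even in the additive sense.

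More importantly, the ``tail-splitting'' argument you sketch does not convert additive to multiplicative error for the right quantity. Each term in the threshold sum is $\Pr[h(y)=\theta\wedge \min h(X\setminus y)>\theta]\le 1/M$; an additive error $\varepsilon$ on this rectangle is multiplicative only if $\varepsilon \ll 1/M$, which is polynomially small and out of reach for an $O(\log N)$-seed generator. Cutting the sum at a threshold where the tail $\Pr[\min h(X\setminus y)>\theta]$ becomes small does not help, because the offending factor is $\Pr[h(y)=\theta]\approx 1/M$ that multiplies \emph{every} term, not the tail decay. The paper's resolution is a genuinely new ingredient: an extractor satisfying $\Ext(U_n,s)=U_m$ exactly for every fixed seed (\Cref{lem:extractor_unif}), combined with the observation that the NZ PRG fools ROBPs in any input order, so one may place $y$'s bucket first. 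Since the source $w$ is uniform at that point, $\Ext(w,s_j)$ is \emph{exactly} uniform, hence $\Pr[\sigma_j(y)=\theta]=1/M$ with zero error, and all subsequent extractor errors are multiplicative relative to this fair $1/M$. This is the idea you correctly anticipate must exist but do not supply, and without it—together with the $\mathsf{PRG}_1$ seed compression—the error analysis does not go through.
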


We remark that the seed length of our min-wise hash is optimal up to constants and the error is almost polynomial up to a $\log \log N$ factor in the exponent. Hence \Cref{thm:inform_min_wise_log_bits} improves previous results \cite{SSZZ00,minwise_Indyk,minwise_FPS11, GY15} for seed length $O(\log N)$. In fact, this is the first construction of min-wise hash family with optimal seed length and sub-constant error.

Next we state its generalization to $k$-min-wise hash.

\begin{theorem}\label{thm:inform_kminwise} 
    Given any $k = \log^{O(1)} N$, there exists an explicit $k$-min-wise hash family of $O(k \log N)$ bits and  (multiplicative) error $\delta=2^{-O\left(\frac{\log N}{\log \log N}\right)}$.    
\end{theorem}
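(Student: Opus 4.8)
The plan is to build on the min-wise hash construction of Theorem~\ref{thm:inform_min_wise_log_bits} and lift it to the $k$-min-wise setting by a direct reduction, paying only a multiplicative $O(k)$ factor in seed length. The starting observation is that the event in \eqref{eq:def_kminwise}, namely $\max_{y \in Y} h(y) < \min_{x \in X \setminus Y} h(x)$, asks that the $|Y| \le k$ elements of $Y$ occupy \emph{all} of the $|Y|$ smallest hash values among $X$. Rather than insisting on a single hash, I would use $k$ independent (or suitably correlated) copies of a min-wise-type hash and peel off the minimum one bucket at a time: condition on the first coordinate identifying the global minimum of $X$ (which lands in $Y$ with probability $\approx |Y|/|X|$ by the min-wise guarantee), then recurse on $X$ minus that element and $Y$ minus that element, using the next hash. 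Multiplying the $|Y|$ conditional probabilities of the form $\frac{1\pm\delta}{|X|-i}$ telescopes to $\frac{1\pm O(k\delta)}{\binom{|X|}{|Y|}}$, which is the desired bound after rescaling $\delta$.

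The concrete steps I would carry out: (i) state the construction — draw $k$ hash functions, where the $i$-th is sampled from (a fresh, or index-shifted, instance of) the Theorem~\ref{thm:inform_min_wise_log_bits} family on $[N]\to[M]$; since each costs $O(\log N)$ bits and we need $k$ of them, the total seed length is $O(k\log N)$, matching the claim for $k=\log^{O(1)}N$ (and in fact for all $k$). (ii) Prove the peeling identity: for a fixed $X$ and fixed ordered tuple of elements of $Y$, express $\Pr[\text{$Y$ is the bottom-$|Y|$ block}]$ as a sum over the $|Y|!$ orderings, each of which is a product of $|Y|$ "is the current minimum" events on a shrinking ground set; bound each factor using \eqref{eq:def_minwise}/\eqref{eq:fair_prob} applied to the residual set $X \setminus \{\text{already-peeled elements}\}$, which still has size $\Omega(\delta N)$ as long as $k = o(|X|)$, so the fair-probability regime is preserved. (iii) Aggregate the multiplicative errors: $(1\pm\delta)^{|Y|} = 1 \pm O(|Y|\delta) = 1 \pm O(k\delta)$ since $k\delta = o(1)$ in our parameter range; then absorb the $O(k)$ loss by running the base family at error $\delta/k$, which changes the error only inside the exponent's constant (because $2^{-O(\log N/\log\log N)}/k$ is still $2^{-O(\log N/\log\log N)}$ when $k=\log^{O(1)}N$, and correspondingly enlarges $M$ by a $\poly(k)$ factor, harmless since $M=(N/\delta)^{O(1)}$).

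The main obstacle, and the step deserving the most care, is (ii): the naive peeling argument conditions on \emph{which} element is the running minimum, but the min-wise guarantee \eqref{eq:def_minwise} only controls the \emph{marginal} probability that a designated $y$ is the minimum of $X$, not the joint behavior of the hash on $X$ after conditioning on that event, and certainly not across the $k$ \emph{different} hash functions. The clean way around this is to keep the $k$ hashes genuinely independent so that after conditioning on hash~$1$'s outcome the hashes $2,\dots,k$ are untouched; then the recursion is over a fixed (conditioned) smaller set $X'$ and a fresh min-wise hash, so \eqref{eq:def_minwise} applies verbatim at each level. One still has to handle the fact that conditioning on "$h_1(y_1) = \min_{x\in X} h_1(x)$" is an event of probability $\approx 1/|X|$, not $1$, and the error $\delta$ is \emph{relative} to that small probability — but this is exactly the regime the base theorem already delivers (multiplicative, not additive, error), so no new pseudorandomness is needed, only a careful bookkeeping of how the $|Y|$ relative errors compose. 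I would also double-check the boundary cases $|Y|=0,1$ and $|Y|$ close to $|X|$ to make sure the $\Omega(\delta N)$ residual-size condition is never violated given the standing assumption $M=\Omega(N/\delta)$.
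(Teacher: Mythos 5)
Your peeling plan has a fundamental type mismatch with the object being constructed. A $k$-min-wise hash family, per Definition~\eqref{eq:def_kminwise}, requires a \emph{single} function $h$ for which $\Pr_{h\sim\mathcal H}\bigl[\max h(Y) < \min h(X\setminus Y)\bigr]$ is close to $1/\binom{|X|}{|Y|}$. Your construction outputs a tuple $(h_1,\dots,h_k)$ of independent min-wise hashes and analyzes an adaptive, $k$-round peeling event (use $h_1$ to identify the global argmin, remove it, use $h_2$ on the residual, etc.). That event is not $\max h(Y) < \min h(X\setminus Y)$ for any single hash; it is a sequential sampling procedure. You correctly identify the obstacle --- the min-wise guarantee \eqref{eq:def_minwise} is a marginal statement, so for a single $h$ the conditional law of $h$ on $X\setminus\{y_1\}$ given that $y_1$ is the argmin is not controlled --- but your ``clean way around this'' (make the $k$ hashes genuinely independent) does not fix the problem, it changes the problem. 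The paper itself flags this distinction in the Related Works section: running $k$ parallel min-wise hashes gives sampling with replacement (or, in your adaptive variant, a sampling \emph{algorithm} without replacement), which is a different and weaker object than a $k$-min-wise hash family. There is no apparent way to collapse your $k$-tuple into one function $h:[N]\to[M]$ whose bottom-$k$ block has the right distribution, so the seed length and error accounting that follow in your steps (ii)--(iii) are computing the wrong quantity.

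The paper's actual proof is not a reduction from the $k=1$ case; it is a direct construction following ``Approach 2'' of the overview. The hash $h$ is defined as a direct sum $h = h_0 + \varphi$, where $h_0$ is $(C_e+1)k$-wise independent and $\varphi$ is a balls-into-bins composition of a $C_g k$-wise independent allocation $g:[N]\to[\ell]$, a combinatorial-rectangle PRG $\PRG_1$ for the per-bucket seeds, a linear-seeded extractor with the $\Ext(U_n,s)=U_m$ property, and an inner PRG $\PRG_2$. The analysis enumerates $\theta = \max h(Y)$ and decomposes $\Pr_h[\max h(Y)=\theta \wedge \min h(X\setminus Y)>\theta]$ (Claim~\ref{clm:expand}); the crucial step is that the Nisan--Zuckerman argument, applied \emph{after} conditioning on the buckets $J$ that contain $Y$ and on $h_0$, still has a high min-entropy source $w$, so the remaining buckets are handled with a multiplicative error relative to $\Pr[\max\sigma(Y)=\theta]$. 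This is precisely the joint-conditional control that a marginal peeling argument for a single hash cannot deliver. If you want to prove Theorem~\ref{thm:inform_kminwise}, you have to confront that conditioning head-on inside the PRG analysis rather than side-stepping it with fresh independent randomness.
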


Again, this is the first construction of $k$-min-wise hash with optimal seed length and sub-constant error. One remark is that $k$-min-wise hash requires $\Omega(k \log N)$ bits even for a constant error. This is because the fair probability could be as small as $1/{N \choose k}$ in Definition \eqref{eq:def_kminwise}. Also, as a direct application, our constructions give the optimal space complexity for many applications including similarity estimation and rarity estimation \cite{Cohen_similarity_est,DM_similarity_rarity} whenever $k = \log^{O(1)} N$ and $\delta=2^{-O\left(\frac{\log N}{\log \log N}\right)}$.

\subsection{Technique Overview}\label{sec:overview}

First of all, the connection shown in \cite{SSZZ00} is not enough to directly use known PRGs for combinatorial rectangles to construct min-wise hash families with $O(\log N)$ bits. This is because one remarkable feature of min-wise hash is that the error is \emph{multiplicative} with respect to the fair probability $1/|X|$, which could be as small as $1/N$. On the other hand, standard pseudorandom generators only consider additive errors, and constructing $O(\log NM)$ seed length PRGs fooling combinatorial rectangles with error $1/N$ is still a big open problem. More broadly, the long line of research on classical PRGs for read-once branching programs (ROBPs) (\cite{Nisan92,INW94,NZ96,BRRY,BV,KNP11,FK18,MRT19} to name a few) does not give a multiplicative error with $O(\log N)$ bits of seed. 

To overcome this barrier, our main technical contribution is to extend the Nisan-Zuckerman PRG framework \cite{NZ96} to achieve a small multiplicative error. For convenience, we use $\max h(S):=\max_{x \in S} h(x)$ and $\min h(S):=\min_{x \in S} h(x)$ in the rest of this work. To illustrate our ideas, let us consider how to fool $\Pr_{h \sim U}[h(y)<\min h(X \setminus y)]$ for a sub-constant error with $O(\log N)$ bits.
    
It would be more convenient to enumerate $\theta:=h(y)$ and decompose 
\begin{equation}\label{eq:intro_decomp}
\Pr_{h \sim U}[ h(y)<\min h(X \setminus y) ] = \sum_{\theta \in [M]} \Pr_{h \sim U}[ h(y)=\theta \wedge \min h(X \setminus y) > \theta ], 
\end{equation} instead of analyzing $\Pr_{h \sim U}[h(y) < \min h(X \setminus y)]$ itself (because $\Pr_{h \sim U}[h(y)<h(x_1)]$ and $\Pr_{h \sim U}[h(y)<h(x_2)]$ are correlated). Since the first event  $\Pr_U[h(y)=\theta]=1/M$ in \eqref{eq:intro_decomp}, our first goal is to fool $\Pr_{h \sim U}[ h(y)=\theta \wedge \min h(X \setminus y) > \theta ]$ in \eqref{eq:intro_decomp} with a multiplicative error like $\delta \cdot \Pr_{h \sim U}[h(y)=\theta]=\delta/M$ for $O(\log N)$ bits (assuming $M=N^{O(1)}$). We remark that this is a polynomially small additive error. While $\mathbf{1}(h(y)=\theta \wedge \min h(X \setminus y) > \theta)$ ($\mathbf{1}$ denotes the indicator function) is a combinatorial rectangle and a simple ROBP of width $2$, no known PRGs for combinatorial rectangles or ROBPs can fool it with additive error $\delta/M$ given $O(\log NM)$ bits of seed. 
 
Since most PRGs for combinatorial rectangles and ROBPs are based on Nisan's PRG \cite{Nisan92} (or its extension to the INW PRG \cite{INW94}), a first attempt would be to modify these PRGs. However, the random event $(h(y)=\theta)$ already takes $\log_2 M$ bits. Since $y$ could be any element, it is unclear how to revise these PRGs to replenish so many random bits just for one step (of $h(y)$).
 
Another candidate is the Nisan-Zuckerman PRG \cite{NZ96}. Recall the basic construction of the Nisan-Zuckerman PRG: It prepares a random source $w$ of length $C \log N$ and $\ell=\log^c N$ seeds of length $\frac{\log N}{\ell}$ (for two constants $C>1$ and $c<1$); then it applies an extractor $\Ext:\{0,1\}^{C \log N} \times \{0,1\}^{(\log N)/\ell} \rightarrow \{0,1\}^{\frac{C}{3} \log N}$ (see \Cref{def:extractor}) to obtain $\ell$ outputs $\Ext(w,s_1), \Ext(w,s_2), \cdots ,\Ext(w,s_\ell)$. The analysis relies on the fact that a ROBP (or a small-space algorithm) cannot record too much information of $w$, therefore each $\Ext(w,s_i)$ is close to an independent and uniform random string. 
 
While this PRG only outputs $\frac{C}{3} \log N \cdot \ell=O(\log^{1 + c} N)$ bits, one could stretch it to a vector in $[M]^N$ via balls-into-bins: first hashing these $N$ variables into $\ell$ buckets and then using $\Ext(w,s_i)$ to generate a $C/3$-wise independent function for every bucket. 
 
However, due to the limited length of $s_i$, the error of each $\Ext(w,s_i)$ is $2^{-O\left(\frac{\log N}{\ell}\right)}=N^{-o(1)}$, which is too large compared to $\Pr[h(y)=\theta]=1/M$. 
    
To address this issue, our starting point is that the Nisan-Zuckerman PRG can fool ROBPs with any input order. More attractively, we can choose the input order in our favor. We provide two constructions that explore this advantage in two different ways. Both can fool $\Pr_{h \sim U}[h(y)=\theta \wedge \min h(X \setminus y) > \theta]$ with an error $\delta/M$ for $\delta=2^{-O\left(\frac{\log N}{\log \log N}\right)}$.

\paragraph{Approach 1.} We consider a special type of extractors, which provides a strong guarantee when the source is uniform. Observe that it never hurts to put $h(y)$ as the first input of the ROBP under the Nisan-Zuckerman PRG. Suppose $y$ is in bucket $j \in [\ell]$ such that the value $h(y)$ is generated by $\Ext(w,s_j)$. Our observation here is that as the first input, the random source $w$ is uniform at the beginning, hence one could expect stronger properties for $\Ext(w,s_j)$ than for other $\Ext(w,s_i)$ where $i \neq j$. In particular, we build an extractor such that $\Ext(U_n,s)$ is uniform for a uniform source and any fixed seed. The construction is based on the sequence of works in pseudorandomness that designed \emph{linear seeded} randomness extractors \cite{NZ96,Trevisan,GUV}. 

Back to the construction of min-wise hash, this guarantees $\Ext(w,s_j)$ is uniform (without any error) such that it has a fair probability of $\Pr[h(y)=\theta]$. Hence the Nisan-Zuckerman PRG has a multiplicative error with respect to $\Pr[h(y)=\theta]=1/M$ when it applies the extractor described above. 

However, plugging this into \eqref{eq:intro_decomp} only gives an error like $1/\log^{O(1)} N$ because $\ell \le \log N$ and it only provides constant-wise independence in each bucket. To reduce the error to be as small as possible, we apply the PRG for combinatorial rectangles to generate $\ell=2^{\frac{\log N}{\log \log N}}$ seeds and use $\Ext(w,s_j)$ to provide both $t$-wise independence and pseudorandomness against combinatorial rectangles. We describe this construction in \Cref{sec:smaller_error}.

\paragraph{Approach 2.} Next we consider how to build a hash family $\mathcal{H}$ fooling \eqref{eq:intro_decomp} like a conditional event
\begin{equation}\label{eq:condition_event}
\Pr_{h \sim \mathcal{H}}[h(y)=\theta] \cdot \Pr_{h \sim \mathcal{H}} [ \min h(X \setminus y) > \theta \mid h(y) = \theta ] = \Pr_{h \sim U}[h(y) = \theta] \cdot \left( \Pr_{h \sim U}[\min h(X \setminus y) > \theta] \pm \delta \right)
\end{equation}
with a multiplicative error $\delta$. On the one hand, $O(1)$-wise independent family would guarantee $\Pr_{h \sim \mathcal{H}}[h(y)=\theta]=\Pr_{h \sim U}[h(y)=\theta]$. On the other hand, $\mathbf{1}(\min h(X \setminus y) > \theta)$ is a combinatorial rectangle where several PRGs \cite{ASWZ96_PRG_rect_comb,Lu02,GMRTV, GY15} can fool it with a small additive error using $\widetilde{O}(\log N)$ bits of seed. This suggests us to consider the direct sum of a $t$-wise independent family and a PRG for combinatorial rectangles (see definition in \Cref{sec:preli}). However, it is unclear how to argue that this sum fools the conditional event $\Pr_{h \sim \mathcal{H}} [ \min h(X \setminus y) > \theta \mid h(y) = \theta ]$. 

Our key observation here is that the sum of a $t$-wise independent family and the Nisan-Zuckerman PRG can actually fool it! Roughly, this is because as before, one can put the output $\Ext(w,s_j)$ generating $h(y)$ at the beginning and fix it. Then we can use $t$-wise independence to argue about $\Pr[h(y)=\theta]$ for this bucket and we fix a specific $t$-wise independent function satisfying $h(y)=\theta$. These two steps only reduce the min-entropy of $w$ by a constant fraction, therefore the Nisan-Zuckerman framework still works. We describe how to build $k$-min-wise hash based on this approach in \Cref{sec:k_min_wise}.

\subsection{Discussion}

In this work, we study explicit constructions of small size min-wise hash functions. Although Saks, Srinivasan, Zhou and Zuckerman \cite{SSZZ00} have reduced the construction of min-wise hash to PRGs for combinatorial rectangles, previous works do not provide any explicit family of sub-constant (multiplicative) errors with $O(\log N)$ bits. 

Our main technical contribution is to construct $k$-min-wise hash of $O(k \log N)$ bits and almost-polynomial $2^{-O\left( \frac{\log N}{\log \log N} \right)}$-error via the Nisan-Zuckerman framework for any $k = \log^{O(1)} N$. Our results extend the Nisan-Zuckerman framework in several aspects. For example, our construction shows that one could guarantee one output of those extractions is uniform (without any error). We also show that the direct sum of the Nisan-Zuckerman PRG with other PRGs could fool conditional events and provide multiplicative errors with respect to small probability events. 

We list several open questions here.
\begin{enumerate}
    \item For $k$-min-wise hash with a (relative) large $k$ like $\log N$, can we have constructions with $O(k \log N)$ bits and polynomially small (multiplicative) error? As mentioned earlier, $k$-min-wise hash needs at least $\Omega(k \log N)$ bits, which is $\Omega(\log^2 N)$ when $k=\Omega(\log N)$. At the same time, there are many PRGs for combinatorial rectangles with polynomially small (additive) errors and $O(\log^2 N)$ seed length.

    \item It is interesting to investigate PRGs fooling conditional events like \eqref{eq:condition_event}. For example, can we show the direct sum of a $t$-wise independent function and the Nisan PRG has a small multiplicative error for \eqref{eq:condition_event}?

    \item The fact that the Nisan-Zuckerman PRG can fool any input order has been used to fool formulas and general branching programs \cite{IMZ}. Can we find more applications of this powerful framework?
\end{enumerate}

\subsection{Related Works}

Min-wise hash was introduced and investigated by Broder, Charikar, Frieze and Mitzenmacher \cite{BRODER2000630} where the first definition set the probability to be exact $1/|X|$. This is equivalent to requiring that each function in the family is a permutation when $M=N$. Such a family is also called min-wise permutation.
However, they showed a lower bound $\Omega(N)$ on the number of bits for the exact probability, and suggested to consider min-wise hash with multiplicative (relative) error for applications like similarity estimation and duplicate detection.

Later on, Indyk showed the first construction that $O(\log (1/\delta))$-wise independent families are min-wise hashing of error $\delta$. A matching lower bound on $t$-wise independent family was shown by P{\u{a}}tra{\c{s}}cu and Thorup \cite{PatrascuT16} later. 

For polynomially small errors, Saks, Srinivasan, Zhou and Zuckerman \cite{SSZZ00} provided a construction of $O(\log^{3 / 2} N)$ bits based on the PRGs for combinatorial rectangles; this was improved to $O(\log N \log \log N)$ by Gopalan and Yehudayoff \cite{GY15}. This work \cite{GY15} is still the state-of-the-art of both PRGs for combinatorial rectangles and min-wise hash given \emph{polynomially small errors}.

While one could run $k$ parallel min-wise hash to sample $k$ elements with replacement, $k$-min-wise looks for a sampling without replacement. This turns out to be more accurate in practice \cite{Cohen_similarity_est,DM_similarity_rarity}. Feigenblat, Porat and Shiftan \cite{minwise_FPS11} showed that $O(\log (1 / \delta) + k \log \log (1 / \delta))$-wise independent families are $k$-min-wise hashing of error $\delta$. Based on PRGs for combinatorial rectangles, Gopalan and Yehudayoff \cite{GY15} constructed $k$-min-wise hash of $O\big(k \log N \cdot \log (k \log N) \big)$ bits for polynomially small errors. 

Min-wise hash families and combinatorial rectangles are subclasses of read-once branching programs. So the classical PRG by Nisan \cite{Nisan92} of $O(\log^2 N)$ bits implies a min-wise hash family of the same seed length. A long line of research has studied the effects and limitations of Nisan's PRG (to name a few \cite{INW94,BRRY,BV,De,KNP11}). However, there has been little quantitative progress on the improvement of Nisan's PRG. One exception is the construction of PRGs for combinatorial rectangles, where subsequent works \cite{ASWZ96_PRG_rect_comb,Lu02} have reduced the seed length to $O(\log N)$ and achieved smaller errors.

The Nisan-Zuckerman PRG \cite{NZ96} provides another method to derandomize ROBPs of seed length $O(\log N)$. While its output length is just $\log^{O(1)} N$, it can fool ROBPs of any input order. Impagliazzo, Meka and Zuckerman \cite{IMZ} have used this property to fool general formulas and branching programs.

Another powerful paradigm to design PRGs for ROBPs is via milder restrictions. Several beautiful applications are the PRGs for combinatorial rectangles and read-once CNFs \cite{GMRTV, GKM15, GY15}, the PRGs for ROBPs with an arbitrary input order \cite{FK18}, and the PRGs for ROBPs of width 3 \cite{MRT19}.

\subsection{Paper Organization}

In \Cref{sec:preli}, we describe 
basic notations, definitions, and useful theorems from previous works. In \Cref{sec:smaller_error}, we show an explicit construction of min-wise hash based on the first approach described in \Cref{sec:overview}, which proves \Cref{thm:inform_min_wise_log_bits}. In \Cref{sec:k_min_wise}, we show another construction of $k$-min-wise hash based on the second approach described in \Cref{sec:overview}, which proves \Cref{thm:inform_kminwise}. In \Cref{sec:extractor}, we show the extractor whose output is uniform when the input source is uniform.
\section{Preliminaries}\label{sec:preli}

\paragraph{Notations.} For three real variables $a$, $b$ and $\delta$, $a = b \pm \delta$ means the error between $a$ and $b$ is $|a - b| \le \delta$. 

Let $[n]$ denote $\{1,2,\ldots,n\}$ and ${n \choose k}$ denote the binomial coefficient. For a subset $X$, we use ${X \choose k}$ to denote the family of all subsets with size $k$. For a vector or a string, we use $|\cdot|$ to denote its dimension or length.

In this work, for a function $h:[N] \rightarrow [M]$, we view it as a vector in $[M]^N$ and vice versa. For a subset $S \subseteq [N]$, let $h(S)$ denote the sub-vector in $S$. Then we use $\max h(S)$ to denote $\max_{x \in S} h(x)$, $\min h(S)$ to denote $\min_{x \in S} h(x)$ and $h(S) = \theta$ to denote the event $(h(x) = \theta,\ \forall x \in S)$.

For two functions $f,g:[N] \rightarrow [M]
$, we use $f + g$ to denote their \emph{direct sum} on every entry $x$ in $[M]$, i.e., $(f + g)(x) = (f(x) + g(x) - 1) \mod M + 1$. Similarly, for two vectors $f$ and $g \in [M]^N$, $f + g$ denotes the corresponding vector.

\paragraph{Combinatorial rectangles and read-once branching programs.} For an event $A$, let $\mathbf{1}(A)$ denote its indicator function. Given the alphabet $[M]$ and $N$ subsets $S_1,\ldots,S_N \subseteq [M]$, its combinatorial rectangle is the function $f: [M]^N \rightarrow \{0,1\}$ defined as the product of $N$ independent events $x_1 \in S_1, \ldots, x_N \in S_N$: $f(x):=\prod_{i=1}^N \mathbf{1}(x_i \in S_i)$. 

Equivalently, it is a function $f:[M]^N \rightarrow \{0,1\}$ defined as $f(v)=\prod_{i=1}^N f_i(v_i)$ by $N$ arbitrary functions $f_1,\ldots,f_N:[M] \rightarrow \{0,1\}$. 

Combinatorial rectangles are a special type of read-once branching programs. 

\begin{definition}[Read-once branching program]
    An width-$w$ length-$n$ read-once branching program on alphabet $\Gamma$ is a layered directed graph $M$ with $n + 1$ layers and $w$ vertices per layer with the following properties.
    \begin{itemize}
        \item The first layer has a single start node and the vertices in the last layer are labeled by $0$ or $1$.

        \item Each vertex $v$ in layer $i\ (0 \le i < n)$ has $|\Gamma|$ edges to layer $i + 1$, each labeled with an element in $\Gamma$.
    \end{itemize}
\end{definition}

A graph $M$ as above naturally defines a function $M : \Gamma^n \to \{0, 1\}$, where on input $(x_1, \ldots, x_n) \in \Gamma^n$, one traverses the edges of the graph according to the labels and outputs the label of the final vertex reached.

\paragraph{Pseudorandomness.} For a fixed domain like $[n]$ or $[M]^N$, we use $U$ to denote the uniform distribution on this domain. Moreover, we use $U_n$ to denote the uniform distribution in $\{0,1\}^n$.

For two distributions $D$ and $D'$ in the domain, we use $D \approx_{\varepsilon} D'$ to indicate that their statistical distance is at most $\varepsilon$ and call this fact $D$ is $\varepsilon$-close to $D'$.

\begin{definition}[Pseudorandom generator]
    Given a fixed domain $D$ and a family of functions from $D$ to $\{0, 1\}$, a pseudorandom generator (PRG) $G:\{0,1\}^{\ell} \rightarrow D$ $\varepsilon$-fools this family $\mathcal{F}$ if
    \[
    \forall f \in \mathcal{F},\ \E_{s \sim U_\ell}[f(G(s))]=\E_{x \sim U}[f(x)] \pm \varepsilon.
    \]
    We call $\ell$ the seed length of $G$ and $\varepsilon$ its error.
\end{definition}

One basic component to construct pseudorandom generators is $t$-wise independent family(function).

\begin{definition}
    We say a distribution $D$ is $t$-wise independent in $[M]^N$ if for any $k$ distinct positions $i_1,\ldots,i_k$ in $[M]$, the marginal distribution $D(x_{i_1},\ldots,x_{i_k})$ is uniform on $[M]^k$.
\end{definition}

Explicit constructions of $t$-wise independent family of seed length $O(t \log NM)$ are well known. We state two useful bounds on $t$-wise independent random variables  \cite{minwise_Indyk,minwise_FPS11}. In the rest of this work, we use $\E_{X: t\text{-wise}}[f(X)]$ denote the expectation of $f(X)$ when $X$ is $t$-wise independent.

\begin{lemma}\label{lem:t_wise_ind}
    Let $\sigma : [N] \to [M]$ a function sampled from $t$-wise-independence, and let $B \subseteq [N]$ be a subset with $b := |B|$. Then, for $\Pr_\sigma[\min \sigma(B) > \theta]$, the following two estimates hold:
    \begin{enumerate}
        \item $\Pr_\sigma[\min \sigma(B) > \theta] = (1 - \theta / M)^b \pm \left( b \cdot \frac{\theta}{M} \right)^t/t!$;
        
        \item $\Pr_\sigma[\min \sigma(B) > \theta] \le \left( \frac{C_t \cdot t}{b \cdot \theta / M} \right)^{t / 2}$, where $C_t$ is a universal constant.
    \end{enumerate}
\end{lemma}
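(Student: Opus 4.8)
The plan is to prove both estimates by comparing the $t$-wise-independent setting with the fully independent one, where $\Pr[\min\sigma(B)>\theta]=(1-\theta/M)^b$ exactly because each $\sigma(x)$ is uniform and independent. The key observation is that the event $\{\min\sigma(B)>\theta\}$ equals $\prod_{x\in B}\mathbf 1(\sigma(x)>\theta)$, a product of $b$ boolean indicators each with mean $1-\theta/M$. Expanding this product via inclusion–exclusion (or directly), $\Pr[\min\sigma(B)>\theta]=\E\big[\prod_{x\in B}(1-\mathbf 1(\sigma(x)\le\theta))\big]=\sum_{S\subseteq B}(-1)^{|S|}\E\big[\prod_{x\in S}\mathbf 1(\sigma(x)\le\theta)\big]$. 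For every subset $S$ with $|S|\le t$, the random variables $\{\sigma(x):x\in S\}$ are jointly uniform by $t$-wise independence, so $\E[\prod_{x\in S}\mathbf 1(\sigma(x)\le\theta)]=(\theta/M)^{|S|}$, exactly matching the fully independent case. Hence the two sums agree term-by-term up to $|S|=t$, and the difference is controlled by the tail $\sum_{j\ge t}\binom{b}{j}(\theta/M)^j$ from one side and the analogous fully-independent tail from the other.

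For part (1), I would truncate the inclusion–exclusion at level $t$: by the standard Bonferroni-type inequalities, truncating an alternating inclusion–exclusion sum after $t$ terms introduces an error bounded in absolute value by the first omitted term, which is $\binom{b}{t}(\theta/M)^t\le b^t(\theta/M)^t/t!=(b\theta/M)^t/t!$. Because the truncated sum has exactly the same value whether we use full independence or $t$-wise independence (all the surviving terms involve at most $t$ coordinates), and the full-independence value is $(1-\theta/M)^b$, we get $\Pr_\sigma[\min\sigma(B)>\theta]=(1-\theta/M)^b\pm(b\theta/M)^t/t!$. I should be slightly careful: the truncation bound by the first omitted term requires $t$ to have the right parity or requires the more robust statement that the truncated sum lies between consecutive truncations; I will invoke the Bonferroni inequalities in the form that the truncation error is at most the first omitted term, which holds for the alternating sum here as long as the summand $\binom{b}{j}(\theta/M)^j$ is eventually controlled — I may assume $b\theta/M$ is not too large, or simply cite the standard inequality.

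For part (2), I want an upper bound that decays in $b\theta/M$, useful precisely when $b\theta/M$ is large (so the estimate in (1) is useless). Here I would use the second-moment / $t$-th-moment method on the count $Z:=\sum_{x\in B}\mathbf 1(\sigma(x)\le\theta)$, whose mean is $\mu:=b\theta/M$. The event $\{\min\sigma(B)>\theta\}$ is exactly $\{Z=0\}$, so $\Pr[Z=0]\le\Pr[|Z-\mu|\ge\mu]$. Since $\sigma$ is $t$-wise independent, the centered variables $\mathbf 1(\sigma(x)\le\theta)-\theta/M$ behave like a sum of $t$-wise-independent bounded random variables, for which the classical moment bound (e.g.\ the Bellare–Rompel / Schmidt–Siegel–Srinivasan inequality) gives $\E[(Z-\mu)^t]\le C_t\cdot(t\mu)^{t/2}$ for even $t$ and a universal constant $C_t$. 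Then Markov on the $t$-th moment yields $\Pr[Z=0]\le\E[(Z-\mu)^t]/\mu^t\le C_t(t\mu)^{t/2}/\mu^t=(C_t t/\mu)^{t/2}=(C_t t/(b\theta/M))^{t/2}$, as claimed.

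The main obstacle I anticipate is getting the constants and the parity conventions exactly right. For part (1), the Bonferroni truncation-error bound is cleanest when the number of retained terms has a fixed parity, and if $t$ is not of the convenient parity one gets a bound by the first omitted term of the right parity, which is at most $\binom{b}{t}(\theta/M)^t$ anyway; I need to present this so the stated $\pm(b\theta/M)^t/t!$ is literally correct, which follows from $\binom{b}{t}\le b^t/t!$. For part (2), I must cite the correct $t$-wise-independent moment inequality — the delicate point is that a naive expansion of $\E[(Z-\mu)^t]$ produces $b^t$ terms, but only those where every coordinate appears with multiplicity $\ge 2$ survive (the rest vanish by $t$-wise independence and mean-zero), and counting and bounding these contributes the $(t\mu)^{t/2}$ factor and the constant $C_t$; this is exactly the content of the standard lemma, so I will invoke it rather than reprove it.
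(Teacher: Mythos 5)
The paper does not prove this lemma itself --- it cites it from \cite{minwise_Indyk,minwise_FPS11} --- so there is no in-paper proof to compare against, but your reconstruction is the standard argument and is essentially correct. Two points worth tightening. For part~(1), your stated bound $\pm(b\theta/M)^t/t!$ (rather than twice that) comes not from bounding one truncation error but from observing that truncating the alternating inclusion--exclusion sum for $\Pr[\bigcup_{x\in B}\{\sigma(x)\le\theta\}]$ at level $t-1$ produces an error of a \emph{fixed sign} determined only by the parity of $t-1$, and this sign and the same bound $S_t=\binom{b}{t}(\theta/M)^t$ apply to both the $t$-wise-independent and fully-independent cases; since both true values lie in the same interval of length $S_t$ around the shared truncated sum, the difference $\Pr_\sigma[\min\sigma(B)>\theta]-(1-\theta/M)^b$ is bounded by $S_t\le (b\theta/M)^t/t!$. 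You gesture at this when you mention that ``the truncated sum lies between consecutive truncations,'' which is exactly the point; you can also drop the caveat about $b\theta/M$ being small, since the Bonferroni inequalities hold for arbitrary events with no decay condition on the $S_j$. For part~(2), the calculation is correct once $t$ is taken even (replace $t$ by $2\lfloor t/2\rfloor$ and absorb the loss into $C_t$), and it is worth noting that the resulting bound $(C_t t/(b\theta/M))^{t/2}$ is only non-trivial when $b\theta/M\gtrsim C_t\, t$, which is precisely the regime where the error term in part~(1) becomes useless, so the two estimates are complementary as you intend.
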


\paragraph{Randomness Extractor.} Our construction will be based on randomness extractors. The min-entropy of a random source $X$ is defined as $H_\infty(X) = \log \max_{x} 1/\Pr[X=x]$.

\begin{definition}\label{def:extractor}
    $\Ext:\{0,1\}^n \times \{0,1\}^d \rightarrow \{0,1\}^m$ is a $(k,\varepsilon)$-randomness extractor if for any source $X$ of min-entropy $k$, $\Ext(X,U_d) \approx_{\varepsilon} U_m$.
\end{definition}

Our PRG needs a randomness extractors with an extra property: $\Ext(U_n, s) = U_m$, whose proof is deferred to \Cref{sec:extractor}.

\begin{lemma}\label{lem:extractor_unif}
    Given any $n$ and $k < n$, for any error $\varepsilon$, there exists a randomness extractor $\Ext : \{0, 1\}^n \times \{0, 1\}^d \to \{0, 1\}^m$ with $m = k/ 2$ and $d = O(\log (n / \varepsilon))$. Moreover, $\Ext$ satisfies an extra property: $\Ext(U_n, s) = U_m$ for any fixed seed $s$. 
\end{lemma}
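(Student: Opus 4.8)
\textbf{Proof proposal for \Cref{lem:extractor_unif}.} The plan is to start from a standard linear seeded extractor and then ``fix up'' the degenerate behavior on the uniform source by composing with a random shift, without spending extra seed. Concretely, I would first invoke a known construction of a strong linear seeded extractor --- e.g.\ the Trevisan-style or GUV extractor --- to obtain $\Ext_0 : \{0,1\}^n \times \{0,1\}^{d_0} \to \{0,1\}^m$ with $m = k/2$, $d_0 = O(\log(n/\varepsilon))$, and the property that for each fixed seed $s$, the map $x \mapsto \Ext_0(x, s)$ is \emph{$\mathbb{F}_2$-linear} in $x$. Linearity is the crucial handle: a nonzero $\mathbb{F}_2$-linear map from $\{0,1\}^n$ to $\{0,1\}^m$ is surjective and maps $U_n$ to the uniform distribution on its image; so the only obstruction to $\Ext_0(U_n,s) = U_m$ exactly is that for some ``bad'' seeds the linear map is not full rank (in particular, the all-zero map is linear). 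The extractor property already forces most seeds to be ``good,'' but we need \emph{every} seed to be good.

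The key step is to augment the seed with a single uniform offset in $\{0,1\}^m$ and output $\Ext(x, (s, r)) := \Ext_0(x, s) \oplus g(r)$, where $g : \{0,1\}^{d_1} \to \{0,1\}^m$ is itself a PRG / expander-walk style object that maps $U_{d_1}$ to a distribution that is uniform (or exactly balanced) on $\{0,1\}^m$ while only spending $d_1 = O(\log(n/\varepsilon))$ bits --- for instance, $r$ ranging over a set of $\poly(n/\varepsilon)$ points whose image under $g$ is exactly equidistributed over $\{0,1\}^m$, or more simply just take the offset to be a genuinely uniform string of length $m$ when $m = O(\log(n/\varepsilon))$, which is the regime we are in since $k \le n$ and typically $k = O(\log N)$. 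With this offset, for \emph{every} fixed $(s,r)$ the value $\Ext_0(x,s) \oplus g(r)$ is an affine-shift of a fixed distribution, and averaging $g(r)$ over its (exactly uniform) range makes $\Ext(U_n, (s,r)) = U_m$ hold on the nose for each fixed $s$ and each fixed $r$ --- wait, more carefully: I want it to hold for each \emph{fixed seed}, so I instead argue that for each fixed $s$ the distribution $\Ext_0(U_n, s)$ is uniform on an affine subspace $V_s$ of some dimension $m_s \le m$, and then I pad: set the output to be $\Ext_0(x,s)$ re-encoded through a fixed bijection $\{0,1\}^{m_s} \to$ (first $m_s$ coordinates) together with the last $m - m_s$ coordinates taken \emph{directly from fresh seed bits}. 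Since $\Ext_0$ is linear, $m_s$ and a basis for $V_s$ are computable from $s$, so this is explicit; and because the deficient seeds are rare, the extra seed bits borrowed for padding are only needed $\varepsilon$-rarely, so on a random seed we lose at most $\varepsilon$ in the extractor guarantee while gaining $\Ext(U_n,s) = U_m$ for \emph{every} $s$.

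The remaining steps are routine: (i) verify that the padded/shifted construction is still a $(k,\varepsilon')$-extractor with $\varepsilon' = \varepsilon + (\text{fraction of bad seeds}) = O(\varepsilon)$, by a union-bound/triangle-inequality argument comparing the padded output to $U_m$ conditioned on the seed being good, using that the source has min-entropy $k$ and the padding only adds independent uniform bits; (ii) check the parameters $m = k/2$ and $d = d_0 + O(m) + O(\log(n/\varepsilon)) = O(\log(n/\varepsilon))$, noting $m = k/2 \le n/2 = O(\log(n/\varepsilon))$ is automatic in the parameter regime we use this lemma, or otherwise absorbing the $O(m)$ term when $k = O(\log(n/\varepsilon))$; and (iii) confirm explicitness, i.e.\ the bijections $\{0,1\}^{m_s} \to V_s$ and rank computations are polynomial-time given $s$. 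I expect the \textbf{main obstacle} to be keeping the seed length at $O(\log(n/\varepsilon))$ while handling the case where $m$ might be larger than $O(\log(n/\varepsilon))$: if $k$ is large, the $m - m_s$ padding bits cannot simply be fresh uniform bits without blowing the seed. The fix is to observe that the ``bad'' seeds can be made to have $m_s = m$ (full rank) by pre-composing $\Ext_0$ with an explicit affine correction indexed by $O(\log(1/\varepsilon))$ additional seed bits --- a ``seeded rank-repair'' step exploiting linearity --- so that on \emph{no} seed is the map rank-deficient, eliminating padding entirely; verifying that such a small family of affine corrections makes every seed full-rank (equivalently, that the bad-seed set has a small ``covering'' by rank-repairing shifts) is the delicate combinatorial point, and it is where I would spend the most care.
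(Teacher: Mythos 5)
Your core observation is right — for a linear seeded extractor, $\Ext_0(U_n,s)=U_m$ fails only on seeds $s$ whose matrix $M_s$ is rank-deficient, and the goal is to eliminate those bad seeds — but you never land on a working fix, and the ones you float all have problems. The random offset $\Ext_0(x,s)\oplus g(r)$ is a non-starter: for each \emph{fixed} $(s,r)$ the offset is just a constant shift, which cannot change a non-uniform $\Ext_0(U_n,s)$ into a uniform one. The padding fallback requires $m-m_s$ fresh uniform bits, and in the regime this lemma is actually used (e.g.\ $m=0.3C_e\log N$ while the seed budget is $C_e\cdot\frac{\log N}{\log\log N}$) this blows the seed by a $\log\log N$ factor, which breaks the parameters. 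The ``seeded rank-repair'' covering idea is under-specified: you would need every $(s,r)$ pair to yield a full-rank map, not just that each bad $s$ has \emph{some} repairing $r$, and you acknowledge you have not verified this. In short, you are trying to spend seed bits on a problem that needs none.

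The fix the paper uses is deterministic and per-seed, and that is the idea you are missing. For each bad seed $s$, just replace the linearly dependent rows of $M_s$ by rows that complete a basis, yielding a full-rank $M'_s$; this costs zero seed bits and is explicit since rank and a basis are computable from $s$. The reason this is safe is that adding/replacing rows can only \emph{increase} the min-entropy of the output, which is exactly what a lossless condenser (and the intermediate steps) need — so the repaired object is at least as good a condenser as the original. The second piece you skip is that $m=k/2$ with seed $O(\log(n/\varepsilon))$ forces you into the recursive GUV condenser-then-extractor construction, and the surjectivity property must propagate through the recursion; the paper handles this by projecting the source after each linear operation onto the dual space of that operation, so that at every stage the ``leftover'' source is exactly (not just approximately) uniform when the initial source is, and the induction that $\Ext(U_n,s)=U_m$ goes through. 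The leftover-hash step is handled separately (Claim~\ref{clm:new_leftover}) by using multiplication by a \emph{nonzero} field element, which is automatically full rank, so no repair is needed there. You should rework your argument around the per-seed row replacement and the dual-projection invariant rather than around random offsets or seed-indexed corrections.
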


\paragraph{Pseudorandomness for combinatorial rectangles.} PRGs for combinatorial rectangles have been extensively studied in the past few decades \cite{ASWZ96_PRG_rect_comb,LLSZ95,Lu02,GMRTV,GY15}. Our constructicons of min-wise hash are based on these PRGs and their techniques. We state the state-of-the-art here by Gopalan and Yehudayoff, i.e., Theorem 1.9 in \cite{GY15}.

\begin{theorem}\label{thm:PRG_comb_rect}
    For any error $\varepsilon$, dimension $N$, and alphabet $[M]$, there exists a PRG of seed length $O\left(\log \left( \frac{M \log N}{\varepsilon} \right) \cdot \log \log (M/\varepsilon)\right)$ that fools combinatorial rectangles in $[M]^N$ within additive error $\varepsilon$.
\end{theorem}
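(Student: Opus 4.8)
The statement is Theorem~1.9 of \cite{GY15}; let me sketch the route I would take, which is essentially the ``hash, reduce, and recurse'' paradigm running through \cite{ASWZ96_PRG_rect_comb,Lu02,GMRTV,GY15}. Write the rectangle as $R(x)=\prod_{i=1}^{N} f_i(x_i)$ with $f_i:[M]\to\{0,1\}$, set $p_i:=|f_i^{-1}(1)|/M$ and $p:=\prod_i p_i$ (the value of $R$ under the uniform distribution). The plan is to peel off the dependence on $N$ with one cheap hashing step, reduce to an approximately balanced rectangle over a small alphabet, and then fool that object with a recursion of depth only $O(\log\log(M/\eps))$.

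First I would preprocess. One may assume $p\ge\eps/4$: once a prefix of the coordinate product drops below $\eps/4$, the remaining coordinates can be frozen to constants, because a Chernoff-type tail bound for $O(\log(1/\eps))$-wise independent inputs (in the spirit of \Cref{lem:t_wise_ind}(2)) shows that under any such pseudorandom assignment the rectangle still accepts with probability at most $\eps/4$. A standard alphabet reduction then replaces $[M]$ by $[M']$ with $M'=\poly(\log N/\eps)$, distorting each $p_i$ by $\pm\eps/(10N)$ and hence $p$ by $\pm\eps/10$; this is (morally) where the additive $\log M$ in the per-level cost enters. After setting aside the $O(\log(1/\eps))$ most biased coordinates to be handled directly by limited independence, the surviving coordinates have $p_i\ge 1/2$ and $\sum_i(1-p_i)\le\ln(4/\eps)=O(\log(1/\eps))$, i.e.\ the rectangle is approximately balanced.

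Then comes the main construction. I would sample an \emph{almost} $t$-wise independent hash $\varphi:[N]\to[\ell]$ with $t=O(\log(M/\eps))$; by almost-independence the seed for $\varphi$ depends on $N$ only through an additive $O(\log\log N)$, which is why $N$ enters the final bound solely inside one logarithm. The hash factors $R=\prod_{j\le\ell}R_j$ with $R_j$ a rectangle on the $j$-th block, and the balancedness above guarantees that with high probability over $\varphi$ the block values $\prod_{\varphi(i)=j}p_i$ are smoothed (each within a small additive spread of $1-O(\log(1/\eps)/\ell)$), so that across blocks $\prod_j R_j$ behaves like a product of roughly balanced bounded variables. For such a product, a sandwiching / concentration argument shows it is fooled by (a) within each block, a pseudorandom assignment supplied \emph{recursively}, and (b) across blocks, one INW-style composition step \cite{INW94,NZ96}: feed a single inner seed into all $\ell$ blocks together with per-block offsets obtained from a randomness extractor applied to a common source, at cost $O(\log(M\log N/\eps))$. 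Unwinding the recursion, which reaches $O(1)$-sized blocks after $L=O(\log\log(M/\eps))$ levels, gives total seed $O(L)\cdot O(\log(M\log N/\eps))$, matching the claimed bound; a union bound over the $L$ levels with a geometrically shrinking error budget per level (obtained by an error-reduction step tailored to rectangles) keeps the total error at $O(\eps)$, and rescaling $\eps$ by a constant finishes. An alternative worth trying is to first reduce $R$ to a \emph{combinatorial shape} and invoke a GMRTV-type generator \cite{GMRTV}, whose seed length already has exactly this $\log(\cdot)\cdot\log\log(\cdot)$ shape.

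The hard part is (a)--(b): controlling the cross-block correlations after only a \emph{cheap} (almost $t$-wise) hash, so that the across-block product can be fooled with $O(\log(M\log N/\eps))$ fresh bits rather than with bits proportional to the enormous block alphabet $[M']^{N/\ell}$. This is exactly where the product structure of a combinatorial rectangle must be used -- via concentration for products of roughly balanced variables together with a careful accounting of how the hash smooths the block values -- and it is also what forces the recursion depth to be $\log\log$ of $M/\eps$ rather than a constant. For the detailed bookkeeping I would defer to \cite{GY15}.
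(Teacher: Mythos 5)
This theorem is not proved in the paper at all: it is a black-box citation of Theorem~1.9 of Gopalan and Yehudayoff~\cite{GY15}, and you correctly identify it as such, so there is no ``paper's own proof'' to match against. That said, your primary sketch is closer in spirit to the Lu/ASWZ hash-into-$\ell$-buckets plus INW recursion paradigm~\cite{Lu02,ASWZ96_PRG_rect_comb}, whereas the actual proof in~\cite{GY15} runs through the GMRTV ``milder (iterated pseudorandom) restrictions'' framework~\cite{GMRTV}: rather than a single hash into blocks and cross-block composition, one repeatedly applies a small-bias restriction that fixes a constant fraction of the coordinates, and the per-level error control comes from new tail bounds for elementary symmetric polynomials of bounded-independence random variables --- this is the technical heart of~\cite{GY15}, and it is what removes a stray $\log\log$ factor from GMRTV to land exactly on the $O(\log(M\log N/\varepsilon)\cdot\log\log(M/\varepsilon))$ shape. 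You actually gesture at this alternative in your last sentence, and it is the correct primary route; the hash-and-recurse route you foreground gives a weaker seed length as stated. Since you explicitly defer to~\cite{GY15} for the bookkeeping and the present paper uses the theorem only as an imported ingredient, this discrepancy does not affect anything downstream, but if you wanted to prove the theorem yourself you would need to follow the restriction-plus-symmetric-polynomial-tail-bound route rather than the bucketing route.
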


Based on the reduction by Saks, Srinivasan, Zhou and Zuckerman \cite{SSZZ00}, this provides a construction of min-wise hash of $O(\log NM \log \log NM)$ bits and polynomially small errors. For completeness, we provide this reduction in \hyperref[sec:hash_poly_error]{Appendix A}.

A direct corollary of \Cref{thm:PRG_comb_rect} provides a PRG of seed length $O(\log NM)$ and almost polynomially small error $2^{-O\left(\frac{\log NM}{\log \log NM}\right)}$. This is based on reductions between PRGs by Lu \cite{Lu02}. Specifically, Lu constructed a PRG for combinatorial rectangles via a sequence of reductions. In particular, Lemma 3 in \cite{Lu02} uses $O(\log (N M / \varepsilon))$ random bits to reduce the original problem to a problem in $[1/\varepsilon^{O(1)}]^{1/\varepsilon^{O(1)}}$ within error $\varepsilon$. Since the PRG in \Cref{thm:PRG_comb_rect} fools combinatorial rectangles in $[1/\varepsilon^{O(1)}]^{1/\varepsilon^{O(1)}}$ within error $\varepsilon$ and $O(\log (1/\varepsilon) \log \log (1/\varepsilon))$ bits. After setting $\varepsilon=2^{-C\cdot \frac{\log NM}{\log \log NM}}$, this gives a PRG of seed length $O(\log NM)$.

\begin{corollary}\label{cor:almost_poly_small_error}
    For any constant $C$ and error $\varepsilon=2^{-C \cdot \frac{\log NM}{\log \log NM}}$, there exists an explicit PRG of seed length $O_C(\log NM)$ that fools combinatorial rectangles within additive error $\varepsilon$.
\end{corollary}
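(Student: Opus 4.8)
The plan is to obtain the corollary purely by composing two known black boxes: Lu's dimension/alphabet reduction for combinatorial rectangles \cite{Lu02} and the Gopalan--Yehudayoff PRG of \Cref{thm:PRG_comb_rect}. Concretely, I would first invoke Lemma~3 of \cite{Lu02}, which spends $O(\log(NM/\varepsilon))$ truly random bits to transform the task of $\varepsilon$-fooling combinatorial rectangles in $[M]^N$ into the task of $\varepsilon$-fooling combinatorial rectangles in $[M']^{N'}$ with $M', N' \le (1/\varepsilon)^{O(1)}$ — i.e., it produces a reduced instance whose dimension and alphabet are both polynomially bounded in $1/\varepsilon$, at the cost of an additive $O(\varepsilon)$ loss in the fooling error. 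The key point to check here is only bookkeeping: the seeds of the two stages are concatenated, the errors add, and the ``$O(1)$'' in the exponent of $(1/\varepsilon)^{O(1)}$ is an absolute constant independent of $N, M$.

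Next I would feed the reduced instance to the PRG of \Cref{thm:PRG_comb_rect}, instantiated with dimension $N' = (1/\varepsilon)^{O(1)}$, alphabet $M' = (1/\varepsilon)^{O(1)}$, and error $\varepsilon$. That theorem gives seed length $O\!\left(\log\!\left(\tfrac{M'\log N'}{\varepsilon}\right)\cdot \log\log(M'/\varepsilon)\right) = O\!\left(\log(1/\varepsilon)\cdot\log\log(1/\varepsilon)\right)$, since $\log(M'\log N'/\varepsilon) = O(\log(1/\varepsilon))$ and $\log\log(M'/\varepsilon) = O(\log\log(1/\varepsilon))$. Adding the two stages, the overall seed length is
\[
O\!\big(\log(NM/\varepsilon)\big) + O\!\big(\log(1/\varepsilon)\cdot\log\log(1/\varepsilon)\big),
\]
the overall generator is explicit, and the overall error is $O(\varepsilon)$, which I would absorb by a constant rescaling of $C$.

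Finally I would substitute $\varepsilon = 2^{-C\cdot \frac{\log NM}{\log\log NM}}$. Then $\log(1/\varepsilon) = C\cdot\frac{\log NM}{\log\log NM}$, so $\log\log(1/\varepsilon) = \log C + \log\log NM - \log\log\log NM = O(\log\log NM)$, whence $\log(1/\varepsilon)\cdot\log\log(1/\varepsilon) = O_C(\log NM)$; and $\log(NM/\varepsilon) = \log NM + \log(1/\varepsilon) = O_C(\log NM)$. Both terms are $O_C(\log NM)$, which gives the claimed seed length.

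The main obstacle is not conceptual but a matter of faithfully importing the parameters of \cite{Lu02}: one must confirm that Lu's first reduction indeed maps to a domain of size $\poly(1/\varepsilon)$ in \emph{both} coordinates (not, say, $\poly(NM/\varepsilon)$ in one of them, which would break the bound), that the reduction is explicit with randomness complexity exactly $O(\log(NM/\varepsilon))$, and that the error of the composed generator is at most a constant multiple of $\varepsilon$. Once these facts are extracted from \cite{Lu02}, the rest is the elementary arithmetic above; the choice $\varepsilon = 2^{-\Theta(\log NM/\log\log NM)}$ is precisely the sweet spot where the $\log\log$ overhead of \Cref{thm:PRG_comb_rect} is soaked up once the instance has been shrunk to size $\poly(1/\varepsilon)$.
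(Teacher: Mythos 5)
Your proposal is correct and follows exactly the same route as the paper: apply Lemma~3 of \cite{Lu02} to reduce to combinatorial rectangles over $[1/\varepsilon^{O(1)}]^{1/\varepsilon^{O(1)}}$ at a cost of $O(\log(NM/\varepsilon))$ random bits, then invoke \Cref{thm:PRG_comb_rect} on the shrunk instance, and check that both seed contributions are $O_C(\log NM)$ once $\varepsilon = 2^{-C\log NM/\log\log NM}$. Your bookkeeping of the parameters and the final arithmetic are also what the paper implicitly relies on.
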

\section{Min-wise Hash of Polynomial Size}\label{sec:smaller_error}

The goal of this section is to provide a construction of explicit min-wise hash family with polynomial size and \emph{nearly} polynomial small error. 

Recall the definition that $\mathcal{H} = \{h_i : [N] \to [M]\}$ is a min-wise hash family of error $\delta$, if for any $X \subseteq [N]$ and any $y \in X$, $\Pr_{h \sim \mathcal{H}}[h(y) < \min h(X \setminus y)] = \frac{1 \pm \delta}{|X|}$. Since the multiplicative error $\delta$ would be $\Omega(1/N)$, we assume the size of the alphabet $M = (N / \delta)^{O(1)} = N^{O(1)}$ for convenience.

\begin{theorem}\label{thm:min_wise_log_seed}
    Given any $N$ and any constant $C$, there exists an explicit min-wise hash family whose seed length is $O_{C}(\log N)$ and multiplicative error is $\delta = 2^{-C \cdot \frac{\log N}{\log \log N}}$.
\end{theorem}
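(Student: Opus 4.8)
The plan is to follow Approach~1 from the technique overview: build the hash function from the Nisan--Zuckerman framework, but with two crucial ingredients tailored to the multiplicative-error requirement. First, I would set up the decomposition \eqref{eq:intro_decomp}, so that it suffices to fool, for every fixed $y \in X$ and every threshold $\theta \in [M]$, the event $h(y) = \theta \wedge \min h(X\setminus y) > \theta$ with \emph{additive} error $\delta/M$ (since summing over $\theta$ then yields multiplicative error $\delta/|X|$, using that $|X| \le M$ and that the uniform probability is $(1\pm\delta/2)/|X|$ once $M = \Omega(N/\delta)$). The generator prepares a weak source $w$ of length $O(\log N)$, hashes $[N]$ into $\ell = 2^{\Theta(\log N/\log\log N)}$ buckets using a PRG for combinatorial rectangles to generate the $\ell$ bucket-seeds $s_1,\dots,s_\ell$ (this is where \Cref{cor:almost_poly_small_error} / \Cref{thm:PRG_comb_rect} enters, as well as the balls-into-bins step so that every bucket is small with high probability), and inside bucket $i$ uses $\Ext(w,s_i)$ to seed a $t$-wise independent function on that bucket, for $t = \Theta\!\bigl(\frac{\log N}{\log\log N}\bigr)$. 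The extractor is the one from \Cref{lem:extractor_unif}, with source length $n = O(\log N)$, min-entropy parameter $k$ a constant fraction of $n$, and error parameter set to roughly $\delta^{O(1)}$ so that $d = O(\log(n/\delta)) = O(\log N)$.

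Next I would analyze the generator by the standard Nisan--Zuckerman argument, but ordering the read-once branching program so that the bucket $j$ containing $y$ is read \emph{first}. At that first step the source $w$ is still uniform, so by the extra property $\Ext(U_n, s_j) = U_m$ of \Cref{lem:extractor_unif}, the value $h(y)$ is \emph{exactly} uniform; in particular $\Pr[h(y) = \theta] = 1/M$ with zero error, which is exactly the multiplicative anchor we need. Conditioning on $h(y) = \theta$ (an event of probability $2^{-m}$, essentially) costs only $m = O(\log N)$ bits of min-entropy from $w$, so $w$ retains min-entropy $\Omega(\log N)$; for this it is important that $m$ is a constant fraction of $n$ and that $n$ is chosen large enough that the residual entropy still exceeds the extractor's threshold $k$. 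Then for the remaining buckets the usual NZ analysis applies: conditioned on the already-read prefix, each $\Ext(w, s_i)$ is $\varepsilon'$-close to uniform for $\varepsilon' = \delta^{O(1)}$, so up to total error $\ell \cdot \varepsilon'$ the hash values in the remaining buckets behave like independent $t$-wise-independent functions, one per bucket.

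With that in hand, $\Pr[\min h(X\setminus y) > \theta \mid h(y)=\theta]$ factors (approximately) over buckets as a combinatorial rectangle whose $i$-th coordinate is $\Pr_{t\text{-wise}}[\min \sigma(B_i) > \theta]$ for $B_i$ the part of $X\setminus y$ in bucket $i$. Here \Cref{lem:t_wise_ind} controls each factor: its first estimate gives $(1-\theta/M)^{|B_i|}$ up to error $(|B_i|\theta/M)^t/t!$, and its second estimate handles the buckets where $|B_i|\theta/M$ is large (those contribute a negligibly small probability that matches the truncated product). Multiplying the per-bucket estimates and comparing with the uniform value $(1-\theta/M)^{|X\setminus y|}$, the accumulated error is at most $\ell$ times the per-bucket error plus the combinatorial-rectangle PRG error plus the balls-into-bins failure probability; with the bucket sizes concentrated around $|X|/\ell \le M/\ell$ and $\ell = 2^{\Theta(\log N/\log\log N)}$, $t = \Theta(\log N/\log\log N)$, each per-bucket error is bounded by $t^{-\Omega(t)} = 2^{-\Omega(t\log\log N)} = N^{-\Omega(1)}$, so even after multiplying by $\ell$ and by $M$ (from the sum over $\theta$) the total stays $2^{-\Omega(\log N/\log\log N)}$. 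Choosing the constants in $\ell$, $t$, the extractor error, and the rectangle-PRG error appropriately gives the claimed $\delta = 2^{-C\log N/\log\log N}$.

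The main obstacle I anticipate is the bucket-by-bucket error bookkeeping: one must verify that the ``balls-into-bins'' step keeps \emph{every} bucket of size at most (say) $O(|X|/\ell + t)$ except with probability $\delta^{O(1)}$ using only a $t$-wise independent (or rectangle-PRG-derived) hash, and then that the product of the $\ell$ per-bucket $t$-wise estimates from \Cref{lem:t_wise_ind} genuinely approximates the true product $(1-\theta/M)^{|X|}$ to within $\delta \cdot \Pr_U[\cdots]$ \emph{uniformly in $\theta$} — the regime $\theta$ close to $M$ (where $(1-\theta/M)^{|X|}$ itself is exponentially small) is the delicate one, and it is exactly where the second, ``upper bound'', estimate in \Cref{lem:t_wise_ind} must be invoked rather than the first. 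A secondary subtlety is checking that conditioning on $h(y)=\theta$ inside bucket $j$ (fixing one particular $t$-wise-independent function consistent with it) does not destroy the $t$-wise independence needed for the \emph{other} elements of $X$ lying in bucket $j$ — this needs $t$ chosen a bit larger than the nominal requirement so that $(t-1)$-wise independence survives the conditioning.
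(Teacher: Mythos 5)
Your architecture is close to the paper's Approach~1 — balls-into-bins with $\ell = 2^{\Theta(\log N/\log\log N)}$ buckets, a combinatorial-rectangle PRG to generate the $\ell$ bucket-seeds, and the uniform-output extractor from \Cref{lem:extractor_unif} so that the bucket containing $y$ is read first and $h(y)$ is \emph{exactly} uniform. That much is right and matches the paper. But there is a genuine gap in what the extractor output is used for inside a bucket.

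You want $\Ext(w,s_i)$ to seed a $t$-wise independent function on bucket $i$ with $t = \Theta\!\left(\frac{\log N}{\log\log N}\right)$, and your error bookkeeping via \Cref{lem:t_wise_ind} relies on this super-constant $t$ (e.g.\ $\left(|B_i|\theta/M\right)^t/t!$ being polynomially small). A $t$-wise independent function with range $[M]=N^{O(1)}$ and a bucket that can hold up to $\Theta(\ell^{0.1})$ or $\Theta(|X|/\ell)$ elements requires $\Theta(t\log(NM)) = \Theta\!\left(\frac{\log^2 N}{\log\log N}\right)$ random bits, but the extractor output in your setup is $m = k/2 = O(\log N)$ bits, which cannot index such a family. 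You cannot compensate by lengthening $w$ or $m$ without blowing up the total seed past $O(\log N)$. So the component that drives all of your quantitative error bounds does not actually fit into the budget. The paper avoids this precisely because the family $\mathcal{G}$ indexed by $z_i$ is \emph{not} a high-degree independent family: it is the direct sum of a $(0.1 C_s + 1)$-wise independent function (a \emph{constant}-wise family, $O(\log NM)$ bits) and $\PRG_2$ from \Cref{cor:almost_poly_small_error}, a PRG for combinatorial rectangles with $O(\log NM)$ seed and additive error $2^{-C_s t}$. The constant-wise part is all that is used for the $y$-bucket (via Indyk's theorem and the $t$-wise estimates for a constant $t$), while $\PRG_2$ supplies the per-bucket additive error $2^{-C_s t}$ for the events $\min\sigma(B_i)>\theta$ in the other buckets — a guarantee a constant-wise family alone cannot give when $|B_i|\,\theta/M \approx 1$. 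Your proposal has no analogue of $\PRG_2$, and once $t$ is forced down to a constant to fit the seed, \Cref{lem:t_wise_ind} no longer yields per-bucket errors of size $2^{-\Omega(\log N/\log\log N)}$.

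A secondary issue is your framing at the start: aiming for additive error $\delta/M$ for each event $\{h(y)=\theta \wedge \min h(X\setminus y)>\theta\}$ would, after summing over the $M$ values of $\theta$, give total additive error $\delta$, which against a fair probability of roughly $1/|X|$ is a \emph{multiplicative} error of $\delta|X|$, not $\delta$. The paper's analysis (see \Cref{clm:dependent_seeds} and how it is summed in \Cref{sec:proof_minwise}) instead charges each $\theta$ an error proportional to $\Pr_{\sigma\sim\mathcal{G}}[\sigma(y)=\theta \wedge \min\sigma(B_j)>\theta]$, so that the sum over $\theta$ is only $2^{-C_s t} \cdot \Pr_{\sigma}[\sigma(y)<\min\sigma(B_j)] = O(2^{-C_s t}/|B_j|)$, which after accounting for $|B_j| \approx |X|/\ell$ gives a genuinely multiplicative $2^{-(C_s-1)t}$. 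You would need to reproduce that structure rather than a flat $\delta/M$ per $\theta$.
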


Besides the Nisan-Zuckerman PRG, our construction uses several extra ingredients. 
The first one is to use the extractor in \Cref{lem:extractor_unif} with $\Ext(U_n, s) = U_m$ and an asymptotic optimal error.  The second idea is to generate random seeds in the Nisan-Zuckerman PRG by the PRG of combinaroial rectangles in \Cref{thm:PRG_comb_rect}, which is motivated by a domain reduction of Lu \cite{Lu02}. 

Now we describe the construction of our hash family $\mathcal{H} = \{h_i : [N] \to [M]\}$.

\begin{tcolorbox}
    \textbf{Min-wise Hash Family: dimension $N$, error $2^{-C \cdot \frac{\log N}{\log \log N}}$, and alphabet $N^{O(1)}$}
    \begin{enumerate}
        \item Set $\ell := 2^t$ for $t := \frac{\log N}{\log \log N}$ and pick large constants $C_e$, $C_s$ and $C_g$ such that $C_e > C_s > C_g > C$.
        
        \item Sample a $C_g$-wise independent function $g : [N] \to [\ell]$ as the allocation of $[N]$ into $\ell$ buckets.
    
        \item Apply $\textsf{PRG}_1$ in \Cref{thm:PRG_comb_rect} fooling combinatorial rectangles of seed length $O(\log N)$ to generate $\ell$ pseudorandom seeds $s_1, \ldots, s_\ell$ in $\{0, 1\}^{C_e \cdot t}$ with an additive error $2^{- (C_s+C_e) \cdot t - 2}$.    
        
        \item Sample a random source $w \sim \{0,1\}^{C_e \cdot \log N}$ and apply the extractor in \Cref{lem:extractor_unif}, $\Ext : \{0, 1\}^{C_e \cdot \log N} \times \{0, 1\}^{C_e \cdot t} \to \{0,1\}^{0.3 C_e \cdot \log N}$ of min-entropy $0.6 C_e \cdot \log N$ and error $\NZerr = 2^{-C_s \cdot t}$, to $w$ and $s_1, \ldots, s_\ell$. For every $i \in [\ell]$, let $z_i := \Ext(w, s_i)$.
        
        \item Define a hash family $\mathcal{G} = \{ G_1, \ldots, G_{N^{0.3 C_e}} : [N] \to [M] \}$ of size $N^{0.3 C_e}$ to be the direct sum of a $(0.1 C_s + 1)$-wise independent function in $[M]^N$ and $\textsf{PRG}_2$ of error $2^{-C_s \cdot t}$ for combinatorial rectangles in $[M]^N$ from \Cref{cor:almost_poly_small_error}.
        
        \item Use $z_i$ to pick a function in $\mathcal{G}$ and denote it by $\sigma_i := G_{z_i}$.
        
        \item Finally, let hash function $h(x) = \sigma_{g(x)}(x)$ for every $x \in [N]$.
    \end{enumerate}
\end{tcolorbox}

We finish the proof of \Cref{thm:min_wise_log_seed} in this section. Firstly, we have the following properties from the allocation function $g$. For ease of exposition, let $j := g(y)$ and $B_i:=\{x \in X \setminus y : g(x) = i\}$ in the rest of this section.

\begin{lemma}\label{lem:allocation_small_error}
    Let $C_g$ be a sufficiently large constant compared to $C$. Then $C_g$-wise independent function $g:[N] \rightarrow [\ell]$ guarantees that (recall $\ell=2^{\frac{\log N}{\log \log N}}$):
    \begin{enumerate}
        \item When $|X| \le \ell^{0.9}$, with probability $1 - 1 / \ell^{3 C}$, $|B_i| \le C_g$ for all $i \in [\ell]$.

        \item When $|X| \in (\ell^{0.9}, \ell^{1.1})$, with probability $1 - 1 / \ell^{3 C}$, all buckets have $|B_i| \le 2 \ell^{0.1}$.
        
        \item When $|X| \ge \ell^{1.1}$, with probability $1-|X|^{-3 C}$, all buckets satisfy $|B_i| = (1 \pm 0.1) \cdot |X| / \ell$.
    \end{enumerate}
\end{lemma}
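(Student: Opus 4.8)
The plan is to prove each of the three regimes by a concentration argument for $C_g$-wise independent random variables, where the relevant quantity is $|B_i| = \sum_{x \in X\setminus y}\mathbf{1}(g(x)=i)$, a sum of $C_g$-wise independent indicator variables with mean $\mu_i = |B_i|_{\text{exp}} = |X\setminus y|/\ell$ (since $g$ maps uniformly to $[\ell]$). The main tool is the standard tail bound for $t$-wise independent sums: for a sum $Z$ of $t$-wise independent $\{0,1\}$-variables with mean $\mu$ and deviation parameter, one has $\Pr[|Z-\mu|\ge \lambda] \le O\big((t\mu/\lambda^2 + t/\lambda)^{t/2}\big)$ or, in the ``upper-tail only'' form useful for small $\mu$, $\Pr[Z \ge a] \le \binom{|X|}{a}\ell^{-a} \le (e|X|/(a\ell))^{a}$ — this is exactly the flavour of \Cref{lem:t_wise_ind}(2) applied with the roles suitably reinterpreted, or one can invoke the moment bound directly. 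In each case we then union-bound over the at most $|X|\le N$ relevant buckets (buckets with $B_i=\emptyset$ are trivially fine), and the point is that $C_g$ can be chosen large enough (depending only on $C$) to beat the union bound and land at the claimed failure probabilities $\ell^{-3C}$ or $|X|^{-3C}$.

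For item (1), $|X|\le \ell^{0.9}$: here the expected load $\mu_i \le \ell^{0.9}/\ell = \ell^{-0.1}\ll 1$, so I would use the upper-tail bound $\Pr[|B_i|\ge C_g] \le \binom{|X|}{C_g}\ell^{-C_g}\le (e\ell^{0.9}/(C_g \ell))^{C_g} \le \ell^{-0.1 C_g}$ (for $C_g\ge e$), and a union bound over $\le \ell^{0.9}$ occupied buckets gives failure probability $\le \ell^{0.9 - 0.1 C_g}$, which is $\le \ell^{-3C}$ once $C_g \ge 10(C+1)$, say. For item (2), $|X|\in(\ell^{0.9},\ell^{1.1})$: now $\mu_i < \ell^{0.1}$, and I want $|B_i|\le 2\ell^{0.1}$, i.e. a deviation of more than $\mu_i$ above the mean (a constant-factor overshoot). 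The $t$-wise moment bound with $t=C_g$ gives $\Pr[|B_i - \mu_i| \ge \ell^{0.1}] \le O\big((C_g \ell^{0.1}/\ell^{0.2} + C_g/\ell^{0.1})^{C_g/2}\big) = O\big((C_g/\ell^{0.1})^{C_g/2}\big) = \ell^{-\Omega(C_g)}$, and again union-bounding over $\le \ell^{1.1}$ buckets and choosing $C_g$ large in terms of $C$ finishes it. For item (3), $|X|\ge \ell^{1.1}$: here $\mu_i = |X\setminus y|/\ell \ge \ell^{0.1}\ge 1$ is genuinely large, and I want a $(1\pm 0.1)$ multiplicative concentration. The $t$-wise Chernoff-type bound gives $\Pr[|B_i-\mu_i|\ge 0.1\mu_i] \le O\big((C_g/(0.01\mu_i))^{C_g/2}\big) \le O\big((100 C_g \ell/|X|)^{C_g/2}\big)$; since $|X|/\ell \ge \ell^{0.1}$ and also $\ell \le N \le |X|^{O(1)}$ in this regime (more precisely $|X|\le N$ so $\ell \le N$), one checks this is $\le |X|^{-4C}$ for $C_g$ a large enough constant multiple of $C$, and the union bound over $\le |X|$ buckets costs only another factor $|X|$, leaving $|X|^{-3C}$.

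The only mild subtlety — and the step I'd be most careful about — is bookkeeping the constants so that a single choice of $C_g$ (large compared to $C$, but still an absolute constant) works simultaneously in all three regimes and survives all three union bounds; this is where I would need to track the $\ell$-vs-$|X|$ exponents honestly rather than hand-wave, using $\ell = 2^{\log N/\log\log N} = N^{1/\log\log N}$ so that powers of $\ell$ and powers of $|X|$ (with $|X|\le N$) are comparable up to the $\log\log N$ factor in the exponent, which is harmless against a polynomial failure probability. A secondary point of care is that $\mathbf{1}(g(x)=i)$ for a fixed bucket $i$ are indeed $C_g$-wise independent $\{0,1\}$-variables with mean exactly $1/\ell$ — this is immediate from $C_g$-wise independence of $g$ over the alphabet $[\ell]$ — so \Cref{lem:t_wise_ind}-style bounds apply verbatim after replacing the ``$\min$'' event language with a direct sum bound; alternatively one simply cites a standard $k$-wise independent tail inequality. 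Everything else is routine.
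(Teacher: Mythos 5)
Your proposal matches the paper's proof essentially step for step: a binomial counting bound $\binom{r}{C_g}\ell^{-C_g}$ for the sparse regime $|X|\le\ell^{0.9}$, and a $C_g$-th moment (Bellare--Rompel style) tail bound for the two denser regimes, each followed by a union bound over buckets. The only slip is the explicit constant in case (1): you need $0.1C_g - 0.9 \ge 3C$, i.e.\ $C_g \gtrsim 30C$, not $C_g\ge 10(C+1)$; but you flag the constant-tracking as the step needing care, and the claim "$C_g$ sufficiently large compared to $C$" is all the lemma asserts, so this is immaterial.
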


The key point is that since $\ell = 2^t = 2^{\frac{\log N}{\log \log N}}$, the failure probability of each case is small enough compared to the error $\delta / |X|=2^{-C \cdot \frac{\log N}{\log \log N}}/|X|$. Thus, we can assume that all properties in \Cref{lem:allocation_small_error} hold.

To calculate $\Pr_{h \sim \mathcal{H}}[h(y)<\min h(X \setminus y)]$, we express this as
\begin{align}
      & \Pr_{h \sim \mathcal{H}}[h(y) < \min h(X \setminus y)] = \sum_{\theta \in [M]} \Pr_{h \sim \mathcal{H}} [h(y)=\theta \wedge \min h(X \setminus y)>\theta] \notag \\
    = & \sum_{\theta \in [M]} \Pr_{w \sim U, (s_1, \ldots, s_\ell) \sim \PRG_1} \left[ (\sigma_j(y) = \theta \wedge \min \sigma_j(B_j) > \theta) \wedge (\min \sigma_i(B_i) > \theta,\ \forall i \ne j)  \right], \label{eq:minwise_over_theta}
\end{align}

\noindent where function $\sigma_i := G_{z_i}$ for $z_i := \Ext(w, s_i)$. Our analysis will bound each term of \eqref{eq:minwise_over_theta}. 

Our second step is to bound the multiplicative error when the seeds $s_1, \ldots, s_\ell$ of extractors are sampled independently and uniformly from $\{0, 1\}^{C_e \cdot t}$ like the Nisan-Zuckerman PRG. 

\begin{lemma}\label{lem:independent_seeds}
    Let $\mathcal{H}'$ be the hash function family after replacing $s_1, \ldots, s_\ell$ by independent random samples in $\{0, 1\}^{C_e \cdot t}$, instead of applying $\PRG_1$. The (multiplicative) error of $\mathcal{H'}$ is at most $2^{-2 C \cdot t}$:
    \begin{align}            
        \Pr_{h' \sim \mathcal{H}'}[h'(y) < \min h'(X \setminus y)] & = \sum_{\theta \in [M]} \Pr_{h' \sim \mathcal{H}'}\left[ h'(y) = \theta \wedge \min  h'(X \setminus y)>\theta \right] \notag \\
        & = ( 1 \pm 2^{-2 C \cdot t} ) \cdot \Pr_{\sigma \sim U}[\sigma(y) < \min \sigma(X \setminus y)]
        \label{eq:independent_seeds}.
    \end{align}
\end{lemma}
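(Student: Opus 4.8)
The plan is to analyze \eqref{eq:minwise_over_theta} term by term, using the Nisan--Zuckerman paradigm with the special extractor from \Cref{lem:extractor_unif} to get a \emph{multiplicative} control with respect to $\Pr_{\sigma \sim U}[\sigma(y) = \theta]$. Fix $\theta \in [M]$. The key structural observation is that we should think of the read-once branching program that reads the seeds $s_1, \ldots, s_\ell$ in an order with $s_j$ \emph{first}, where $j = g(y)$. At the moment $s_j$ is read, the source $w$ is still fully uniform, so by the extra property $\Ext(U_n, s_j) = U_m$, the string $z_j = \Ext(w, s_j)$ is exactly uniform, hence $\sigma_j = G_{z_j}$ is a uniformly random member of $\mathcal{G}$. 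Because $\mathcal{G}$ is (a direct sum containing) a $(0.1 C_s + 1)$-wise independent family and $(0.1 C_s + 1) \ge 2$, we get $\Pr[\sigma_j(y) = \theta] = 1/M$ \emph{exactly}, and moreover, conditioned on $\sigma_j(y)=\theta$ and on the event $\min \sigma_j(B_j) > \theta$, the map $s_j \mapsto z_j$ partitions the uniform choice of $z_j$ into fibers; what matters is that after revealing $z_j$ we have fixed $\sigma_j$ entirely, and the source $w$ has lost at most $m = 0.3 C_e \log N$ bits of min-entropy (since $z_j$ has length $m$), leaving min-entropy $\ge 0.7 C_e \log N > 0.6 C_e \log N$ in $w$.

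\medskip

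Next I would handle the remaining buckets. Condition on $s_j$ (equivalently on $z_j$, equivalently on $\sigma_j$), so that the factor $\Pr[\sigma_j(y) = \theta \wedge \min \sigma_j(B_j) > \theta]$ is extracted and equals $\tfrac1M \cdot \Pr_{\sigma_j \sim \mathcal{G}}[\min \sigma_j(B_j) > \theta \mid \sigma_j(y) = \theta]$. Now apply the Nisan--Zuckerman argument to the remaining ROBP reading $s_i$ for $i \ne j$: conditioned on any fixing of $w$ with min-entropy $\ge 0.6 C_e \log N$, each $z_i = \Ext(w, s_i)$ for independent uniform $s_i$ is $\NZerr = 2^{-C_s t}$-close to uniform, so $\sigma_i = G_{z_i}$ is $\NZerr$-close to a uniformly random member of $\mathcal{G}$; by a hybrid argument over the $\ell - 1$ buckets, the joint distribution of $(\sigma_i)_{i \ne j}$ is $(\ell-1)\NZerr \le \ell \cdot 2^{-C_s t} = 2^{t - C_s t}$-close to $\ell-1$ independent uniform draws from $\mathcal{G}$. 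Since the event $\bigwedge_{i \ne j} (\min \sigma_i(B_i) > \theta)$ is a product over buckets, replacing the pseudorandom $(\sigma_i)$ by independent uniform members of $\mathcal{G}$ costs at most $2^{(1-C_s)t}$ in additive error. Here I must be a little careful: the min-entropy accounting and the hybrid need to be interleaved correctly — the standard fix is to note that after reading $s_j$ and $t' := i-1$ more seeds, at most $(t'+1) m$ bits of $w$ are determined, which for $t' < \ell$ could in principle exhaust the entropy; the actual Nisan--Zuckerman trick (used also in \cite{NZ96,IMZ}) is that we do not fix all previous $z_i$'s but rather observe that the ROBP, being of small width, can only retain a bounded amount of information about $w$, so the relevant conditioning reduces $w$'s min-entropy by only $O(\log(\text{width} \cdot \ell))$ bits, comfortably below $0.4 C_e \log N$ for our choice of constants.

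\medskip

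Once the pseudorandom seeds are replaced by independent uniform members of $\mathcal{G}$ in each bucket, the within-bucket analysis is routine: $\Pr_{\sigma_i \sim \mathcal{G}}[\min \sigma_i(B_i) > \theta]$ differs from $\Pr_{\sigma_i \sim U}[\min \sigma_i(B_i) > \theta] = (1 - \theta/M)^{|B_i|}$ by a small additive error, using that $\mathcal{G}$ contains a $(0.1 C_s+1)$-wise independent part (invoke \Cref{lem:t_wise_ind} part~1 when $|B_i|$ is small, using the $t$-wise tail, and the combinatorial-rectangle PRG part of $\mathcal{G}$ when $|B_i|$ is large so the event is a genuine rectangle) — and the conditional version $\Pr[\min \sigma_j(B_j) > \theta \mid \sigma_j(y) = \theta]$ is handled the same way after fixing one coordinate, which only changes the effective independence parameter by $1$, explaining the $+1$ in $(0.1 C_s + 1)$. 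Multiplying the per-bucket estimates back together and comparing with the product form of $\Pr_{\sigma \sim U}[\sigma(y) = \theta \wedge \min \sigma(X \setminus y) > \theta] = \tfrac1M \prod_i (1-\theta/M)^{|B_i|} \cdot (1-\theta/M)^{|B_j|\text{-type correction}}$ gives each term of \eqref{eq:minwise_over_theta} to within a factor $(1 \pm 2^{-2Ct}/2)$ of the truth plus a tiny additive slack; since the fair probabilities sum to $\Pr_{\sigma \sim U}[\sigma(y) < \min \sigma(X\setminus y)] \ge 1/|X| \gg 2^{-2Ct}$ (recall $|X| = O(\delta M)$ and the bucket-size bounds of \Cref{lem:allocation_small_error}), summing over $\theta \in [M]$ and absorbing the additive slack into the multiplicative error yields \eqref{eq:independent_seeds}.

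\medskip

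\textbf{Main obstacle.} The delicate point is the interleaving of the min-entropy bookkeeping with the hybrid over the $\ell-1$ buckets: naively fixing each $z_i$ in turn would drain $w$ of its entropy after only $O(C_e \log N / m) = O(1)$ steps, far fewer than $\ell$. The resolution — the heart of why the Nisan--Zuckerman framework applies here — is that the object we are fooling, namely $\bigwedge_i(\min \sigma_i(B_i) > \theta)$ composed with the extractor calls, is a read-once branching program of \emph{small width} in the variables $s_1, \dots, s_\ell$ (width roughly $|M|$ per layer, or even less after noticing the event only tracks "has a value $\le \theta$ appeared"), so by the standard extractor-vs-small-space lemma the branching program's state betrays only $O(\log(\text{width}))$ bits about $w$ at any cut, and thus conditioning on the program's state keeps $H_\infty(w) \ge 0.6 C_e \log N$. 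Getting this accounting exactly right — and verifying that $C_e > C_s > C_g > C$ with the specific slack constants $0.3, 0.6, 0.1$ makes every inequality in the chain go through — is where the real work lies; everything else is bookkeeping with \Cref{lem:t_wise_ind}, \Cref{lem:extractor_unif}, and \Cref{lem:allocation_small_error}.
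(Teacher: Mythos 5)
Your high-level plan matches the paper's: enumerate $\theta$, reorder the ROBP input so that $s_j$ (with $j = g(y)$) is read first, exploit the extra property $\Ext(U_n, s_j) = U_m$ of \Cref{lem:extractor_unif} to make $\sigma_j$ an \emph{exactly} uniform draw from $\mathcal{G}$ (hence $\Pr[\sigma_j(y)=\theta]=1/M$ with no error), then run the Nisan--Zuckerman argument on the remaining $\ell-1$ seeds, and finally plug in the per-bucket estimates and sum over $\theta$. That is precisely the structure of \Cref{sec:first_proof_independent_seed} and \Cref{sec:second_proof_independent_seeds}. However, two of the steps you flag as ``the real work'' are where your sketch has genuine gaps.

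\paragraph{The min-entropy bookkeeping.} Your stated resolution of the ``naive fixing drains entropy'' obstacle --- that a width-$w$ ROBP ``betrays only $O(\log w)$ bits about $w$ at any cut'' --- is not the Nisan--Zuckerman argument and, taken literally, is false: a single state of a width-2 program can be extremely informative about $w$ (e.g.\ if it is reached with probability $2^{-\Omega(n)}$). What the paper actually does, following \cite{NZ96}, is a \emph{threshold} case split at each layer $i$: either $\Pr[A_1 \wedge \cdots \wedge A_{i-1}] \ge 2^{-0.4 C_e \log N}$, in which case conditioning on this event multiplies each point probability of $w$ by at most $2^{0.4 C_e \log N}$ and hence $H_\infty(w \mid A_1\wedge\cdots\wedge A_{i-1}) \ge 0.6 C_e\log N$; or the probability is below the threshold, in which case the term is simply charged to the $\ell \cdot 2^{-0.4 C_e\log N}$ additive slack. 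The per-step entropy loss is $0.4 C_e\log N$, a constant fraction of $|w|$ --- not $O(\log w)$ --- and it is the width-2 structure that lets one phrase the state as the single event $A_1\wedge\cdots\wedge A_{i-1}$ and avoid a union bound over states. If you write the proof with your stated ``$O(\log w)$ bits leak'' lemma, you will not be able to cite or prove it; you need the threshold argument.

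\paragraph{The sum over $\theta$.} This is the larger gap. You dismiss it as ``summing over $\theta\in[M]$ and absorbing the additive slack,'' justified by $1/|X| \gg 2^{-2Ct}$, but this does not close the argument. After the first part one has, per $\theta$ and per fixed $g$,
\[
\Pr_{h'\sim\mathcal{H}'}\!\left[h'(y)=\theta\wedge\min h'(X\setminus y)>\theta\right]
= \tfrac1M\,\Pr_{\sigma:0.1C_s\text{-wise}}[\min\sigma(B_j)>\theta]\cdot\!\!\prod_{i\ne j}\!\bigl((1-\theta/M)^{|B_i|}\pm 2\cdot 2^{-C_s t}\bigr) \pm \ell\, N^{-0.4 C_e},
\]
and the issue is the $\pm 2\cdot 2^{-C_s t}$ sitting additively inside each of the $\ell-1$ factors. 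Naively summing this additive error over all $\theta\in[M]$ and $\ell$ buckets produces something on the order of $M\cdot \ell\cdot 2^{-C_s t}$, which is $N^{\Theta(1)}\cdot 2^{-\Theta(\log N/\log\log N)}$ and hence \emph{polynomially large}, not small. The reason the proof works is that for small $\theta$ (i.e.\ $(1-\theta/M)^{|B_i|}$ bounded away from 0) the $\pm\varepsilon$ is a \emph{multiplicative} perturbation of each factor, yielding a global multiplicative error of about $\ell\varepsilon$, while for large $\theta$ one abandons the per-factor comparison entirely and instead shows that both the true value and the perturbed product are exponentially tiny, via a tail bound that picks out the $S=\Theta(\log\log N)$ buckets with the largest $|B_i|$ and argues that already their product of factors forces the term to be $N^{-\Omega(1)}$. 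The crossover threshold and the tail bound depend on $|X|$, which is exactly why \Cref{lem:allocation_small_error} splits into the regimes $|X|\le\ell^{0.9}$, $|X|\in(\ell^{0.9},\ell^{1.1})$, $|X|\ge\ell^{1.1}$; this case analysis is essentially all of \Cref{sec:second_proof_independent_seeds} and is not optional. Your sketch contains no mechanism for converting the per-factor additive $\varepsilon$ into a controllable multiplicative error, and the inequality $1/|X|\gg 2^{-2Ct}$ by itself says nothing about why the summed slack is bounded.

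In short: the skeleton is right and identical to the paper's, but the two places you yourself identify as ``where the real work lies'' are precisely the places your proposal does not actually do the work, and in one of them the stated resolution is the wrong lemma.
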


Next we consider the error when the hash family $\mathcal{H}$ uses correlated seeds $s_1, \ldots, s_\ell$ generated by $\PRG_1$ for combinatorial rectangles. We bound the error between $h$ and $h'$ as follows.

\begin{claim}\label{clm:dependent_seeds}
    For any fixed $g$, let $j:=g(y)$ and $B_j:=\{x \in X\setminus y:g(x)=g(y)\}$. Then we define $\mathcal{H}_g$ to be the hash family with this fixed allocation $g$ and correlated seeds $s_1,\ldots,s_\ell$ generated by   $\PRG_1$ and $\mathcal{H}'_g$ to be the hash family with this fixed allocation $g$ and independent seeds like \Cref{lem:independent_seeds}. 
    
    For any $\theta \in [M]$, the error between $\mathcal{H}_g$ and $\mathcal{H}'_g$ is at most
    \begin{align}
        & \left| \Pr_{h \sim \mathcal{H}_g}[h(y)=\theta \wedge \min h(X \setminus y)>\theta] - \Pr_{h'\sim \mathcal{H}'_g}[h'(y)=\theta \wedge \min h'(X \setminus y)>\theta] \right| \notag \\
        \le & \Pr_{\sigma \sim \mathcal{G}}[ \sigma(y)=\theta \wedge \min \sigma(B_j) > \theta] \cdot 2^{-C_s \cdot t}. \label{eq:after_NZ_PRG2} 
    \end{align}
\end{claim}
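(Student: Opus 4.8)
The plan is to expose the claimed bound as a single application of the combinatorial-rectangle PRG $\PRG_1$, with the source $w$ and the bucket $j$ containing $y$ frozen. First I would condition on the random source $w$. Once $w$ is fixed, each output $z_i = \Ext(w, s_i)$ — and hence each function $\sigma_i = G_{z_i}$ — depends only on the single seed $s_i$, so the event inside \eqref{eq:after_NZ_PRG2} factors as a product over $i \in [\ell]$. Concretely, define for each $i \in [\ell]$ the function $R_i : \{0,1\}^{C_e \cdot t} \to \{0,1\}$ by
\[
R_j(s) := \mathbf{1}\bigl( G_{\Ext(w,s)}(y) = \theta \ \wedge\ \min G_{\Ext(w,s)}(B_j) > \theta \bigr),
\qquad
R_i(s) := \mathbf{1}\bigl( \min G_{\Ext(w,s)}(B_i) > \theta \bigr) \ \text{ for } i \ne j.
\]
Then, for the fixed $w$ and $g$,
\[
\Pr_{(s_1,\dots,s_\ell)}\Bigl[ h(y)=\theta \wedge \min h(X\setminus y) > \theta \Bigr]
= \E_{(s_1,\dots,s_\ell)}\Bigl[ \textstyle\prod_{i=1}^{\ell} R_i(s_i) \Bigr],
\]
and the right-hand side is exactly the quantity that $\PRG_1$, a PRG for combinatorial rectangles in $(\{0,1\}^{C_e\cdot t})^{\ell}$, is designed to fool. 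Since $\PRG_1$ has additive error $2^{-(C_s+C_e)\cdot t - 2}$ and there are at most $M = N^{O(1)} \le 2^{O(t \log\log N)}$ — wait, this needs care — at most $M$ values of $\theta$, the clean move is to keep the dependence on $\Pr_{\sigma\sim\mathcal G}[\sigma(y)=\theta\wedge\min\sigma(B_j)>\theta]$ explicit rather than bounding $M$ crudely, which is what the statement asks for.

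So the second and crucial step is to extract the factor $\Pr_{\sigma\sim\mathcal G}[\sigma(y)=\theta\wedge\min\sigma(B_j)>\theta]$ on the right-hand side of \eqref{eq:after_NZ_PRG2}. The point is that the rectangle $\prod_i R_i$ has, as its $j$-th coordinate function, $R_j$, whose expectation under a uniform seed is $\E_{s_j}[R_j(s_j)] = \Pr_{\sigma\sim\mathcal G}[\sigma(y)=\theta\wedge\min\sigma(B_j)>\theta]$ (here I use that for any fixed $w$ the map $s\mapsto\Ext(w,s)$ pushes the uniform seed distribution onto the distribution of the $\mathcal G$-index, by construction of step 5–6 of the hash family). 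A PRG for combinatorial rectangles with additive error $\varepsilon$ fools $\prod_i R_i$ to within $\varepsilon$; but one can do better when one coordinate has small acceptance probability. I would instead use the standard fact that the PRG error on a rectangle $\prod_i R_i$ can be taken to be $\varepsilon$ times (a bound on) $\prod_{i\neq j}\E[R_i]$ times $\E[R_j]$ when the construction is "$\varepsilon$-per-coordinate" — but since \Cref{thm:PRG_comb_rect} is stated only as additive-$\varepsilon$, the cleaner route is: bound $\bigl|\E_{\PRG_1}[\prod_i R_i] - \prod_i\E[R_i]\bigr|$ and separately bound $\bigl|\E_{\mathcal H'_g}[\cdots] - \prod_i\E[R_i]\bigr| = 0$ (independent seeds give exactly the product), so the error between $\mathcal H_g$ and $\mathcal H'_g$ equals $\bigl|\E_{\PRG_1}[\prod_i R_i]-\prod_i \E[R_i]\bigr| \le 2^{-(C_s+C_e)\cdot t-2}$. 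To get the sharper form with the $\Pr_{\sigma\sim\mathcal G}[\cdots]$ factor, I'd peel off coordinate $j$: write $\prod_{i\ne j} R_i = R'$ (a rectangle on the other $\ell-1$ coordinates), note $\E[R_j]\le \Pr_{\sigma\sim\mathcal G}[\sigma(y)=\theta\wedge\min\sigma(B_j)>\theta]$, and use that $\PRG_1$'s error on $R'\cdot R_j$ is at most $\E[R_j]\cdot(\text{error of }\PRG_1\text{ on }R')$ up to lower-order terms — this is the place where I actually need a multiplicative/hybrid property of the rectangle PRG, which I would either cite from the structure of the Gopalan–Yehudayoff / Lu construction or re-derive by a one-coordinate hybrid argument. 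Then finally average over $w$: the bound is uniform in $w$, so it survives the expectation, and I reattach the allocation $g$ which was fixed throughout. This yields \eqref{eq:after_NZ_PRG2} with the claimed constant $2^{-C_s\cdot t}$ after absorbing the extra factors of $2^{C_e t}$ and the $2^{-2}$ into the choice $C_s < C_e$.

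I expect the main obstacle to be precisely the passage from "$\PRG_1$ has additive error $\varepsilon$" to "$\PRG_1$'s error on the rectangle $\prod_i R_i$ is at most $\varepsilon \cdot \E[R_j]$", i.e.\ showing the error scales with the acceptance probability of the frozen coordinate rather than being a flat additive $\varepsilon$. If the black-box PRG from \Cref{thm:PRG_comb_rect} does not provide this, the fallback is to use the flat additive bound $2^{-(C_s+C_e)\cdot t-2}$ and then, when summing \eqref{eq:after_NZ_PRG2} over $\theta \in [M]$ in the next lemma, pay a factor of $M$; choosing $C_e$ large enough relative to $C_s$ (so that $M \cdot 2^{-(C_s+C_e)t-2} = N^{O(1)}\cdot 2^{-(C_s+C_e)t-2} \le 2^{-C_s t}$, using $M = N^{O(1)}$ and $t = \log N/\log\log N$) recovers the needed bound, at the cost of weaker constants — but then the statement of \Cref{clm:dependent_seeds} as written (with the $\Pr_{\sigma\sim\mathcal G}[\cdots]$ factor) would have to be the sharper version, so I would pursue the one-coordinate hybrid argument to justify pulling that factor out cleanly. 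A secondary but routine point is verifying that conditioning on $w$ really does decouple the seeds into an honest combinatorial rectangle, which follows immediately from the fact that $z_i=\Ext(w,s_i)$ reads only $s_i$ after $w$ is fixed, and from the direct-sum definition of $\mathcal G$ being applied coordinate-wise on $[N]$.
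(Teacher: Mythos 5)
You have the right high-level shape (freeze $w$, view the event as a combinatorial rectangle over the seeds $s_1,\dots,s_\ell$, and then try to peel off the $j$-th coordinate so that the PRG error appears multiplied by the acceptance probability on coordinate $j$), but there are two concrete gaps, one of which is an actual error. You claim that ``for any fixed $w$ the map $s \mapsto \Ext(w,s)$ pushes the uniform seed distribution onto the distribution of the $\mathcal G$-index,'' and use this to identify $\E_{s_j\sim U}[R_j(s_j)]$ with $\Pr_{\sigma\sim\mathcal G}[\sigma(y)=\theta \wedge \min\sigma(B_j)>\theta]$. That is false: the extra property of the extractor (\Cref{lem:extractor_unif}) is $\Ext(U_n,s)=U_m$ for every \emph{fixed seed} $s$, i.e.\ it needs $w$ uniform and $s$ fixed, not the reverse. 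For a fixed $w$, $\Ext(w,\cdot)$ maps $2^{C_e t}$ seeds into $\{0,1\}^{0.3C_e\log N}$, so its image has size at most $2^{C_e t} \ll 2^{0.3 C_e \log N}$; the pushforward is nowhere near uniform, and $\E_{s_j\sim U}[R_j]$ bears no fixed relationship to $\Pr_{\sigma\sim\mathcal G}[\cdots]$. Your plan to then ``average over $w$ at the end'' does not rescue this because the inner bound you want is per-$w$.

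The paper resolves both issues by reversing the order of conditioning: it enumerates the \emph{seed value} $\alpha=s_j$ and keeps $w$ random inside the sum. With $\alpha$ fixed and $w\sim U$, $z_j=\Ext(w,\alpha)$ is exactly uniform, so $\sigma_j$ is an exact uniform sample from $\mathcal G$, which is how the factor $\Pr_{\sigma\sim\mathcal G}[\sigma(y)=\theta\wedge\min\sigma(B_j)>\theta]$ legitimately appears. And the ``multiplicative/hybrid property'' you correctly identify as the crux does not require citing internals of the Gopalan--Yehudayoff or Lu constructions: it is \Cref{fact:PRG_CR_LOG}, which follows from the flat additive guarantee alone by observing that $\mathbf{1}(s_j=\alpha)$ is itself a combinatorial rectangle in $(\{0,1\}^{C_et})^\ell$ with acceptance probability $\approx 2^{-C_et}$. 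Since $\PRG_1$ fools both $\mathbf{1}(s_j=\alpha)$ and $\mathbf{1}(s_j=\alpha)\cdot\prod_{i\ne j}f_i(s_i)$ to additive error $2^{-(C_s+C_e)t-2}$, dividing rescales the error by $\approx 2^{C_et}$, giving conditional error $2^{-C_st}$. That quotient argument is the elementary replacement for the one-coordinate hybrid you were contemplating; your fallback (pay a factor $M$ when summing over $\theta$) indeed fails numerically, as you suspected, so the quotient step is essential.
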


The term in \eqref{eq:after_NZ_PRG2} is the error introduced by $\PRG_1$, which is multiplicative with respect to $\Pr[\sigma(y)=\theta]$. So the last piece is to show the total error of \eqref{eq:after_NZ_PRG2} over $\theta$, $\sum_{\theta \in [M]} \Pr_{\sigma \sim \mathcal{G}}[ \sigma(y)=\theta \wedge \min \sigma(B_j) > \theta] \cdot 2^{-C_s \cdot t}$, is bounded by $o(\delta) / |X|$. The observation here is that $\sum_{\theta \in [M]} \Pr_{\sigma \sim \mathcal{G}}[ \sigma(y)=\theta \wedge \min \sigma(B_j) > \theta]$ is the exact probability of event $(\sigma(y) < \min \sigma(B_j))$ under $(0.1 C_s + 1)$-wise independence since $\mathcal{G}$ is $(0.1 C_s + 1)$-wise independent. Thus by the classical result of Indyk \cite{minwise_Indyk}, this part is bounded by $O(1)/|B_j|$.

We finish the proof of \Cref{thm:min_wise_log_seed} in \Cref{sec:proof_minwise}. Then we show the proofs of \Cref{lem:allocation_small_error}, \Cref{lem:independent_seeds} and \Cref{clm:dependent_seeds} in \Cref{sec:proof_allocation}, \Cref{sec:proof_independent_seeds} and \Cref{sec:proof_dependent_seeds} separately.

\subsection{Proof of \Cref{thm:min_wise_log_seed}}\label{sec:proof_minwise}

We continue the calculation of \eqref{eq:minwise_over_theta}: 
\begin{align*}
    \sum_{\theta \in [M]} \Pr_{w \sim U, (s_1, \ldots, s_\ell) \sim \PRG_1} \left[ (\sigma_j(y) = \theta \wedge \min \sigma_j(B_j) > \theta) \wedge (\min \sigma_i(B_i) > \theta,\ \forall i \ne j)  \right].    
\end{align*}

We first fix the allocation $g$. Then by \Cref{clm:dependent_seeds}, the above is equal to
\begin{equation*}
    \sum_{\theta \in [M]} \Pr_{h'\sim \mathcal{H}'_g}[h'(y)=\theta \wedge \min h'(X \setminus y)>\theta] \pm \sum_{\theta \in [M]} \Pr_{\sigma \sim \mathcal{G}}[ \sigma(y)=\theta \wedge \min \sigma(B_j) > \theta] \cdot 2^{-C_s \cdot t}.
\end{equation*}

\Cref{lem:independent_seeds} shows that the sum of the first term over allocations $g$ is $( 1 \pm 2^{-2 C \cdot t} ) \cdot \Pr_{\sigma \sim U}[\sigma(y) < \min \sigma(X \setminus y)]$. Then observe that given $g$, the second term becomes
\begin{equation}
    2^{-C_s \cdot t} \cdot \Pr_{\sigma : \text{$(0.1 C_s + 1)$-wise}}[\sigma(y) < \min \sigma(B_j)] = 2^{-C_s \cdot t} \cdot \frac{O(1)}{|B_j| + 1} \label{eq:error_C_s_wise},
\end{equation}

\noindent by Indyk's result that $O(\log (1 / \varepsilon))$-wise independence is a min-wise hash family with error $\varepsilon$ (choosing $\varepsilon$ as a constant here). For completeness, we provide a formal statement here.

\begin{theorem}[Theorem 1.1 in \cite{minwise_Indyk}]
    There exists a constant $c$ such that for any $\varepsilon > 0$, $c \cdot \log (1 / \varepsilon)$-wise independent function from $[N]$ to $[M]$ is a min-wise hash family with error $\varepsilon$ when $M = \Omega(N /\varepsilon)$. 
\end{theorem}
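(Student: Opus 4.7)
The plan is to follow Indyk's classical argument: decompose the probability by enumerating the value $\theta := \sigma(y)$, and then control the conditional tail $\Pr[\min \sigma(X \setminus y) > \theta \mid \sigma(y) = \theta]$ using the two regimes of \Cref{lem:t_wise_ind}. Write $b := |X| - 1$; the target is $\Pr_\sigma[\sigma(y) < \min \sigma(X \setminus y)] = (1 \pm \varepsilon)/(b+1)$.

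First, since $\sigma$ is in particular $2$-wise independent,
\begin{equation*}
\Pr_\sigma[\sigma(y) < \min \sigma(X \setminus y)] = \frac{1}{M} \sum_{\theta=0}^{M-1} \Pr_\sigma[\min \sigma(X \setminus y) > \theta \mid \sigma(y) = \theta],
\end{equation*}
and conditioning on $\sigma(y) = \theta$ leaves the coordinates on $X \setminus y$ $(t-1)$-wise independent and uniform on $[M]$. The truly uniform version of this sum is $\frac{1}{M} \sum_\theta (1-\theta/M)^b$, and a Faulhaber / Riemann-sum calculation gives $\frac{1}{M} \sum_\theta (1-\theta/M)^b = \frac{1}{b+1} \pm O(1/M)$, which is $(1 \pm O(\varepsilon))/|X|$ under the hypothesis $M = \Omega(N/\varepsilon)$.

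Second, I split the $\theta$-sum at a threshold $\tau$ chosen so that $b\tau/M = C t$ for a constant $C$ fixed later. In the \emph{small} regime $\theta \leq \tau$, part~(1) of \Cref{lem:t_wise_ind} (applied with $t$ replaced by $t-1$) gives the pointwise estimate
\begin{equation*}
\Pr[\min \sigma(X \setminus y) > \theta \mid \sigma(y) = \theta] = (1 - \theta/M)^b \pm \frac{(b \theta/M)^{t-1}}{(t-1)!}.
\end{equation*}
Summing the error across $\theta \leq \tau$, dividing by $M$, and applying Stirling yields a total contribution of order $(Ce)^t / b$; choosing $C < 1/e$ makes this at most $\varepsilon/|X|$ once $t \geq c_1 \log(1/\varepsilon)$ for a suitable $c_1$. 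In the \emph{large} regime $\theta > \tau$, part~(2) of \Cref{lem:t_wise_ind} yields $\Pr[\min \sigma(X \setminus y) > \theta \mid \sigma(y) = \theta] \leq \bigl( C_t (t-1) M /(b\theta) \bigr)^{(t-1)/2}$; summing this tail and dividing by $M$ gives an upper bound of order $(C_t/C)^{t/2} / b$, which is $\leq \varepsilon/|X|$ whenever $C > C_t$ and $t \geq c_2 \log(1/\varepsilon)$. The corresponding uniform tail $\frac{1}{M}\sum_{\theta > \tau}(1-\theta/M)^b \leq e^{-b\tau/M}/(b+1) = e^{-Ct}/(b+1)$ is also below $\varepsilon/|X|$ for $t$ large enough.

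Adding the two regimes and subtracting the uniform value gives $\Pr_\sigma[\sigma(y) < \min \sigma(X \setminus y)] = (1 \pm O(\varepsilon))/|X|$, from which the theorem follows after absorbing constants into $c$. The main obstacle I expect is reconciling the two regime constraints: the small-regime truncation requires $C < 1/e$ so that $(Ce)^t$ decays, while the large-regime tail requires $C > C_t$ so that $(C_t/C)^{t/2}$ decays; this is only possible provided the universal constant $C_t$ in \Cref{lem:t_wise_ind}(2) satisfies $C_t < 1/e$, which holds for the standard moment-method Chernoff bound (or can be enforced by a constant-factor loss in $t$). Once $C$ is fixed, taking $t = c \log(1/\varepsilon)$ with $c = \max(c_1, c_2, 1/C)$ yields the stated $t$-wise independent min-wise hash guarantee.
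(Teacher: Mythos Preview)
The paper does not prove this statement: it is quoted as Theorem~1.1 of \cite{minwise_Indyk} and used as a black box (specifically, only the consequence $\Pr_{\sigma:\text{$O(1)$-wise}}[\sigma(y)<\min\sigma(B_j)]=O(1)/(|B_j|+1)$ is invoked in \eqref{eq:error_C_s_wise}). So there is no in-paper proof to compare against; your proposal is a reconstruction of Indyk's classical argument, phrased through the paper's \Cref{lem:t_wise_ind}, and the overall structure---enumerate $\theta$, split at a threshold $b\tau/M\asymp t$, use Bonferroni below and the moment tail above---is correct and is indeed how Indyk's proof goes.

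One point deserves a fuller sentence than you give it. Your two regime constraints force $C_t<C<1/e$, and the universal constant $C_t$ in \Cref{lem:t_wise_ind}(2) coming from the standard $t$-th-moment bound is \emph{not} generally below $1/e$. Your parenthetical fix (``enforced by a constant-factor loss in $t$'') is correct, but the mechanism should be spelled out: since $t$-wise independence implies $t'$-wise for any $t'\le t$, you may apply part~(2) with parameter $t':=t/\Theta(C_t)$ while still applying part~(1) with the full $t$. Concretely, take $b\tau/M=t/(2e)$ so the small-regime error is at most $(1/2)^t/b$, and apply part~(2) with $t'=\lfloor t/(4eC_t)\rfloor$ so that at the threshold $(C_t t'/(b\tau/M))^{t'/2}\le (1/2)^{t'/2}$; both contributions are then $2^{-\Omega(t)}/b$, and $t=c\log(1/\varepsilon)$ with $c=\Theta(C_t)$ suffices. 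With that clarification the argument is complete.
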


Returning to the second term, if $|X| < 2^{0.5 C_s \cdot t}$, then $2^{-C_s \cdot t} < 2^{-0.5 C_s \cdot t} / |X|$. Otherwise, $|X| \ge 2^{0.5 C_s \cdot t} > \ell^{1.1}$. By \Cref{lem:allocation_small_error}, with probability $1 - |X|^{-3 C}$, we have $|B_j| = (1 \pm 0.1) \cdot |X|/\ell$ in this case. And \eqref{eq:error_C_s_wise} becomes $\le 2^{-C_s \cdot t} \cdot \frac{O(1)}{|X| / \ell} = 2^{-(C_s - 1) \cdot t} \cdot \frac{O(1)}{|X|}.$ 

Finally, summing over all allocations $g$, we have
\begin{align*}
    \Pr_{h \sim \mathcal{H}}[h(y)< \min h(X \setminus y)] & = \Pr_{\sigma \sim U}[\sigma(y) < \min \sigma(X \setminus y)] \cdot (1 \pm 2^{-2 C \cdot t}) \pm \frac{O(1) \cdot 2^{-(C_s - 1) \cdot t} }{|X|} \pm 1 / |X|^{3 C} \\
    & = \Pr_{\sigma \sim U}[\sigma(y) < \min \sigma(X \setminus y)] \cdot (1 \pm 2^{-C \cdot t}),
\end{align*}

\noindent as $C_s$ is a large constant compared to $C$.

\subsection{Proof of \Cref{lem:allocation_small_error}}\label{sec:proof_allocation}

For convenience, set $r := |X| - 1$ in this proof. We prove these three cases separately.
\begin{enumerate}
    \item For $|X| \le \ell^{0.9}$, given $C_g \gg C$, we have
    $$
    \Pr_{g : \text{$C_g$-wise}}[|B_i| \ge C_g] \le {r \choose C_g} \cdot (1 / \ell)^{C_g} \le (r / \ell)^{C_g} \le 1 / \ell^{0.1 C_g} \le 1 / \ell^{3 C + 1},\ \forall i \in [\ell].
    $$

    \noindent By a union bound, with probability $1 - 1 / \ell^{3 C}$, $|B_i| \le C_g$ for all buckets.

    \item For $|X| \in (\ell^{0.9}, \ell^{1.1})$, let us fix a bin $i \in [\ell]$ and define $Z_x := \mathbf{1}(g(x) = i)$. Thus $|B_i| = \sum_{x \in X \setminus y} Z_x$ and $\E_{g : \text{$C_g$-wise}}\left[ (|B_i| - \E[|B_i|])^{C_g} \right] \le O( C_g \cdot r / \ell)^{C_g / 2}$, and 
    $$
    \Pr_{g : \text{$C_g$-wise}}[|B_i| - \E[|B_i|] \ge \ell^{0.1}] \le \frac{O(C_g \cdot r / \ell)^{C_g / 2}}{\ell^{0.1 C_g}} \le \frac{O(C_g \cdot \ell^{0.1})^{C_g / 2}}{\ell^{0.1 C_g}} = \dfrac{O_{C_g}(1)}{\ell^{0.05 C_g}} \le 1/\ell^{3 C + 1}.
    $$

    \noindent After a union bound, with probability $1 - 1 / \ell^{3 C}$, $|B_i| \le \ell^{0.1} + \E[|B_i|] \le 2 \ell^{0,1}$ for all $i$.

    \item For $|X| \ge \ell^{1.1}$, similar to the above analysis, we fix $i \in [\ell]$ and define $Z_x := \mathbf{1}(g(x) = i)$. However, the deviation depends more on $r$:
    $$
    \Pr_{g : \text{$C_g$-wise}}[||B_i| - \E[|B_i|]| \ge 0.1 \cdot r / \ell] \le \frac{O(C_g \cdot r / \ell)^{C_g / 2}}{(0.1 \cdot r / \ell)^{C_g}} \le \frac{O_{C_g}(1)}{(r / \ell)^{C_g / 2}} \le \frac{O_{C_g}(1)}{(r^{0.05})^{C_g / 2}} \le 1/r^{3 C + 1}.
    $$

    \noindent Using a union bound, with probability $1 - r^{-3 C}$, $|B_i| = (1 \pm 0.1) \cdot r / \ell$ for every $i \in [\ell]$.
\end{enumerate}

\subsection{Proof of \Cref{lem:independent_seeds}}\label{sec:proof_independent_seeds}

This proof relies on the extra property of our extractors in \Cref{lem:extractor_unif} and the fact that permuting the input bits will not affect the Nisan-Zuckerman PRG.

This proof has two parts. In the first part, given $X \subseteq [N]$, $y \in X$, $\theta \in [M]$ and the allocation mapping $g$, let $j:=g(y)$. We will prove 
\begin{align}            
    \Pr_{h' \sim \mathcal{H}'}\left[ h'(y) = \theta \wedge \min  h'(X \setminus y)>\theta \right] &
     = \frac{1}{M} \Pr_{\sigma : \text{$0.1 C_s$-wise}}\left[ \min \sigma(B_j) > \theta \right] \notag \\
     & \ \ \ \cdot \prod_{i \ne j} \left( (1 - \theta / M)^{|B_i|} \pm 2 \cdot 2^{-C_s \cdot t} \right) \pm \ell \cdot N^{-0.4 C_e}.  \label{eq:decompose_independent_seeds}
\end{align}

In the second part, we bound the summation of \eqref{eq:decompose_independent_seeds} over all $\theta \in [M]$ and allocations $g$ as $( 1 \pm 2^{-2 C \cdot t} ) \cdot \Pr_{\sigma \sim U}[\sigma(y) < \min \sigma(X \setminus y)]$. Note that \Cref{lem:allocation_small_error} tells we can ignore bad allocations.

\subsubsection{The First Part}\label{sec:first_proof_independent_seed}

We show \eqref{eq:decompose_independent_seeds} via the Nisan-Zuckerman framework. One subtle thing is that the extractor error $\NZerr = 2^{-C_s \cdot t}$ in \eqref{eq:decompose_independent_seeds} is multiplicative with respect to $\Pr_{\sigma : \text{$(0.1 C_s + 1)$-wise}}[\sigma(y) = \theta] = 1 / M$, different from applying the PRG fooling combinatorial rectangles directly. This is shown by permuting the order of $s_1,\ldots,s_\ell$ such that $z_j:=\Ext(w,s_j)$ is the first input of the read-once branching program.

Given the allocation $g$, we consider a width-$2$ length-$\ell$ read-once branching program $P$ whose input alphabet is $\{0, 1\}^{0.3 C_e \cdot \log N}$ corresponding to $z_1,\ldots,z_\ell$. Because of the definition $B_i:=\{x \in X \setminus y:g(x)=i\}$ and $h'(i):=\sigma_{g(i)}(i)$ for $\sigma_1:=G_{z_1},\ldots,\sigma_\ell=G_{z_\ell}$, we fix $g$ and rewrite $(h'(y)=\theta \wedge \min h'(X \setminus y)>\theta)$ as the conjunction of $\ell$ events depending on $z_1,\ldots,z_\ell$:
$$
\underbrace{\left( \sigma_{j}(y) = \theta \wedge \min \sigma_{j}(B_j) > \theta \right)}_{A_1} \wedge \cdots \wedge \underbrace{\left( \min \sigma_{\ell}(B_\ell) > \theta \right)}_{A_{\ell - j + 1}} \wedge \cdots \wedge \underbrace{\left( \min \sigma_{j - 1}(B_{j - 1}) > \theta \right)}_{A_\ell}.
$$
First of all, choosing this order will guarantee that $\sigma_{j}(y)=\theta$ is in the first event $A_1$. Secondly, this is a ROBP of width-2 with input $z_1,\ldots,z_\ell$.

Since $\sigma_i = G_{z_i}$ for $z_i = \Ext(w, s_i)$,
\begin{align}   
& \Pr_{w \sim U, (s_1, \ldots, s_\ell) \sim U : z_i = \Ext(w, s_i)}\left[ (\sigma_j(y) = \theta \wedge \min \sigma_j(B_j) > \theta) \wedge (\min \sigma_i(B_i) > \theta ,\ \forall i \ne j) \right] \notag \\
= & \Pr_{w \sim U, (s_1, \ldots, s_\ell) \sim U : z_i = \Ext(w, s_i)}\left[ A_1 \wedge A_2 \wedge \cdots \wedge A_\ell \right] \label{eq:ROBP} 
\end{align}
has correlated functions $\sigma_1, \ldots, \sigma_\ell$ generated from the Nisan-Zuckerman PRG with an extractor error $\NZerr$. However, because the \emph{first} input $z_j = \Ext(w, s_j)$ is uniform by \Cref{lem:extractor_unif}, the first function $\sigma_j = G_{z_j}$ guarantees that the first event happens as same as a uniform sampling in $\mathcal{G}$:
$$
\Pr_{w, s_j}[A_1] = \Pr_{w,s_j}[ \sigma_j(y) = \theta \wedge \min \sigma_j(B_j) > \theta ] = \Pr_{\sigma \sim \mathcal{G}}[ \sigma(y) = \theta \wedge \min \sigma(B_j) > \theta].
$$

Then we consider the rest events. Similar to the analysis of the Nisan-Zuckerman PRG, there are two cases for each $i = 2, 3, \ldots, \ell$: 
\begin{enumerate}
    \item When $\Pr_{w, s_j \ldots}[A_1 \wedge \cdots \wedge A_{i - 1}] \ge 2^{-0.4 C_e \cdot \log N}$, consider  the distribution of $w$ conditioned upon $A_1 \wedge \cdots \wedge A_{i - 1}$. Note that the conditioning only increases the probability of each value of $w$ by a factor of at most $2^{0.4 C_e \cdot \log N}$. Hence this conditional distribution has min-entropy at least $0.6 C_e \cdot \log N$. Then by the property of the extractor, we have
    $$
    \Pr_{w, s_j \ldots}[A_i \mid A_1 \wedge \cdots \wedge A_{i - 1}] = \Pr_{z_{j + i - 1} \sim U}[A_i] \pm \NZerr.
    $$

    Hence
    \begin{align*}
        \Pr_{w, s_j \ldots}[A_1 \wedge \cdots \wedge A_i] & = \Pr_{w, s_j \ldots}[A_1 \wedge \cdots \wedge A_{i - 1}] \cdot \Pr_{w, s_j \ldots}[A_i \mid A_1 \wedge \cdots \wedge A_{i - 1}] \\
         & = \Pr_{w, s_j \ldots}[A_1 \wedge \cdots \wedge A_{i - 1}] \cdot \left( \Pr_{z_{j + i - 1} \sim U}[A_i] \pm \NZerr \right).
    \end{align*}

    \item Otherwise, $\Pr_{w, s_j \ldots}[A_1 \wedge \cdots \wedge A_{i - 1}] < 2^{-0.4 C_e \cdot \log N}$ indicates
    $\Pr_{w, s_j \ldots}[A_1 \wedge \cdots \wedge A_i] \le 2^{-0.4 C_e \cdot \log N}$.
\end{enumerate}

Combining two cases together, we can write the the acceptance of the first $i$ events as
$$
\Pr_{w,s_j,\ldots}[A_1 \wedge \cdots \wedge A_i] = \Pr_{w,s_j,\ldots}[A_1 \wedge \cdots \wedge A_{i - 1}] \cdot \left( \Pr_{z_{j + i - 1} \sim U}[A_i] \pm \NZerr \right) \pm 2^{-0.4 C_e \cdot \log N}.
$$

Then by induction on $i$, the probability in \eqref{eq:ROBP} is expressed as
\begin{align*}
      & \Pr_{z_j \sim U}[A_1] \cdot \left( \Pr_{z_{j + 1} \sim U}[A_2] \pm \NZerr \right) \cdots \left( \Pr_{z_{j - 1} \sim U}[A_\ell] \pm \NZerr \right) \pm \ell \cdot 2^{-0.4 C_e \cdot \log N} \\
    = & \Pr_{\sigma \sim \mathcal{G}}\left[ \sigma(y) = \theta \wedge \min \sigma(B_j) > \theta \right] \cdot \prod_{i \ne j} \left( \Pr_{\sigma \sim \mathcal{G}}\left[ \min \sigma(B_i) > \theta \right] \pm \NZerr \right) \pm \ell \cdot N^{-0.4 C_e}. 
\end{align*}

Since $\mathcal{G}$ is $(0.1 C_s + 1)$-wise independent, for $A_1$, it follows that
$$
\Pr_{\sigma \sim \mathcal{G}}[\sigma(y) = \theta \wedge \min \sigma(B_j) > \theta] = \frac{1}{M} \Pr_{\sigma : 0.1 C_s\text{-wise}}[\min \sigma(B_j) > \theta].
$$

$\mathcal{G}$ is also the PRG for combinatorial rectangles with error $2^{-C_s \cdot t}$, so for the remaining events, we simplify their probabilities as
$$
\Pr_{\sigma \sim \mathcal{G}}[\min \sigma(B_i) > \theta] \pm \NZerr = \Pr_{\sigma \sim U}[\min \sigma(B_i) > \theta] \pm 2^{-C_s \cdot t} \pm \NZerr = (1 - \theta / M)^{|B_i|} \pm 2 \cdot 2^{-C_s \cdot t}.
$$

This shows \eqref{eq:decompose_independent_seeds}:
\begin{align*}
    \Pr_{h' \sim \mathcal{H}'}\left[ h'(y) = \theta \wedge \min  h'(X \setminus y)>\theta \right] & =
    \dfrac{1}{M} \Pr_{\sigma : \text{$0.1 C_s$-wise}}\left[ \min \sigma(B_j) > \theta \right] \notag \\
     & \quad \cdot \prod_{i \ne j} \left( (1 - \theta / M)^{|B_i|} \pm 2 \cdot 2^{-C_s \cdot t} \right) \pm \ell \cdot N^{-0.4 C_e}.
\end{align*}

\subsubsection{The Second Part}\label{sec:second_proof_independent_seeds}

Here we show the summation of \eqref{eq:decompose_independent_seeds} over all $\theta$'s and allocations $g$ equals $(1 \pm 2^{-2 C \cdot t}) \cdot \Pr_{\sigma \sim U}[\sigma(y) < \min \sigma(X \setminus y)]$. This indicates $\Pr_{h' \sim \mathcal{H}'}[h'(y) < \min h'(X \setminus y)]=(1 \pm 2^{-2 C \cdot t}) \cdot \Pr_{\sigma \sim U}[\sigma(y) < \min \sigma(X \setminus y)]$ by the first part of this proof and finishes the proof of \Cref{lem:independent_seeds} --- $\mathcal{H'}$ has a multiplicative error $2^{-2 C \cdot t}$.

For convenience, set $\varepsilon := 2 \cdot 2^{-C_s \cdot t}$ in this proof. Our rough plan is to apply the first approximation of \Cref{lem:t_wise_ind} for small $\theta$ to show that
$$
\Pr_{\sigma : 0.1 C_s \text{-wise}}[ \min \sigma(B_j) > \theta ] \cdot \prod_{i \ne j} \left((1 - \theta / M)^{|B_i|} \pm \varepsilon \right) \approx (1 - \theta / M)^{|X| - 1}.
$$

For large $\theta$ such that $(1-\theta/M)^{|X|-1}$ is tiny, we choose a suitable subset $T \subseteq [\ell] \setminus j$, and use the following fact to bound the tail probability:
$$
\Pr_{\sigma : 0.1 C_s \text{-wise}}[ \min \sigma(B_j) > \theta ] \cdot \prod_{i \ne j} \left((1 - \theta / M)^{|B_i|} \pm \varepsilon \right) \le \prod_{i \in T} \left( (1 - \theta / M)^{|B_i|} + \varepsilon \right).
$$

The actual calculation depends on the size of $X$. According to \Cref{lem:allocation_small_error}, we will split the calculations into three cases: (1) $|X| \le \ell^{0.9}$; (2) $|X| \in (\ell^{0.9}, \ell^{1.1})$; (3) $|X| \ge \ell^{1.1}$.

Moreover, since we have assumed $M = (N / \delta)^{O(1)}$ such that $\Pr_{\sigma \sim U}[\sigma(y) < \min \sigma(X \setminus y)] \approx 1 / |X|$, we treat multiplicative errors as additive errors multiplied by a factor of $|X|$ in this analysis.

\paragraph{The first case of $|X| \le \ell^{0.9}$.} We assume that each bucket has at most $C_g$ elements in $X \setminus y$ according to the first property of \Cref{lem:allocation_small_error}. The corresponding failure probability will change the final multiplicative error by at most $|X| / \ell^{3 C} \le 1 / \ell^{3 C - 0.9} < 2^{-2.5 C \cdot t}$.

As $C_s \gg C_g$, we assume $0.1 C_s > C_g \ge |B_j|$. Thus $\Pr_{\sigma : \text{$0.1 C_s$-wise}}[ \min \sigma(B_j) > \theta] = (1 - \theta / M)^{|B_j|}$ and
$$
\Pr_{\sigma : \text{$0.1 C_s$-wise}}[\min \sigma(B_j) > \theta] \cdot \prod_{i \ne j} \left( (1 - \theta / M)^{|B_i|} \pm \varepsilon \right) = \prod_{i \in [\ell]} \left( (1 - \theta / M)^{|B_i|} \pm \varepsilon \right).
$$

When $(1 - \theta / M)^{C_g} > 2^{-0.5 C_s \cdot t}$, we simplify the above additive error $\varepsilon = 2 \cdot 2^{-C_s \cdot t}$ as
$$
(1 - \theta / M)^{|B_i|} \pm \varepsilon = (1 - \theta / M)^{|B_i|} \cdot (1 \pm 2 \cdot 2^{-0.5 C_s\cdot t} ),\ \forall i \in [\ell].
$$

\noindent Hence
$$
\sum_{(1 - \theta / M)^{C_g} > 2^{-0.5 C_s \cdot t}} \prod_{i \in [\ell]} \left( (1 - \theta / M)^{|B_i|} \pm \varepsilon \right) =(1 \pm 4 \cdot 2^{-(0.5 C_s - 1) \cdot t}) \cdot \sum_{(1 - \theta / M)^{C_g} > 2^{-0.5 C_s \cdot t}} (1 - \theta / M)^{|X| - 1}.
$$

Otherwise, $\theta$ is large enough such that $(1 - \theta / M)^{C_g} \le 2^{-0.5 C_s \cdot t}$. Note that each term in the summation does not exceed $1$, and the number of such $\theta$'s is at most $M \cdot 2^{-0.5 C_s \cdot t / C_g}$. Thus we simply bound the sum over such $\theta$'s as
\begin{align*}
    \frac{1}{M} (1 - \theta / M)^{|X| - 1} = \frac{1}{M} \sum_{(1 - \theta / M)^{C_g} \le 2^{-0.5 C_s \cdot t}} \prod_{i \in [\ell]} \left( (1 - \theta / M)^{|B_i|} \pm \varepsilon \right) \le 2^{-0.5 C_s \cdot t / C_g}.
\end{align*}

Now we have 
\begin{align*}
      & \Pr_{h' \sim \mathcal{H}'}\left[ h'(y) = \theta \wedge \min  h'(X \setminus y)>\theta \right] \\
    = &
    \frac{1}{M} \left( \sum_{(1 - \theta / M)^{C_g} > 2^{-0.5 C_s \cdot t}} \prod_{i \in [\ell]} \left( (1 - \theta / M)^{|B_i|} \pm \varepsilon \right) + \sum_{(1 - \theta / M)^{C_g} \le 2^{-0.5 C_s \cdot t}} \prod_{i \in [\ell]} \left( (1 - \theta / M)^{|B_i|} \pm \varepsilon \right) \right) \pm 1/\ell^{3C} \\
    = &  \frac{1 \pm 4 \cdot 2^{-(0.5 C_s - 1) \cdot t}}{M} \cdot \sum_{(1 - \theta / M)^{C_g} > 2^{-0.5 C_s \cdot t}} (1 - \theta / M)^{|X| - 1} \pm 2 \cdot 2^{-0.5 C_s \cdot t / C_g} \pm 1/\ell^{3C}.
\end{align*}
Compared to the fair probability $\sum_{\theta \in [M]} \frac{1}{M} \cdot (1 - \theta / M)^{|X| - 1}$, the multiplicative error (w.r.t. $1/|X|$) of \eqref{eq:decompose_independent_seeds} is at most
$$
4 \cdot 2^{-(0.5 C_s - 1) \cdot t} +  \dfrac{1}{\ell^{3 C - 0.9}} + |X| \cdot 2^{-0.5 C_s \cdot t / C_g} \le 2^{-(0.5 C_s - 1) \cdot t + 2} + 2^{-2.5 C \cdot t} + 2^{-(0.5 C_s / C_g - 1) \cdot t} < 2^{-2 C \cdot t}.
$$
In the last step, We choose $C_s \gg C_g$ such that $0.5 C_s / C_g > 3 C$.

\paragraph{The second case of $|X| \in (\ell^{0.9}, \ell^{1.1})$.} Similarly, we assume $\max_{i \in [\ell]} |B_i|$ is at most $2 \ell^{0.1}$ by \Cref{lem:allocation_small_error}. The failure probability only affects the multiplicative error by at most $|X| / \ell^{3 C} \le 1 / \ell^{3 C - 1.1 } < 2^{-2.5 C \cdot t}$. 

When $\theta \le 0.5 M \cdot \ell^{-0.2}$, there comes $(1 - \theta / M)^{|B_i|} \ge 1 - |B_i| \cdot \theta / M \ge 1 - \ell^{-0.1} \ge 0.5,\ \forall i \in [\ell]$. We apply the first statement in \Cref{lem:t_wise_ind} to estimate $\Pr_{\sigma : \text{$0.1 C_s$-wise}}[ \min \sigma(B_j) > \theta ]$:
\begin{align*}
    \Pr_{\sigma : \text{$0.1 C_s$-wise}}[\min \sigma(B_j) > \theta] & = (1 - \theta / M)^{|B_j|} \pm \left( |B_j| \cdot \dfrac{\theta}{M} \right)^{0.1 C_s} \\
     & = (1 - \theta / M)^{|B_j|} \cdot \left( 1 \pm \dfrac{2}{\ell^{0.01 C_s}} \right).
\end{align*}

\noindent For the remaining buckets, we have
$$
(1 - \theta / M)^{|B_i|} \pm \varepsilon = (1 - \theta / M)^{|B_i|} \cdot \left( 1 \pm 2 \varepsilon \right),\ \forall i \ne j.
$$

\noindent Therefore, for small $\theta$ with $\theta / M \le 0.5 \ell^{-0.2}$, it holds that
\begin{align*}
     & \dfrac{1}{M} \sum_{\theta \le 0.5 M \cdot \ell^{-0.2}} \Pr_{\sigma : \text{$0.1 C_s$-wise}}[\min \sigma(B_j) > \theta] \cdot \prod_{i \ne j} \left( (1 - \theta / M)^{|B_i|} \pm \varepsilon \right) \\
    = & \dfrac{1}{M} \sum_{\theta \le 0.5 M \cdot \ell^{-0.2}} (1 - \theta / M)^{|X| - 1} \cdot \left( 1 \pm \left( \dfrac{4}{\ell^{0.01 C_s}} + 4 (\ell - 1) \varepsilon \right) \right) \\
    = & \dfrac{1}{M} \sum_{\theta \le 0.5 M \cdot \ell^{-0.2}} (1 - \theta / M)^{|X| - 1} \cdot \left( 1 \pm 2^{-0.001 C_s \cdot t} \right).
\end{align*}

Otherwise, when $\theta > 0.5 M \cdot \ell^{-0.2}$, we show the tail summations are small in both cases of $\sigma \sim U$ and $h' \sim \mathcal{H}'$, leaving a negligible additive error.

Note that both $\frac{1}{M} \sum_{\theta > 0.5 M \cdot \ell^{-0.2}} (1 - \theta / M)^{|X| - 1}$ and $\frac{1}{M} \sum_{\theta > 0.5 M \cdot \ell^{-0.2}} \Pr_{\sigma : \text{$0.1 C_s$-wise}}[\min \sigma(B_j) > \theta] \cdot \prod_{i \ne j} \left( (1 - \theta / M)^{|B_i|} \pm \varepsilon \right)$ are upper bounded by $\frac{1}{M} \sum_{\theta > 0.5 M \cdot \ell^{-0.2}} \prod_{i \ne j} \left( (1 - \theta / M)^{|B_i|} + \varepsilon \right)$. Hence, we focus on the latter one, $\frac{1}{M} \sum_{\theta > 0.5 M \cdot \ell^{-0.2}} \prod_{i \ne j} \left( (1 - \theta / M)^{|B_i|} + \varepsilon \right)$.

Consider the sizes of all buckets, say $|B_1|, \ldots, |B_\ell|$, and define $b$ to be the $S:=C' \cdot \log \log N$ largest number among $|B_1|, \ldots, |B_\ell|$ excluding $|B_j|$. Without loss of generality, assume $j = \ell$ and $2 \ell^{0.1} \ge |B_1 | \ge |B_2 | \ge \cdots \ge |B_{\ell - 1}|$, then $b = |B_S|$. 

We split the sum into two cases depending on whether $(1 - \theta / M)^b > \varepsilon \cdot \ell$.
\begin{enumerate}
    \item For $(1 - \theta / M)^b > \varepsilon \cdot \ell$, we further simplify it as
    \begin{align*}
         & \dfrac{1}{M} \sum_{\theta > 0.5 M \cdot \ell^{-0.2} \wedge (1 - \theta / M)^b > \varepsilon \cdot \ell} \prod_{i \ne j} \left( (1 - \theta / M)^{|B_i|} + \varepsilon \right) \\
        \le & \dfrac{1}{M} \sum_{\theta > 0.5 M \cdot \ell^{-0.2} \wedge (1 - \theta / M)^b > \varepsilon \cdot \ell} \prod_{i = S + 1}^{\ell - 1} (1 - \theta / M)^{|B_i|} \cdot (1 + 1 / \ell) \\
        \le & \dfrac{1}{M} \sum_{\theta > 0.5 M \cdot \ell^{-0.2} \wedge (1 - \theta / M)^b > \varepsilon \cdot \ell} e \cdot (1 - \theta / M)^{|X| - 1 - (S + 1) \cdot 2 \ell^{0.1}} \\
        \le & \dfrac{e}{M} \sum_{\theta > 0.5 M \cdot \ell^{-0.2} \wedge (1 - \theta / M)^b > \varepsilon \cdot \ell} (1-\theta/M)^{0.9 |X|}\\
        \le & \exp( 1 - 0.9 |X| \cdot 0.5 \ell^{-0.2} ) \le \exp(-\ell^{0.6}).
    \end{align*}

    \item For $(1 - \theta / M)^b \le \varepsilon \cdot \ell$, we bound it as
    \begin{align*}
         & \dfrac{1}{M} \sum_{\theta > 0.5 M \cdot \ell^{-0.2} \wedge (1 - \theta / M)^b \le \varepsilon \cdot \ell} \prod_{i \ne j} \left( (1 - \theta / M)^{|B_i|} + \varepsilon \right) \\
        \le & \dfrac{1}{M} \sum_{\theta > 0.5 M \cdot \ell^{-0.2} \wedge (1 - \theta / M)^b \le \varepsilon \cdot \ell} \prod_{i = 1}^S (\varepsilon \cdot \ell + \varepsilon) \\
        \le & \dfrac{1}{M} \sum_{\theta > 0.5 M \cdot \ell^{-0.2} \wedge (1 - \theta / M)^b \le \varepsilon \cdot \ell} \prod_{i = 1}^S 2^{2 - (C_s - 1) \cdot t} \\
        \le & 2^{2 S - (C_s - 1) \cdot t \cdot S} = N^{-O(1)},
    \end{align*}
    where we plug the definition of $S = C' \cdot \log \log N$ in the last step.
\end{enumerate}

So, the total multiplicative error is bounded by
$$
\dfrac{1}{\ell^{3 C - 1.1}} + 2^{-0.001 C_s \cdot t} + |X| \cdot \left( \exp(-\ell^{0.6}) + N^{-O(1)} \right) < 2^{-2 C \cdot t}.
$$

\paragraph{The third case of $|X| \ge \ell^{1.1}$.} The proof for this case is identical to the second case. Here, the max-load is bounded according to $1.1 \cdot |X| / \ell$ by \Cref{lem:allocation_small_error}. The threshold of applying the tail bound would be between $\frac{|X|}{\ell} \cdot \frac{\theta}{M} \le \ell^{-0.1}$ and  $\frac{|X|}{\ell} \cdot \frac{\theta}{M} > \ell^{-0.1}$, while the rest calculation is very similar and we leave it in \hyperref[sec:proof_X_big_ind_seed]{Appendix B}.

\subsection{Proof of \Cref{clm:dependent_seeds}}\label{sec:proof_dependent_seeds}

After fixing the allocation $g$, since we have proven $\Pr_{h'\sim \mathcal{H}'_g}[h'(y)=\theta \wedge \min h'(X \setminus y)>\theta]$ equals 
\begin{equation}\label{eq:prob_h'}
    \frac{1}{M} \Pr_{\sigma : \text{$0.1 C_s$-wise}}\left[ \sigma(y)=\theta \wedge \min \sigma(B_j) > \theta \right] \cdot \prod_{i \ne j} \left( (1 - \theta / M)^{|B_i|} \pm 2 \cdot 2^{-C_s \cdot t} \right) \pm \ell \cdot N^{-0.4 C_e}    
\end{equation}
in \Cref{sec:first_proof_independent_seed}  (as equation \eqref{eq:decompose_independent_seeds}), this proof will show the error between $\Pr_{h \sim \mathcal{H}_g}[h(y)=\theta \wedge \min h(X \setminus y)>\theta]$ and \eqref{eq:prob_h'} is at most $\Pr_{\sigma \sim \mathcal{G}}[ \sigma(y)=\theta \wedge \min \sigma(B_j) > \theta] \cdot 2^{-C_s \cdot t}$.

We need the following property of $s_1, \ldots, s_\ell$ when they are generated by $\PRG_1$.

\begin{fact}\label{fact:PRG_CR_LOG}
    Those random seeds $s_1,\ldots,s_\ell$ satisfy that  for any fixed $j \in [\ell]$, any string $\alpha \in \{0, 1\}^{C_e \cdot t}$, and any $\ell$ functions $f_1, \ldots, f_\ell : \{0, 1\}^{C_e \cdot t} \to \{0, 1\}$, it has
    $$
    \E_{(s_1,\ldots,s_\ell) \sim \PRG_1}\left[ \prod_{i \neq j}f_i(s_i)\ \middle|\ s_j=\alpha \right] = \prod_{i \neq j} \E_{s_i \sim U}[f_i(s_i)] \pm 2^{-C_s \cdot t}.
    $$
\end{fact}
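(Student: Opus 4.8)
The plan is to reduce \Cref{fact:PRG_CR_LOG} to the defining guarantee of $\PRG_1$ as a pseudorandom generator for combinatorial rectangles over the \emph{block} alphabet $\{0,1\}^{C_e\cdot t}$ (equivalently $[2^{C_e\cdot t}]$), in $\ell=2^t$ dimensions, with additive error $\varepsilon_1:=2^{-(C_s+C_e)\cdot t-2}$. First I would fix $j$, $\alpha$, and $f_1,\ldots,f_\ell$, write $P:=\prod_{i\ne j}\E_{s_i\sim U}[f_i(s_i)]\in[0,1]$, and rewrite the conditional expectation as a ratio
\[
\E_{(s_1,\ldots,s_\ell)\sim\PRG_1}\!\left[\prod_{i\ne j}f_i(s_i)\ \middle|\ s_j=\alpha\right]=\frac{\E_{(s_1,\ldots,s_\ell)\sim\PRG_1}\!\left[\mathbf{1}(s_j=\alpha)\cdot\prod_{i\ne j}f_i(s_i)\right]}{\Pr_{(s_1,\ldots,s_\ell)\sim\PRG_1}[s_j=\alpha]}.
\]

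The key observation is that both the numerator's integrand (coordinate $j$ tests membership in the singleton $\{\alpha\}$, each coordinate $i\ne j$ applies $f_i$) and the denominator's integrand (coordinate $j$ tests $\{\alpha\}$, every other coordinate uses the all-ones function) are combinatorial rectangles over $([2^{C_e\cdot t}])^{\ell}$. Hence $\PRG_1$ fools each within $\varepsilon_1$. Evaluating the same two quantities under genuinely independent uniform blocks — where $\Pr[s_j=\alpha]=2^{-C_e\cdot t}$ and the factors $f_i(s_i)$ are independent — gives numerator $=2^{-C_e\cdot t}\cdot P$ and denominator $=2^{-C_e\cdot t}$. Combining, under $\PRG_1$ the numerator is $2^{-C_e\cdot t}P\pm\varepsilon_1$ and the denominator is $2^{-C_e\cdot t}\pm\varepsilon_1$.

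The final step is error bookkeeping for the division. Writing $x:=\varepsilon_1\cdot 2^{C_e\cdot t}=2^{-C_s\cdot t-2}\le 1/4$, the denominator is $2^{-C_e\cdot t}(1\pm x)$, hence bounded away from $0$ (so the conditioning is well defined), and the ratio lies in $\bigl[\tfrac{P-x}{1+x},\,\tfrac{P+x}{1-x}\bigr]$. Using $P\le 1$, $\tfrac1{1-x}\le 1+2x$ and $\tfrac1{1+x}\ge 1-x$ for $x\le 1/4$ (and, for the lower side, that the conditional expectation is nonnegative anyway, which handles the degenerate case $P<x$), both endpoints are within $4x=2^{-C_s\cdot t}$ of $P$, which is the claim. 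I do not expect a genuine obstacle: the only points needing care are recognizing a single-coordinate indicator as a combinatorial rectangle (by padding the other coordinates with all-ones functions) and keeping the division from amplifying the error — which is precisely why the construction gives $\PRG_1$ the extra-small error $2^{-(C_s+C_e)\cdot t-2}$ rather than merely $2^{-C_s\cdot t}$.
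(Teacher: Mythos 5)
Your proof is correct and follows essentially the same route as the paper: rewrite the conditional expectation as a ratio, observe that both the numerator's integrand $\mathbf{1}(s_j=\alpha)\cdot\prod_{i\ne j}f_i(s_i)$ and the denominator's integrand $\mathbf{1}(s_j=\alpha)$ are combinatorial rectangles over $\bigl(\{0,1\}^{C_e\cdot t}\bigr)^{\ell}$, apply the $\varepsilon_1=2^{-(C_s+C_e)\cdot t-2}$ guarantee of $\PRG_1$ to each, and do the division bookkeeping with $x=\varepsilon_1\cdot 2^{C_e\cdot t}=2^{-C_s\cdot t-2}$. The only cosmetic difference is that you track the endpoints $\frac{P\pm x}{1\mp x}$ explicitly and handle the degenerate case $P<x$ by nonnegativity, whereas the paper bounds the one-sided deviation $\frac{P+x}{1-x}-P$ directly using $P\le 1$; both yield the final error $\le 4x=2^{-C_s\cdot t}$.
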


\begin{proof}
    Note that both $\mathbf{1}(s_j = \alpha)$ and $\mathbf{1}(s_j = \alpha) \cdot \prod_{i \ne j} f_i(s_i)$ are combinatorial rectangles in $\left( \{0, 1\}^{C_e \cdot t} \right)^\ell$. According to the property of $\PRG_1$, we have
    $$
    \Pr_{s_j \sim \PRG_1}[s_j = \alpha]=2^{-C_e \cdot t} \pm 2^{-(C_s + C_e) \cdot t - 2}, \text{\ and}
    $$
    $$
    \E_{(s_1, \ldots, s_\ell) \sim \PRG_1}\left[ \mathbf{1}(s_j = \alpha) \cdot \prod_{i \neq j} f_i(s_i) \right] = \E_{(s_1,\ldots,s_\ell) \sim U}\left[ \mathbf{1}(s_j = \alpha) \cdot \prod_{i \neq j} f_i(s_i) \right] \pm 2^{-(C_s + C_e) \cdot t - 2}.
    $$

    So we can express the conditional expectation as
    \begin{align*}
        \E_{(s_1,\ldots,s_\ell) \sim \PRG_1}\left[ \prod_{i \neq j}f_i(s_i)\ \middle|\ s_j=\alpha \right] & = \frac{\E_{(s_1,\ldots,s_\ell) \sim \PRG_1}\left[ \mathbf{1}(s_j = \alpha) \cdot \prod_{i \neq j}f_i(s_i) \right]}{\Pr_{s_j \sim \PRG_1}[s_j = \alpha]} \\
         & = \frac{\prod_{i \ne j} \E_{s_i \sim U}[f_i(s_i)] \pm 2^{-C_s \cdot t - 2}}{1 \pm 2^{-C_s \cdot t - 2}}.
    \end{align*}

    \noindent Note that $\prod_{i \ne j} \E_{s_i \sim U}[f_i(s_i)] \le 1$, which tells that the additive error is at most
    $$
    \frac{\prod_{i \ne j} \E_{s_i \sim U}[f_i(s_i)] + 2^{-C_s \cdot t - 2}}{1 - 2^{-C_s \cdot t - 2}} - \prod_{i \ne j} \E_{s_i \sim U}[f_i(s_i)] \le \left( \frac{1}{1 - 2^{-C_s \cdot t - 2}} - 1 \right) + \frac{2^{-C_s \cdot t - 2}}{1 - 2^{-C_s \cdot t - 2}} \le 2^{-C_s \cdot t},
    $$
    as desired.
\end{proof}

Now we are ready to finish the proof of \Cref{clm:dependent_seeds} in this section. We rewrite the probability $\Pr_{h \sim \mathcal{H}_g}[h(y) = \theta \wedge \min h(X  \setminus y) > \theta]$ as (recall $\sigma_i:=G_{z_i}$ for $z_i:=\Ext(w,s_i)$)
\begin{align}
      & \Pr_{w \sim U, (s_1, \ldots, s_\ell) \sim \PRG_1} \left[ (\sigma_j(y) = \theta \wedge \min \sigma_j(B_j) > \theta) \wedge (\min \sigma_i(B_i) > \theta,\ \forall i \ne j)  \right] \notag \\
    = & \sum_\alpha \E_{w \sim U}\left[ \Pr_{(s_1, \ldots, s_{\ell}) \sim \PRG_1}[(s_j = \alpha \wedge \sigma_j(y) = \theta \wedge \min \sigma_j(B_j) > \theta) \wedge (\min \sigma_i(B_i) > \theta,\ \forall i \ne j)] \right] \notag \\
    = & \sum_\alpha \E_{w \sim U} \left[ \Pr_{s_j \sim \PRG_1}[ s_j = \alpha \wedge \sigma_j(y) = \theta \wedge \min \sigma_j(B_j)>\theta ] \cdot \Pr_{(s_1, \ldots, s_\ell) \sim \PRG_1}[ \min \sigma_i(B_i) > \theta,\ \forall i \ne j \mid s_j = \alpha ] \right], \label{eq:Before_PRG_RC}
\end{align}

\noindent where the last line applies the fact that only those $s_j = \alpha$ which make $(\sigma_j(y) = \theta \wedge \min \sigma_j(B_j) > \theta)$ true have contributions to the expectation.

For a fixed $w$, the first event $(\sigma_j(y) = \theta \wedge \min \sigma_j(B_j) > \theta)$ is determined since $s_j=\alpha$ is fixed. Then the rest events $\mathbf{1}(\min \sigma_i(B_i) > \theta,\ \forall i \ne j)$ for $\sigma_i = G_{z_i}$ and $z_i = \Ext(w, s_i)$ constitute a combinatorial rectangle of $s_1,\ldots,s_\ell$ in $\left( \{0, 1\}^{C_e \cdot t} \right)^\ell$. Then by \Cref{fact:PRG_CR_LOG},
$$
\Pr_{(s_1, \ldots, s_\ell) \sim \PRG_1} [ \min \sigma_i(B_i) > \theta,\ \forall i \ne j \mid s_j = \alpha ] = \prod_{i \ne j}\Pr_{s_i \sim U}[\min \sigma_i(B_i) > \theta ] \pm 2^{-C_s \cdot t}.
$$

So we apply this to simplify \eqref{eq:Before_PRG_RC} as
\begin{align}
      & \sum_\alpha \E_{w \sim U} \left[ \Pr_{s_j \sim \PRG_1} [ s_j = \alpha \wedge \sigma_j(y) = \theta \wedge \min \sigma_j(B_j) > \theta  ] \cdot \left( \prod_{i \ne j}\Pr_{s_i \sim U}[\min \sigma_i(B_i) > \theta] \pm 2^{-C_s \cdot t} \right) \right] \notag \\
    = & \sum_\alpha \Pr_{w \sim U, s_j \sim \PRG_1}[s_j = \alpha \wedge \sigma_j(y) = \theta \wedge \min \sigma_j(B_j) > \theta] \notag \\
      & \qquad \cdot \left( \Pr_{w \sim U, (s_i)_{i \ne j} \sim U}[\min \sigma_i(B_i) > \theta,\ \forall i \ne j \mid \sigma_j(y) = \theta \wedge \min \sigma_j(B_j) > \theta] \pm 2^{-C_s \cdot t} \right). \label{eq:before_NZ_PRG}
\end{align}


We use properties of our extractor to simplify \eqref{eq:before_NZ_PRG}. Since $\Ext(w, \alpha) = U$ when $w$ is uniform and $\alpha$ is fixed, $\sigma_j$ in the first event $(\sigma_j(y) = \theta \wedge \min \sigma_j(B_j) > \theta)$ conditioned on $s_j = \alpha$ is uniformly sampled from $\mathcal{G}$. Thus this event holds with probability
\begin{align*}
     & \Pr_{w \sim U, s_j \sim \textsf{PRG}_1}[s_j = \alpha \wedge \sigma_j(y) = \theta \wedge \min \sigma_j(B_j) > \theta] \\
    = & \Pr_{s_j \sim \textsf{PRG}_1}[s_j = \alpha] \cdot \Pr_{w \sim U, s_j \sim \PRG_1}[\sigma_j(y) = \theta \wedge \min \sigma_j(B_j) > \theta \mid s_j = \alpha] \\
    = & \Pr_{s_j \sim \textsf{PRG}_1}[s_j = \alpha] \cdot \Pr_{\sigma \sim \mathcal{G}}[\sigma(y) = \theta \wedge \min \sigma(B_j) > \theta].
\end{align*}

Next we consider the term $\Pr_{w \sim U, (s_i)_{i \ne j} \sim U}[\min \sigma_i(B_i) > \theta,\ \forall i \ne j \mid \sigma_j(y) = \theta \wedge \min \sigma_j(B_j) > \theta]$ in \eqref{eq:before_NZ_PRG}. Following the same analysis in \Cref{sec:first_proof_independent_seed}, it equals
\begin{align*}
    \prod_{i \ne j} \left( (1 - \theta / M)^{|B_i|} \pm 2 \cdot 2^{-C_s \cdot t} \right) \pm \ell \cdot N^{-0.4 C_e}.
\end{align*}

Combining all equations to simplify \eqref{eq:before_NZ_PRG}, we finish the proof:
\begin{align*}
      & \Pr_{h \sim \mathcal{H}_g}[h(y) = \theta \wedge \min h(X \setminus y) > \theta] \\
    = & \sum_\alpha \Pr_{s_j \sim \PRG_1}[s_j = \alpha] \cdot \Pr_{\sigma \sim \mathcal{G}}[ \sigma(y) = \theta \wedge \min \sigma(B_j) > \theta ] \cdot \left( \prod_{i \ne j} \left( (1 - \theta / M)^{|B_i|} \pm 2 \cdot 2^{-C_s \cdot t} \right) \pm 2^{-C_s \cdot t}\right) \\
    = & \frac{1}{M} \Pr_{\sigma : \text{$0.1 C_s$-wise}}[\min \sigma(B_j) > \theta] \cdot \prod_{i \ne j} \left( (1-\theta/M)^{|B_i|} \pm 2 \cdot 2^{-C_s \cdot t} \right) \pm \Pr_{\sigma \sim \mathcal{G}}[ \sigma(y) = \theta \wedge \min \sigma(B_j) > \theta ] \cdot 2^{-C_s \cdot t},
\end{align*}

\noindent where the first term in the last line matches \eqref{eq:prob_h'}.
\section{$k$-min-wise Hash}\label{sec:k_min_wise}

We use the second approach outlined in \Cref{sec:overview} to construct $k$-min-wise hash in this section. Recall the definition that $\mathcal{H} = \{h_i : [N] \to [M]\}$ is a $k$-min-wise hash family of error $\delta$, if for any $X \subseteq [N]$ and $Y \subseteq X$ of size at most $k$, $\Pr_{h \sim \mathcal{H}}[\max h(Y) < \min h(X \setminus Y)] = \frac{1 \pm \delta}{{|X| \choose |Y|}}$. Without loss of generality, we assume $|Y|=k$ in this section.

\begin{theorem}\label{thm:k_min_wise}
    Given any $N$, $k = \log^{O(1)} N$, and any constant $C$, there exists an explicit $k$-min-wise hash family such that its seed length is $O_C(k \log N)$ and its multiplicative error is $\delta = 2^{-C \cdot \frac{\log N}{\log \log N}}$.
\end{theorem}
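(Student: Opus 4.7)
The plan is to realize Approach 2 from \Cref{sec:overview}. Define $\mathcal{H} = \{h = f + g\}$, where $f : [N] \to [M]$ is drawn from a $T$-wise independent family with $T = \Theta(k)$ (contributing $O(k \log N)$ bits of seed) and $g : [N] \to [M]$ is drawn from (a mild strengthening of) the Nisan--Zuckerman-based family in \Cref{sec:smaller_error}. The total seed length is $O(k \log N)$, matching the $\Omega(k \log N)$ lower bound. The role of $f$ is to make $h(Y)$ exactly uniform over $[M]^k$, while $g$ supplies the pseudorandomness against combinatorial rectangles needed to fool the conditional event $\min h(X \setminus Y) > \max_i \theta_i$.

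The analysis decomposes
\[
\Pr_{h \sim \mathcal{H}}[\max h(Y) < \min h(X \setminus Y)] = \sum_{(\theta_1,\dots,\theta_k) \in [M]^k} \Pr_{h}\bigl[h(Y) = (\theta_1,\dots,\theta_k) \wedge \min h(X \setminus Y) > \max_i \theta_i\bigr].
\]
Since $f$ is $T$-wise independent with $T \geq k$ and independent of $g$, for any fixed $g$ we have $\Pr_f[f(Y) = (\theta_1,\dots,\theta_k) - g(Y)] = 1/M^k$ exactly, which handles the first factor. For the conditional factor, I would fix the bucket allocation $g_{alloc} : [N] \to [\ell]$ with $\ell = 2^{\log N/\log\log N}$ and parallel the per-bucket analysis of \Cref{sec:smaller_error}, now with up to $k$ distinguished buckets (those containing an element of $Y$) instead of just one. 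Two new points arise: (i) each distinguished bucket $j$ has a per-bucket event that is a point event $\sigma_j(y) = \theta_y - f(y)$ conjoined with threshold events on $X \setminus Y \cap B_j$; because $f$ only \emph{shifts} the allowed sets in this rectangle without changing their sizes, this per-bucket event has the same combinatorial size as in the $f=0$ case, and hence is fooled by the per-bucket family $\mathcal{G}$ with the same error; (ii) conditioning on $f(Y) = (\theta_1,\dots,\theta_k) - g(Y)$ leaves $f$ on $X \setminus Y$ as a $(T-k)$-wise source and does not touch the Nisan--Zuckerman source $w$ at all, so the cross-bucket extractor guarantees of \Cref{lem:extractor_unif} continue to apply. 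A $C_g$-wise allocation analysis (the analog of \Cref{lem:allocation_small_error}) guarantees that since $k = \log^{O(1)} N \ll \ell$, with overwhelming probability the elements of $Y$ occupy $k$ distinct buckets and $X \setminus Y$ is well-distributed.

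The main obstacle is the $k$-bucket generalization of \Cref{clm:dependent_seeds} and \Cref{lem:independent_seeds}: the joint probability across the (up to) $k$ distinguished buckets must match the product of the correct fair probabilities with only $\delta$ multiplicative error, and this requires iterating the extractor guarantee through $k$ levels of conditioning. Each level costs at most $O(\log N / \log\log N)$ bits of min-entropy of $w$, so one scales $|w|$ up to $\Theta(k \log N / \log\log N)$, still within the $O(k \log N)$ budget. Similarly, the analog of \Cref{fact:PRG_CR_LOG} must condition on up to $k$ seeds $s_{j_1}, \dots, s_{j_k}$ simultaneously, forcing $\PRG_1$'s additive error to be boosted to roughly $2^{-O(k \log N / \log\log N)}$ so that it is absorbed by the product $2^{-k \cdot C_e \cdot t}$ of conditioning probabilities; by \Cref{thm:PRG_comb_rect} this costs $O(k \log N)$ seed bits, again fitting the budget. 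Finally, summing the per-tuple errors over all $M^k$ choices of $(\theta_1, \dots, \theta_k)$ and averaging over allocations yields the target multiplicative error $\delta = 2^{-C \cdot \log N / \log \log N}$.
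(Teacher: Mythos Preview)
Your approach is essentially the paper's Approach~2 and matches its construction: a global $\Theta(k)$-wise independent function direct-summed with the Nisan--Zuckerman-based hash, with $\PRG_1$'s error boosted to absorb conditioning on the $\le k$ distinguished seeds. The decomposition over $[M]^k$ versus the paper's enumeration of $\theta = \max h(Y) \in [M]$ is a cosmetic difference.

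However, there is one genuine gap. Your claim that ``with overwhelming probability the elements of $Y$ occupy $k$ distinct buckets'' is \emph{not} usable here, and the paper singles out exactly this pitfall. The collision probability among $k$ items in $\ell = 2^t$ buckets is only $\Theta(k^2/\ell) = 2^{-t + O(\log\log N)}$, whereas the additive error budget is $\delta/\binom{|X|}{k}$, which can be as small as $2^{-Ct}/N^k$. For any $C>1$, the former is far too large. The paper therefore makes \emph{no} assumption on the allocation of $Y$: it lets $J = \{g(y):y\in Y\}$ have arbitrary size $k' \le k$, and the key step (\Cref{clm:expand}) fixes the extractor outputs $z_J$ for all of $J$ at once and uses the $(C_e+1)k$-wise independence of $h_0$ to compute $\Pr[\max h(Y)=\theta \wedge \min h(B_J)>\theta]$ exactly as a $t$-wise quantity, regardless of how $Y$ is bucketed. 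Your outline needs this correction to go through.

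A smaller point: your entropy accounting for $w$ is off. The per-distinguished-bucket cost is not $O(\log N/\log\log N)$ bits --- that is the seed length $|s_j|$, not the output length. The paper conditions on $z_J$, which is $|z_J| \le k \cdot C_e \log N$ bits, and hence takes $|w| = \Theta(k \log N)$ rather than $\Theta(k \log N/\log\log N)$. This does not change the asymptotic seed length, but your stated reason (``iterating the extractor guarantee through $k$ levels'') conflates the extractor error with the entropy loss from fixing $z_J$.
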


While our construction is in a similar framework of the construction in \Cref{sec:smaller_error}, there are several differences. The first one is to direct sum the output with an $O(k)$-wise independent function $h_0$ in the last step. The second difference is in the analysis, $Y$ could be in different buckets such that the first approach of guaranteeing its $\Ext(w,s_j)=U$ for $j:=g(y)$ does not work anymore. Instead of it, we will consider the conditional event $\Pr[\min h(X \setminus Y)>\theta \mid \max h(Y)=\theta]$ in this proof.

\begin{tcolorbox}
    \textbf{$k$-min-wise Hash Family: dimension $N$, $k = \log^{O(1)} N$, error $2^{-C \cdot \frac{\log N}{\log \log N}}$, and alphabet $N^{O(1)}$}

    \begin{enumerate}
        \item Set $\ell := 2^{t}$ for $t := \frac{\log N}{\log \log N}$ and pick large constants $C_e$, $C_s$ and $C_g$ such that $C_e > C_s > C_g > C$.
    
        \item Sample a $C_g \cdot k$-wise independent function $g : [N] \to [\ell]$ as the allocation of $[N]$ into $\ell$ buckets.

        \item Apply $\PRG_1$ in \Cref{thm:PRG_comb_rect} fooling combinatorial rectangles of seed length $O(k \log N)$ to generate $\ell$ seeds $s_1,\ldots,s_{\ell}$ in $\{0,1\}^{C_e t}$ of error $2^{-(C_s+C_e) \cdot t \cdot k - 2}$.
        
        \item Sample a source $w \sim \{0,1\}^{10 k \cdot C_e \log N}$ and apply an extractor $\Ext:\{0,1\}^{10 k \cdot C_e \log N} \times \{0,1\}^{C_e t} \rightarrow \{0,1\}^{C_e \cdot \log N}$ from \Cref{lem:extractor_unif} of min-entropy $6 k C_e \log N$ and error $\NZerr=2^{- C_s \cdot  t}$ to $w$ and $s_1,\ldots,s_\ell$: let $z_i := \Ext(w, s_i), \forall i \in [\ell]$. 
        
        \item Let $\sigma_i := \PRG_2(z_i)$ where $\PRG_2$ fools combinatorial rectangles in $[M]^N$ with  error $2^{-C_s \cdot t}$ from \Cref{cor:almost_poly_small_error}.

		\item Define $\varphi(x) := \sigma_{g(x)}(x)$ for every $x$. 
		
        \item Choose $h_0:[N] \rightarrow [M]$ from a $(C_e +1) \cdot k$-wise independent hash family.
            
        \item  Output the direct sum $h := h_0 + \varphi$.
    \end{enumerate}
\end{tcolorbox}

One remark is that Step 3 restrains $k = \log^{O(1)} N$ in order to guarantee the seed length of $\PRG_1$ is $O(k \log N)$ bits. In the rest of this section, we prove \Cref{thm:k_min_wise} and finish its analysis.

Similar to \Cref{lem:allocation_small_error}, we have the following lemma about the allocation of $X$ under $g$. Let $B_i:=\{x \in X \setminus Y: g(x)=i\}$ be the elements in $X \setminus Y$ mapped to bucket $i$ and $J:=\{j_1,\ldots,j_{k'}\}$ be the buckets in $[\ell]$ that contains elements in $Y$ (under $g$), i.e., $J:=\{g(y): y \in Y\}$. And let $B_J := \bigcup_{i = 1}^{k'} B_{j_i}$. 

\begin{lemma}\label{lem:allocation_k_min_wise}
    Let $C_g$ be a sufficiently large constant compared to $C$. Then $g$ guarantees that:
    \begin{enumerate}
        \item When $|X| \le \ell^{0.9}$, with probability $1 - \frac{1}{\ell^{3 C} \cdot |X|^{k}}$, $|B_i| \le C_g + 10 \cdot \frac{k \log |X|}{\log N / \log \log N}$ for all $i \in [\ell]$ and $|B_J| \le C_g \cdot k$.

        \item When $|X| \in (\ell^{0.9}, \ell^{1.1})$, with probability $1 - 1 / \ell^{3 C \cdot k}$, the max-load $\max_{i \in [\ell]} |B_i| \le 2 \ell^{0.1}$.
        
        \item When $|X| \ge \ell^{1.1}$, with probability $1-|X|^{-3 C \cdot k}$, all buckets satisfy $|B_i| = (1 \pm 0.1) \cdot |X| / \ell$.
    \end{enumerate}
\end{lemma}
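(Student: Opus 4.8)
The plan is to follow the template of the proof of \Cref{lem:allocation_small_error} in \Cref{sec:proof_allocation}, replacing the $C_g$-wise independence used there by the stronger $C_g k$-wise independence that $g$ now enjoys. Conceptually, every moment and union-bound estimate there simply acquires an exponent $C_g k$ in place of $C_g$, while the target failure probabilities are now of order $\binom{|X|}{k}^{-1}$ rather than $|X|^{-1}$, so the extra factor of $k$ in the independence is precisely what is needed to compensate. Throughout I write $r := |X \setminus Y| \le |X| - 1$, and I freely use $\ell = 2^t$, $t = \log N / \log\log N$, and $k = \log^{O(1)} N$; in particular $k \le \ell^{o(1)}$ and $\log(C_g k) = O(\log\log N) = o(t)$ for $N$ large. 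A single large constant $C_g = C_g(C)$ is fixed only at the very end, after checking that each of the three cases imposes merely a lower bound of the form $C_g \ge \Omega(C)$.

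For the first case $|X| \le \ell^{0.9}$, the bound $|B_i| \le m$ with $m := C_g + 10\,k\log|X|/t$ follows by union-bounding over $m$-subsets of $X \setminus Y$: since $\log|X| \le 0.9 t$ we have $m \le C_g + 9k \le C_g k$, so $C_g k$-wise independence gives $\Pr[|B_i| \ge m] \le \binom{r}{m}\ell^{-m} \le (|X|/\ell)^m$, and plugging in $m$ and $|X| \le \ell^{0.9}$ one checks this is at most $\ell^{-3C-1}|X|^{-k}$ once $C_g$ is large; a union bound over the $\ell$ buckets then gives failure probability $\le \ell^{-3C}|X|^{-k}$. For the bound $|B_J| \le C_g k$ I would first condition on $g(Y)$, which fixes $J = g(Y)$ with $|J| \le k$, leaving the coordinates of $X \setminus Y$ still $\Theta(k)$-wise independent; then each $x \in X \setminus Y$ lies in a bucket of $J$ with probability $|J|/\ell \le k/\ell$, so (union-bounding over $C_g k$-subsets and over the $\le k^{C_g k}$ ways to match them to $Y$) $\Pr[|B_J| \ge C_g k] \le \binom{r}{C_g k}(k/\ell)^{C_g k} \le (k|X|/\ell)^{C_g k} \le (k\ell^{-0.1})^{C_g k}$, which is again below $\ell^{-3C}|X|^{-k}$ for $C_g$ large, using $k \le \ell^{0.01}$; the factor-$2$ loss from combining the two events is absorbed into the choice of $C_g$.

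For the second and third cases, I would argue exactly as in the corresponding cases of \Cref{lem:allocation_small_error}: fix $i \in [\ell]$, set $Z_x := \mathbf{1}(g(x) = i)$ for $x \in X \setminus Y$ so that $|B_i| = \sum_x Z_x$ with $\E|B_i| = r/\ell$, and invoke the centered-moment bound $\E\big[(|B_i| - \E|B_i|)^{C_g k}\big] \le O\!\big(C_g k \cdot \max(r/\ell, 1)\big)^{C_g k/2}$ (the $C_g k$-wise analogue of the estimate used there; cf.\ \Cref{lem:t_wise_ind}). Applying Markov to the $(C_g k)$-th moment with deviation threshold $\ell^{0.1}$ in the range $|X| \in (\ell^{0.9},\ell^{1.1})$ and $0.1\,r/\ell$ in the range $|X| \ge \ell^{1.1}$, and then using that $r/\ell$ is polynomially large --- $r/\ell \ge \ell^{0.1}/2$ in the middle range, and $r/\ell \ge r^{\Omega(1)}$ when $|X| \ge \ell^{1.1}$ --- one absorbs the $O(C_g k)^{C_g k/2}$ factor (here $\log(C_g k) = o(t) = o(\log r)$ is used) to get per-bucket failure probability below $\ell^{-3Ck-1}$ and $|X|^{-3Ck-1}$ respectively; a union bound over the $\ell$ buckets, together with $\E|B_i| + \ell^{0.1} \le 2\ell^{0.1}$ in the middle range, completes both.

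The routine part of all this is the moment arithmetic, which is essentially copied from \Cref{lem:allocation_small_error}. The two places that genuinely need care --- and which I expect to be the main obstacle --- are: (i) the $|B_J|$ estimate in Case~1, where $J$ is itself random, so one must condition on $g(Y)$ before union-bounding and keep careful track of the extra $\Theta(k)$ coordinates this conditioning and matching touch (which is why the constant $C_g$ rather than the literal $C_g k$ degree of independence has to be taken generously); and (ii) verifying that a single choice of $C_g$ simultaneously meets all three failure-probability targets, the most demanding being Case~3's $|X|^{-3Ck}$, whose proof hinges on $r/\ell$ being polynomially large in $|X|$ (guaranteed by $|X| \ge \ell^{1.1}$) together with $k = \log^{O(1)} N$ being small enough ($k \le \ell^{o(1)}$) that the $O(C_g k)^{C_g k/2}$ moment factor is dwarfed by $(r/\ell)^{C_g k/2}$.
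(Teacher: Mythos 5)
Your proposal follows the paper's own proof essentially verbatim: Case~1 by a union bound over $m$-subsets falling into a common bucket (and a similar bound for $B_J$), Cases~2 and~3 by the $C_g k$-th centered moment bound plus Markov, with the extra factor $k$ in the independence degree and exponents exactly compensating the stronger target $|X|^{-k}$. The one place you are somewhat more careful than the paper is the $|B_J|$ bound, where you explicitly condition on $g(Y)$ to fix $J$ before union-bounding (the paper writes the bound $\binom{r}{C_g k}(k'/\ell)^{C_g k}$ directly without flagging that $J$ is itself random and that conditioning eats a few levels of independence); this is a welcome clarification rather than a deviation.
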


Similar to the analysis in \Cref{thm:min_wise_log_seed}, the failure probability in \Cref{lem:allocation_k_min_wise} is relatively small compared to $\delta/{|X| \choose |Y|} \approx \delta/|X|^k$. So we \emph{fix $g$} and assume all properties in \Cref{lem:allocation_k_min_wise} hold in this section. The proof of this lemma resembles \Cref{lem:allocation_small_error}. We defer its proof to \hyperref[sec:proof_allocation_k_min_wise]{Appendix C}.

However, we can not assume the allocation of $Y$ because we can not compare its failure probability with $1/|X|^k$. For example, the probability that a bucket has $\log \log N$ elements in $Y$ is at least $1/N$ since the number of buckets $\ell=2^{\frac{\log N}{\log \log N}}$. Since $1/|X|$ could be as small as $1/N$, this implies that even for $Y$ as small as $\log \log N$, we could not guarantee $Y$ is uniformly distributed over $\ell$ buckets. 

We rewrite $\Pr_{h \sim \mathcal{H}}[\max h(Y) < \min h(X \setminus Y)]$ by enumerating $\theta=\max h(Y)$:
\begin{equation}
    \Pr_{h \sim \mathcal{H}}[\max h(Y) < \min h(X \setminus Y)] = \sum_{\theta \in [M]}  \Pr_{h \sim \mathcal{H}}\left[ \max h(Y) = \theta \wedge \min h(X \setminus Y) > \theta \right] \label{eq:all_events}
\end{equation}

Similar to \Cref{clm:dependent_seeds} in the proof of \Cref{thm:min_wise_log_seed}, we  simplify \eqref{eq:all_events} to events with independent seeds.

\begin{claim}\label{clm:expand}
    Recall $B_J := \bigcup_{i = 1}^{k'} B_{j_i}$ contains buckets in $[\ell]$ with elements in $Y$ (under $g$). 
    
    $\Pr_{h \sim \mathcal{H}}\left[ \max h(Y) = \theta \wedge \min h(X \setminus Y) > \theta \right]$ in \eqref{eq:all_events} could be decomposed as
    \begin{align}
         \Pr_{\sigma : (C_e+1) \cdot k \text{-wise}}[\max \sigma(Y)=\theta \wedge \min \sigma(B_J) > \theta ] \cdot \left( \prod_{i \notin J} \left(\Pr_{\sigma \sim U}[\min \sigma(B_i) > \theta] \pm 2 \cdot 2^{-C_s \cdot t} \right) \pm 2 \cdot 2^{-C_s \cdot t \cdot k}\right).  \label{eq:kminwise_expansion} 
    \end{align}
\end{claim}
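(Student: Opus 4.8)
The plan is to mirror the two stages of the proofs of \Cref{clm:dependent_seeds} and \Cref{lem:independent_seeds}, adjusting for the two new features: $Y$ can occupy up to $k'\le k$ buckets, and $h = h_0 + \varphi$ now carries the $(C_e+1)k$-wise independent summand $h_0$ (here $\varphi(x) = \sigma_{g(x)}(x)$). Fixing the allocation $g$ (hence $J=\{j_1,\dots,j_{k'}\}$, $B_J = \bigcup_m B_{j_m}$, and the $B_i$), I would view $\max h(Y)=\theta \wedge \min h(X\setminus Y)>\theta$ as the conjunction of the single composite event $A_1 := \big(\max h(Y)=\theta \wedge \min h(B_J)>\theta\big)$, which depends only on $(z_{j_1},\dots,z_{j_{k'}})$ and $h_0$ restricted to $Y\cup B_J$, together with the events $A_i := \big(\min h(B_i)>\theta\big)$ for $i\notin J$, each depending only on $z_i$ and $h_0|_{B_i}$. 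Let $\mathcal{H}'$ denote the variant in which $s_1,\dots,s_\ell$ are independent uniform strings instead of $\PRG_1$ outputs.

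\textbf{Stage 1 (removing $\PRG_1$).} First I would prove the multi-index analogue of \Cref{fact:PRG_CR_LOG}: for any $\alpha=(\alpha_1,\dots,\alpha_{k'})$ and any $f_i:\{0,1\}^{C_e t}\to\{0,1\}$, conditioning $\PRG_1$ on $s_{j_1}=\alpha_1,\dots,s_{j_{k'}}=\alpha_{k'}$ perturbs $\E\big[\prod_{i\notin J}f_i(s_i)\big]$ by at most $\pm 2\cdot 2^{-C_s t k}$. The argument is verbatim that of \Cref{fact:PRG_CR_LOG}: since $\mathbf{1}(s_J=\alpha)$ and $\mathbf{1}(s_J=\alpha)\prod_{i\notin J}f_i(s_i)$ are combinatorial rectangles in $(\{0,1\}^{C_e t})^\ell$, the $\PRG_1$ error $2^{-(C_s+C_e)t k-2}$ of Step~3 controls both, and after dividing by $\Pr[s_J=\alpha]=2^{-C_e t k'}\pm 2^{-(C_s+C_e)t k-2}$ the $2^{C_e t k'}\le 2^{C_e t k}$ blow-up is exactly absorbed (this is why Step~3 picks that error). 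Then, expanding $\Pr_{h\sim\mathcal{H}}[A_1\wedge\bigwedge_{i\notin J}A_i]$ over the value $\alpha$ of $s_J$ as in \Cref{sec:proof_dependent_seeds} (only $\alpha$ making $A_1$ true contribute, and for fixed $h_0,w$ the non-$J$ part is a combinatorial rectangle in $(s_i)_{i\notin J}$), I get
\[
\Pr_{h\sim\mathcal{H}}\Big[A_1\wedge\bigwedge_{i\notin J}A_i\Big] = \Pr_{h'\sim\mathcal{H}'}\Big[A_1\wedge\bigwedge_{i\notin J}A_i\Big] \pm 2\cdot 2^{-C_s t k}\cdot\Pr_{h'\sim\mathcal{H}'}[A_1],
\]
the error being $2\cdot 2^{-C_s t k}$ times $\sum_\alpha\Pr[s_J=\alpha\wedge A_1]$, which reassembles to $\Pr[A_1]$. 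This is the inner $\pm 2\cdot 2^{-C_s t k}$ of \eqref{eq:kminwise_expansion}.

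\textbf{Stage 2 (Nisan--Zuckerman on $\mathcal{H}'$).} Now I would run the telescoping of \Cref{sec:first_proof_independent_seed} on the read-once program whose first (wide-alphabet) input is $z_J=(z_{j_1},\dots,z_{j_{k'}})$ and whose remaining inputs are the $z_i$, $i\notin J$. The role of $h_0$ is to evaluate $A_1$: since $h_0$ is $(C_e+1)k$-wise independent and independent of $\varphi'$, the distribution of $h'=h_0+\varphi'$ restricted to any set of at most $(C_e+1)k$ coordinates is exactly uniform, so $h'$ is a genuine $(C_e+1)k$-wise independent distribution and $\Pr_{h'\sim\mathcal{H}'}[A_1]=\Pr_{\sigma:(C_e+1)k\text{-wise}}[\max\sigma(Y)=\theta\wedge\min\sigma(B_J)>\theta]$; this replaces the ``$\Ext(w,s_j)=U$'' device of \Cref{sec:smaller_error}, which no longer works because $Y$ spans several buckets whose extractor outputs are only marginally, not jointly, uniform. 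For each $i\notin J$: over uniform $z_i$ we have $\sigma_i=\PRG_2(U)$, which $2^{-C_s t}$-fools the rectangle $\{\sigma_i: h_0(x)+\sigma_i(x)>\theta\ \forall x\in B_i\}$, and since $\sigma_i$ is uniform the shift by $h_0$ is harmless, so $\Pr_{z_i\sim U}[A_i]=(1-\theta/M)^{|B_i|}\pm 2^{-C_s t}=\Pr_{\sigma\sim U}[\min\sigma(B_i)>\theta]\pm 2^{-C_s t}$, independently of $h_0$. The standard induction (a conditioning of probability at least $2^{-4kC_e\log N}$ leaves $H_\infty(w)\ge 6kC_e\log N$, the extractor's requirement, else bound the conjunction by $2^{-4kC_e\log N}$) gives, for each fixed $h_0$,
\begin{align*}
\Pr_{w,(s_i)\sim U}\Big[A_1\wedge\bigwedge_{i\notin J}A_i \,\Big|\, h_0\Big] &= \Pr_{w,s_J\sim U}[A_1\mid h_0]\cdot\prod_{i\notin J}\Big(\Pr_{\sigma\sim U}[\min\sigma(B_i)>\theta]\pm 2\cdot 2^{-C_s t}\Big) \\
&\qquad \pm\ \ell\cdot 2^{-4kC_e\log N}.
\end{align*}
Averaging over $h_0$ (the product over $i\notin J$ is $h_0$-free) turns $\E_{h_0}\Pr_{w,s_J}[A_1\mid h_0]$ into $\Pr_{h'\sim\mathcal{H}'}[A_1]$; combining with Stage~1 gives \eqref{eq:kminwise_expansion}. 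The leftover $\ell\cdot 2^{-4kC_e\log N}=2^{t-4kC_e\log N}$ is, strictly speaking, an extra additive term (the analogue of the $\pm\ell\cdot N^{-0.4C_e}$ term in \eqref{eq:decompose_independent_seeds}); it is harmless because, summed over $\theta\in[M]$ and over the bad allocations of \Cref{lem:allocation_k_min_wise}, it stays below $\delta/{|X|\choose|Y|}$ once $C_e$ is taken large relative to the exponent of $M=N^{O(1)}$, which also keeps $\PRG_1$'s seed length at $O(k\log N)$ in Step~3 since $k=\log^{O(1)}N$.

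\textbf{Expected main obstacle.} The delicate part is the entropy budget in Stage~2: $A_1$ contains the equality constraint $\max h(Y)=\theta$, whose probability can be polynomially small (about $M^{-k}$ for small $\theta$), so conditioning on $A_1$ may drain $\Theta(k\log N)$ bits from $w$ before any non-$J$ bucket is extracted. The construction must therefore give $w$ enough slack ($10kC_e\log N$ source length against $6kC_e\log N$ min-entropy requirement) to cover this drain plus all $O(\log N/\log\log N)$ subsequent conditionings, and the final error statement must swallow the resulting $\ell$-times-threshold term; both hold only because $C_e$ is a sufficiently large constant and $k=\log^{O(1)}N$. A secondary check is that the ``only $\alpha$ with $A_1$ true contribute'' step of Stage~1 still goes through when $A_1$ is a multi-bucket composite event, and that the generalized \Cref{fact:PRG_CR_LOG} really yields error $2^{-C_s t k}$ (not merely $2^{-C_s t}$), so that, together with the $(C_e+1)k$-wise tail bounds of \Cref{lem:t_wise_ind}, summing \eqref{eq:kminwise_expansion} over $\theta\in[M]$ stays below $\delta/{|X|\choose|Y|}$.
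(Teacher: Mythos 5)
Your proposal follows the paper's route: decompose into per-bucket indicators $I_i$, handle the $J$-block exactly via the $(C_e+1)k$-wise independence of $h_0$ (correctly replacing the single-bucket uniform-extractor device of \Cref{sec:smaller_error}), generalize \Cref{fact:PRG_CR_LOG} to the multi-seed conditioning $s_J=\alpha$ to swap $\PRG_1$ for independent seeds, and then run the Nisan--Zuckerman induction over the non-$J$ buckets; the only organizational difference is whether you remove $\PRG_1$ before or after forming the conditional expectation $\E\big[\prod_{i\notin J}I_i \mid \prod_{j\in J}I_j=1\big]$, and both orderings work. Your remark that the NZ induction leaves a residual additive term of order $\ell\cdot 2^{-4kC_e\log N}$, not uniformly dominated by $\Pr[A_1]\cdot 2^{-C_s t k}$, is a valid refinement: the paper absorbs the analogous additive terms ($2^{-0.1|w|}$ and $(\ell-k')\cdot 2^{-0.2|w|}$) into the $\pm 2\cdot 2^{-C_s t k}$ relative error of \eqref{eq:kminwise_expansion}, which is not literally correct for $\theta$ near $M$ where $\Pr_\sigma[A_1]$ is tiny, and your explicit extra additive term (shown negligible after summing over $\theta$ and allocations, mirroring the $\pm\ell\cdot N^{-0.4C_e}$ term in \eqref{eq:decompose_independent_seeds}) is the more careful bookkeeping.
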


In \eqref{eq:kminwise_expansion}, the product $\prod_{i \notin J} \left(\Pr_{\sigma \sim U}[\min \sigma(B_i) > \theta] \pm 2 \cdot 2^{-C_s \cdot t} \right)$ replaces dependent seeds $s_i$ for $i \in [\ell]\setminus J$ by independent seeds. The term $2 \cdot 2^{-C_s \cdot t}$ comes from the error of the extractor and the error of $\PRG_2$. Moreover, the last error term $2 \cdot 2^{-C_s \cdot t \cdot k}$ in \eqref{eq:kminwise_expansion} is similar to the error term \eqref{eq:after_NZ_PRG2} in \Cref{clm:dependent_seeds}, which is introduced by $\PRG_1$.

One more remark is that this shows the sum of $t$-wise independence and the Nisan-Zuckerman PRG could fool conditional events like \eqref{eq:condition_event}. The key of  \eqref{eq:kminwise_expansion} is to fool event $\Pr[\max \sigma(Y)=\theta]$ with a multiplicative error $2\cdot 2^{-C_s \cdot t \cdot k}$.

We split \eqref{eq:kminwise_expansion} into two parts
\begin{align}
         & \Pr_{\sigma : (C_e+1) \cdot k \text{-wise}}[\max \sigma(Y)=\theta \wedge \min \sigma(B_J) > \theta ] \cdot  \prod_{i \notin J} \left(\Pr_{\sigma \sim U}[\min \sigma(B_i) > \theta] \pm 2 \cdot 2^{-C_s \cdot t} \right) 
        \label{eq:kminwise_eq1}\\
        & \pm \Pr_{\sigma : (C_e+1) \cdot k \text{-wise}}[\max \sigma(Y)=\theta \wedge \min \sigma(B_J) > \theta ] \cdot 2^{-C_s \cdot t \cdot k + 1}. \label{eq:kminwise_eq2}
\end{align}
 
Similar to the proof strategy of \Cref{thm:min_wise_log_seed}, we bound \eqref{eq:kminwise_eq1} and \eqref{eq:kminwise_eq2} separately.

\begin{claim}\label{clm:kminwise_PRG_error}
    The summation of \eqref{eq:kminwise_eq2},
    \[
    \sum_{\theta \in [M]} \Pr_{\sigma : (C_e+1) \cdot k \text{-wise}}[\max \sigma(Y)=\theta \wedge \min \sigma(B_J) > \theta ] \cdot 2^{-C_s \cdot t \cdot k + 1} \le 2^{-2 C \cdot t} /|X|^k.
    \] 
\end{claim}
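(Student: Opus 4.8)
The plan is to first collapse the sum over $\theta$ and recognize the resulting quantity as a $k$-min-wise event, then reduce the claim to a clean deterministic inequality which I dispatch using the bucket-load bounds of \Cref{lem:allocation_k_min_wise}. Since the events $\{\max\sigma(Y)=\theta\}$ are pairwise disjoint over $\theta\in[M]$,
\[
\sum_{\theta\in[M]}\Pr_{\sigma : (C_e+1) \cdot k \text{-wise}}\!\big[\max\sigma(Y)=\theta\wedge\min\sigma(B_J)>\theta\big]=\Pr_{\sigma : (C_e+1) \cdot k \text{-wise}}\!\big[\max\sigma(Y)<\min\sigma(B_J)\big],
\]
and the right-hand side is exactly the $k$-min-wise event for the subset $Y$ inside the ground set $Z:=Y\cup B_J\subseteq[N]$. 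Since $C_e$ is a large constant and $M=(N/\delta)^{O(1)}=\Omega(|Z|)$, the theorem of Feigenblat, Porat and Shiftan~\cite{minwise_FPS11} (that $O(\log(1/\delta')+k\log\log(1/\delta'))$-wise independence is $k$-min-wise with error $\delta'$) applies with $\delta'$ a constant, so this probability is at most $2/\binom{|B_J|+k}{k}$. It therefore suffices to prove the deterministic inequality $|X|^k\le\tfrac14\binom{|B_J|+k}{k}\cdot 2^{C_s\cdot t\cdot k-2C\cdot t}$, which I would verify by casing on $|X|$ as in \Cref{lem:allocation_k_min_wise} (whose conclusions we are assuming, the failure probabilities being charged elsewhere).

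When $|X|\le\ell^{1.1}=2^{1.1t}$ — the first two cases of \Cref{lem:allocation_k_min_wise} — even the crude bound $\binom{|B_J|+k}{k}\ge 1$ is enough: $|X|^k\le 2^{1.1tk}$, and $1.1tk\le C_s tk-2Ct-2$ holds because $C_s$ is chosen large relative to $C$. The substantive case is $|X|\ge\ell^{1.1}$, where every bucket has load $(1\pm0.1)|X|/\ell$, hence $|B_J|\ge 0.9|X|/\ell$. Using $\binom{n}{k}\ge(n/k)^k$ this gives $\binom{|B_J|+k}{k}\ge(0.9|X|/(k\ell))^k$, and substituting $\ell=2^t$ turns the target into $(0.9/k)^k\cdot 2^{(C_s-1)tk-2Ct}\ge 4$, i.e.\ $(C_s-1)tk\ge 2Ct+k\log_2(k/0.9)+2$. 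Since $k=\log^{O(1)} N$ and $\log N=t\log\log N$, the term $k\log_2(k/0.9)=O(k\log\log N)$ is $o(tk)$, while $2Ct\le 2Ctk$; so this too follows once $C_s$ is large enough compared to $C$.

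The step I expect to be the obstacle is this last case. The naive estimate $\sum_\theta\Pr[\cdots]\le 1$ is hopeless there, because $2^{-C_s tk}$ is only quasi-polynomially small whereas $|X|^k$ may be as large as $N^k=2^{tk\log\log N}$, which swamps it; the extra $\approx\ell^k=2^{tk}$ of savings has to be manufactured from the combinatorial factor $\binom{|B_J|+k}{k}$. The key enabling fact is precisely \Cref{lem:allocation_k_min_wise}'s guarantee that, for large $X$, the buckets meeting $Y$ collectively absorb a $\Theta(1/\ell)$ fraction of $X\setminus Y$, so that $|B_J|=\Theta(|X|/\ell)$ and $\binom{|B_J|+k}{k}$ is at least $(|X|/(k\ell))^k$. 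Once that is in hand the rest is bookkeeping against the constant hierarchy $C_e>C_s>C_g>C$, and I do not expect any further difficulty.
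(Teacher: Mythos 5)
Your proof is correct and takes essentially the same route as the paper: collapse the sum over $\theta$ to the $k$-min-wise event $\Pr[\max\sigma(Y)<\min\sigma(B_J)]$, invoke Feigenblat--Porat--Shiftan to bound it by $O(1)/\binom{|B_J|+k}{k}$, and then case on $|X|$ using \Cref{lem:allocation_k_min_wise}. The only cosmetic differences are that the paper places the case threshold at $|X|<2^{0.5C_s t+1}$ (where the crude bound $\sum_\theta\Pr\le 1$ is applied directly) rather than at $\ell^{1.1}$, and writes the combinatorial factor as $k!/|B_J|^k$ rather than $1/\binom{|B_J|+k}{k}$; neither affects the argument.
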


Then we bound \eqref{eq:kminwise_eq1} by the following claim, which shows its summation is an approximation of $k$-min-wise hash with a small multiplicative error. 

\begin{claim}\label{clm:kminwise_main_term}
    The summation of \eqref{eq:kminwise_eq1} over $\theta$,
    \[\sum_{\theta \in [M]} \Pr_{\sigma : (C_e+1) \cdot k \text{-wise}}[\max \sigma(Y)=\theta \wedge \min \sigma(B_J) > \theta ] \cdot  \prod_{i \notin J} \left(\Pr_{\sigma \sim U}[\min \sigma(B_i) > \theta] \pm 2 \cdot 2^{-C_s \cdot t} \right),\] 
    equals $
    (1 \pm 2^{-2 C \cdot t}) \cdot \Pr_{\sigma \sim U}[\max \sigma(Y) < \min \sigma(X \setminus Y)]
    $.
\end{claim}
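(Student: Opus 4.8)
The plan is to mirror the ``Second Part'' of the proof of \Cref{lem:independent_seeds} (\Cref{sec:second_proof_independent_seeds}), now with the event $\max\sigma(Y)=\theta$ playing the role of $\sigma(y)=\theta$ and the bucket set $B_J$ playing the role of the single bucket $j$. Throughout, $g$ is the fixed good allocation, so by \Cref{lem:allocation_k_min_wise} we may assume the stated load bounds on the $B_i$ and on $|B_J|$; the corresponding failure probabilities are charged against $\delta/\binom{|X|}{|Y|}\approx\delta/|X|^k$ exactly as in \Cref{sec:proof_minwise}. Writing $\varepsilon:=2\cdot 2^{-C_s\cdot t}$, recalling that $Y\cup B_J\cup\bigcup_{i\notin J}B_i=X$, and using $\Pr_{\sigma\sim U}[\max\sigma(Y)<\min\sigma(X\setminus Y)]=\sum_{\theta}\Pr_U[\max\sigma(Y)=\theta]\,(1-\theta/M)^{|X|-k}=\Theta\big(1/\binom{|X|}{k}\big)$ (since $M=(N/\delta)^{O(1)}$), it suffices to estimate the summation entrywise and treat a multiplicative error as an additive error scaled by $\binom{|X|}{k}$.

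The key new step is to estimate the ``head'' factor $\Pr_{\sigma:(C_e+1)k\text{-wise}}[\max\sigma(Y)=\theta\wedge\min\sigma(B_J)>\theta]$ up to a \emph{multiplicative} error relative to $\Pr_U[\max\sigma(Y)=\theta]$. The difficulty is that $|Y\cup B_J|$ can far exceed the independence $(C_e+1)k$ (e.g.\ $|B_J|$ may be $\approx 2k\ell^{0.1}$ in the middle regime), so $(C_e+1)k$-wise independence does not compute this probability exactly. I would resolve this by conditioning on the $k$ values $\sigma|_Y=v$: since $k+C_ek\le(C_e+1)k$, conditioned on any fixed $v$ the restriction $\sigma|_{B_J}$ is $C_ek$-wise independent with uniform marginals, so the first estimate of \Cref{lem:t_wise_ind} gives $\Pr[\min\sigma(B_J)>\theta\mid\sigma|_Y=v]=(1-\theta/M)^{|B_J|}\pm(|B_J|\theta/M)^{C_ek}/(C_ek)!$ with an error independent of $v$; averaging over all $v$ with $\max v=\theta$ (each of weight $M^{-k}$, totalling $\Pr_U[\max\sigma(Y)=\theta]$) yields
\[
\Pr_{\sigma:(C_e+1)k\text{-wise}}[\max\sigma(Y)=\theta\wedge\min\sigma(B_J)>\theta]=\Pr_U[\max\sigma(Y)=\theta]\Big((1-\theta/M)^{|B_J|}\pm\tfrac{(|B_J|\theta/M)^{C_ek}}{(C_ek)!}\Big),
\]
while the second estimate of \Cref{lem:t_wise_ind} similarly bounds the head factor by $\Pr_U[\max\sigma(Y)=\theta]\cdot\big(C_{C_ek}\,C_ek/(|B_J|\theta/M)\big)^{C_ek/2}$ in the tail regime.

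Given this, the remainder follows the three-case split of \Cref{sec:second_proof_independent_seeds} on $|X|\le\ell^{0.9}$, $|X|\in(\ell^{0.9},\ell^{1.1})$, and $|X|\ge\ell^{1.1}$, choosing in each case a threshold $\theta^\star$ so that for $\theta\le\theta^\star$ both $|B_J|\cdot\theta/M$ and $\max_{i\notin J}|B_i|\cdot\theta/M$ are at most $\ell^{-0.1}$ (this costs only an extra factor $k$ in $\theta^\star$ relative to the min-wise case, harmless since $k=\log^{O(1)}N$). For $\theta\le\theta^\star$: each factor $(1-\theta/M)^{|B_i|}\pm\varepsilon$ equals $(1-\theta/M)^{|B_i|}(1\pm2\varepsilon)$, the head factor equals $\Pr_U[\max\sigma(Y)=\theta](1-\theta/M)^{|B_J|}(1\pm4(|B_J|\theta/M)^{C_ek})$, and multiplying the $\ell$ terms telescopes to $\Pr_U[\max\sigma(Y)=\theta](1-\theta/M)^{|X|-k}$ times $1\pm\big(2^{-0.01C_st}+2^{-(C_s-1)t}\big)\subseteq 1\pm2^{-2Ct}$ since $C_s,C_e\gg C$. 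For $\theta>\theta^\star$, where $\Pr_U[\max\sigma(Y)=\theta](1-\theta/M)^{|X|-k}$ is negligible, I would discard $Y$ and keep only one bucket of $B_J$ together with the $S=\Theta(\log\log N)$ largest $|B_i|$ among $i\notin J$, then split on whether $(1-\theta/M)^b$ exceeds $\varepsilon\ell$, exactly as in \Cref{sec:second_proof_independent_seeds}, so that the tails of both the left-hand side and the fair probability are $2^{-\Omega(t)}/\binom{|X|}{k}$. Summing the head and tail contributions over all $\theta$ gives $(1\pm2^{-2Ct})\sum_\theta\Pr_U[\max\sigma(Y)=\theta](1-\theta/M)^{|X|-k}=(1\pm2^{-2Ct})\Pr_{\sigma\sim U}[\max\sigma(Y)<\min\sigma(X\setminus Y)]$, as claimed. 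The main obstacle is precisely the second paragraph: obtaining a multiplicative rather than additive error for the head factor when $|B_J|\gg(C_e+1)k$, which the ``condition on $\sigma|_Y$, then apply bounded independence to $B_J$'' trick handles.
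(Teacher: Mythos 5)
Your overall strategy matches the paper's proof of \Cref{clm:kminwise_main_term}: condition on $\sigma|_Y$ so that only $C_e k$-wise independence needs to act on $B_J$, apply \Cref{lem:t_wise_ind} to $B_J$, and then mirror the three-case head/tail split of \Cref{sec:second_proof_independent_seeds}. Your ``key new step'' is exactly how the paper justifies, at the start of \Cref{sec:proof_kminwise_mainterm}, the factorization $\Pr_{\sigma:(C_e+1)k\text{-wise}}[\max\sigma(Y)=\theta\wedge\min\sigma(B_J)>\theta]=\Pr_U[\max\sigma(Y)=\theta]\cdot\Pr_{\sigma:C_e k\text{-wise}}[\min\sigma(B_J)>\theta]$; note this is the same move the paper already makes in \Cref{sec:first_proof_independent_seed} (there with $|Y|=1$ and $(0.1C_s+1)$-wise independence), so it is not really new, but it is the right lemma.

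There is one concrete quantitative slip. In the tail regime you take the $S=\Theta(\log\log N)$ largest buckets, ``exactly as in \Cref{sec:second_proof_independent_seeds}'', but the paper deliberately scales this to $S=\Theta(k\log\log N)$ for \Cref{clm:kminwise_main_term}. The tail additive error from the small-bucket sub-case is roughly $(\varepsilon\ell)^{S}\approx 2^{-(C_s-1)tS}$, and the conversion to a multiplicative error rescales it by $\binom{|X|}{k}\le N^{k}$. With $S=\Theta(\log\log N)$ you only get $N^{-O(1)}$, which does \emph{not} beat $N^{k}$ once $k=\omega(1)$, whereas $S=\Theta(k\log\log N)$ gives $N^{-\Omega(k)}$, which does (and the $\Omega$ constant is tunable via $C_s$). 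Relatedly, the case $|X|\le\ell^{0.9}$ is not a clean mirror of the min-wise argument: the load bound from \Cref{lem:allocation_k_min_wise} is now $|B_i|\le C_g+O\!\left(\frac{k\log|X|}{\log N/\log\log N}\right)$ rather than the constant $C_g$, and $|B_{\notin J}|$ may be smaller than $k\log\log N$, so one cannot always extract $\Theta(k\log\log N)$ non-empty buckets; the paper handles this with an extra two-level case split on $|B_{\notin J}|$ that your sketch omits. These are fixable with more bookkeeping, but as written the tail bounds do not close.
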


For completeness, we show the proof of \Cref{clm:expand} in \Cref{sec:proof_clm_expand}. Then we defer the proofs of \Cref{clm:kminwise_PRG_error} and \Cref{clm:kminwise_main_term} to \Cref{sec:proof_error_kminwise} and \Cref{sec:proof_kminwise_mainterm}. We are ready to finish the proof of \Cref{thm:k_min_wise} here.

\begin{proof}[Proof of \Cref{thm:k_min_wise}]
    We rewrite $\Pr_{h \sim \mathcal{H}}[\max h(Y)< \min h(X \setminus Y)]$ as \eqref{eq:all_events}. Then we apply \Cref{clm:expand} to each term $\Pr_{h \sim \mathcal{H}}\left[ \max h(Y) = \theta \wedge \min h(X \setminus Y) > \theta \right]$. 

    Next we decompose the bound \eqref{eq:kminwise_expansion} in \Cref{clm:expand} into two terms: \eqref{eq:kminwise_eq1} and \eqref{eq:kminwise_eq2}. Finally, \Cref{clm:kminwise_PRG_error} shows \eqref{eq:kminwise_eq2} is a small multiplicative error and \Cref{clm:kminwise_main_term} shows \eqref{eq:kminwise_eq1} is an approximation with multiplicative error $2^{- 2 C \cdot t}$.
\end{proof}    

\subsection{Proof of \Cref{clm:expand}}\label{sec:proof_clm_expand}

We enumerate the non-empty subset $Y' \subseteq Y$ with $h(Y')=\theta$ to rewrite $\max h(y)=\theta$ as $(h(Y')=\theta \wedge \max h(Y \setminus Y')<\theta)$. So our goal is to bound
\begin{align}
     & \Pr_{h \sim \mathcal{H}}[ \max h(Y)=\theta \wedge \min h(X \setminus Y)> \theta] \notag \\
     = & \sum_{\varnothing \ne Y' \subseteq Y} \Pr_{h \sim \mathcal{H}}[h(Y')=\theta \wedge \max h(Y\setminus Y')<\theta \wedge \min h(X \setminus Y)>\theta].  \label{eq:pf_all_events}
\end{align}
Let us consider each probability for a fixed $Y'$.

For convenience, we define $I_i(h_0,z_i)$ to denote the indicator of that hash $h(x) = h_0(x) + \sigma_i(x)$ for $\sigma_i=\PRG_2(z_i)$ satisfies all conditions in \eqref{eq:pf_all_events} for $x \in X$ mapped to bucket $i$, i.e., $h(x)=\theta$ for $x \in Y'$ with $g(x)=i$, $h(x)<\theta$ for $x \in Y \setminus Y'$ with $g(x)=i$, and $h(x)>\theta$ for $x \in X \setminus Y$ with $g(x)=i$. Then we can rewrite $\Pr_{h \sim \mathcal{H}}\left[ h(Y') = \theta \wedge \max h(Y \setminus Y') < \theta \wedge \min h(X \setminus Y) > \theta \right]$ as 
\begin{equation}
    \E_{h_0, w, (s_1, \ldots, s_\ell) \sim \PRG_1 : z_i = \Ext(w,s_i)} 
    \left[\prod_{i \in [\ell]} I_i(h_0,z_{i})\right]. \label{eq:sum_buckets}
\end{equation}

We use the analysis of the Nisan-Zuckerman PRG implicitly. In the first step, we apply $h_0$ to calculating $\E\left[ \prod_{j \in J} I_j \right]$. In this calculation, we fix $z_J := (z_{j_1}, \ldots, z_{j_{k'}})$ and their corresponding functions $\sigma_J:=(\sigma_{j_1},\ldots,\sigma_{j_{k'}})$. Recall that $h_0$ is $(C_e + 1) \cdot k$-wise independent and $|Y| \le k$ are mapped to $J := \{j_1,\ldots,j_{k'}\}$ under $g$. Therefore
\begin{align}
      & \E_{h_0, w, s \sim \PRG_1}\left[ \prod_{j \in J} I_j \right] = \E_{w, s \sim \PRG_1}\left[ \E_{h_0}\left[ \prod_{j \in J} I_j\ \middle|\ z_J, \sigma_J\right] \right]  \notag \\
= & \Pr_{\sigma : (C_e+1)k \text{-wise}}[\sigma(Y')=\theta \wedge \max \sigma(Y\setminus Y')<\theta \wedge \min \sigma(B_J)>\theta]. \label{eq:events_in_J}
\end{align}

Now we fix $h_0$ and $z_J$ such that $\prod_{j \in J} I_j = 1$ (otherwise it contributes $0$ to \eqref{eq:all_events}) and analyze the conditional expectation:
\begin{equation}\label{eq:before_PRG_kmin}
    \E_{h_0, w, s \sim \PRG_1}\left[\prod_{i \notin J} I_i\ \middle|\ \prod_{j \in J}I_j = 1 \right].
\end{equation}

Similar to the proof in \Cref{thm:min_wise_log_seed}, we use the property of $\PRG_1$ to replace $s_1,\ldots,s_\ell$ by independent seeds. Let $\alpha_J = (\alpha_{j_1}, \ldots, \alpha_{j_{k'}}) \in \{0, 1\}^{C_e \cdot t \times J}$ denote the enumeration of $s_J=(s_{j_1},\ldots,s_{j_{k'}})$. We expand the conditional expectation as
\begin{align*}
      & \E_{h_0, w, s \sim \PRG_1}\left[\prod_{i \notin J} I_i\ \middle|\ \prod_{j \in J}I_j = 1 \right] = \sum_{\alpha_J} \E_{h_0, w, s \sim \PRG_1}\left[ \mathbf{1}(s_J = \alpha_J) \cdot \prod_{i \notin J} I_i\ \middle|\ \prod_{j \in J}I_j = 1 \right] \\
    = & \sum_{\alpha_J} \E_{h_0, w}\left[ \Pr_{s \sim \PRG_1}\left[ s_J = \alpha_J\ \middle|\ \prod_{j \in J} I_j = 1 \right] \cdot \E_{s \sim \PRG_1}\left[ \prod_{i \notin J} I_i\ \middle|\ s_J = \alpha_J,  \prod_{j \in J} I_j = 1 \right]  \right].
\end{align*}

\noindent For each fixed $w$ and $h_0$, $\prod_{i \notin J} I_i$ is a combinatorial rectangle, whose inputs $(s_i)_{i \notin J}$ are in $\{0, 1\}^{C_e \cdot t}$. Hence we apply $\PRG_1$ to $\E_{s \sim \PRG_1}\left[ \prod_{i \notin J} I_i\ \middle|\ s_J = \alpha_J \right]$. Observed that $\prod_{j \in J} I_j = 1$ is determined given $w$, $h_0$, and $s_J=\alpha_J$, which could be neglected here.

Because $|s_J| = k' \cdot C_e \cdot t \le C_e \cdot t \cdot k$, by the same argument of \Cref{fact:PRG_CR_LOG}, it follows that
$$
\E_{s \sim \PRG_1}\left[ \prod_{i \notin J} I_i\ \middle|\ s_J = \alpha_J \right] = \prod_{i \notin J} \E_{s_i \sim U}[I_i] \pm 2^{-C_s \cdot t \cdot k}.
$$

\noindent This simplifies \eqref{eq:before_PRG_kmin} to
\begin{align}
    \sum_{\alpha_J} \Pr_{h_0, w, s \sim \PRG_1}\left[ s_J = \alpha_J\ \middle|\ \prod_{j \in J} I_j = 1 \right] \cdot  \left( \E_{h_0, w, (s_i)_{i \notin J} \sim U}\left[ \prod_{i \notin J} I_i \ \middle|\ \prod_{j \in J} I_j = 1 \right] \pm 2^{-C_s \cdot t \cdot k} \right). \label{eq:error_kminwise_conditioning}
\end{align}

We will bound $\E_{h_0, w, (s_i)_{i \notin J} \sim U}\left[ \prod_{i \notin J} I_i \ \middle|\ \prod_{j \in J} I_j = 1 \right]$ for any fixed $s_J=\alpha_J$ conditioned with $\prod_{j \in J} I_j=1$. We apply the Nisan-Zuckerman analysis because seeds $s_i$ are independent here. Since $|z_J| = k' \cdot C_e \cdot \log N \le k \cdot C_e \cdot \log N \le 0.1 \cdot |w|$ given $|w| = 10k \cdot C_e \cdot \log N$, the min-entropy of $w$ is at least $0.8 \cdot |w|$ with probability $1-2^{-0.1 \cdot |w|}$ after fixing $z_J$. So we assume the min-entropy of $w$ is at least $0.8 |w|$ conditioned on that $h_0$ and $z_J$ will satisfy $\prod_{j \in J} I_j=1$.

Then for $i \notin J$, $z_i = \Ext(w,s_i)$ with $s_i \sim U$ is $\NZerr$-close to the uniform distribution, which implies $\sigma_i$ is $\NZerr$-close to the uniform distribution in $\PRG_2$. We repeat this argument for every $i \notin J$ and obtain
\begin{equation}\label{eq:rest_events}
    \E_{h_0, w, (s_i)_{i \notin J} \sim U}\left[ \prod_{i \notin J} I_i \ \middle|\ \prod_{j \in J} I_j = 1 \right] = \prod_{i \notin J} \left( \E_{\sigma_i \sim \PRG_2}[I_i] \pm \NZerr \right) \pm (\ell - k') \cdot 2^{-0.2 \cdot |w|}.  
\end{equation}

Moreover, $\E_{\sigma_i \sim \PRG_2}[I_i]$ is equal to $\Pr_{\sigma \sim U}[\min \sigma(B_i) > \theta] \pm 2 \cdot 2^{- C_s \cdot t}$ by the definition of $\sigma_i = \PRG_2(z_i)$ for a PRG with error $2^{- C_s \cdot t}$. 
    
The additive term $(\ell - k') \cdot 2^{-0.2 \cdot |w|}$ is the union bound for the min-entropy of $w$ is less than $0.6 \cdot |w|$ after conditioned previous indicators are $1$. Since $t := \frac{\log N}{\log \log N}$ and $|w| = 10 k \cdot C_e \cdot \log N$, we combine it with the error in \eqref{eq:error_kminwise_conditioning} as $2 \cdot 2^{- C_s \cdot t \cdot k}$. Thus \eqref{eq:error_kminwise_conditioning} becomes
\begin{align}
& \sum_{\alpha_J} \Pr_{h_0, w, s \sim \PRG_1}\left[ s_J = \alpha_J\ \middle|\ \prod_{j \in J} I_j = 1 \right] \notag \\
 & \quad \ \cdot \left(\prod_{i \notin J} \left( \Pr_{\sigma \sim U}[\min \sigma(B_i) > \theta] \pm \NZerr \pm 2^{-C_s \cdot t}\right) \pm 2^{-C_s \cdot t \cdot k} \pm 2^{-0.1 \cdot |w|} \pm (\ell-k')\cdot 2^{-0.2 \cdot |w|} \right) \notag \\
   = & \prod_{i \notin J} \left( \Pr_{\sigma \sim U}[\min \sigma(B_i) > \theta] \pm 2 \cdot 2^{-C_s \cdot t}\right) \pm 2\cdot 2^{-C_s \cdot t \cdot k}. \label{eq:simplied_prod_kminwise}
\end{align}

We complete this proof by plugging \eqref{eq:events_in_J} for $\E\left[ \prod_{j \in J} I_j \right]$ and \eqref{eq:simplied_prod_kminwise} for $\E\left[\prod_{i \notin J} I_i\ \middle|\ \prod_{j \in J} I_j \right]$.
\begin{align*}
    & \E\left[ \prod_{i \in [\ell]} I_i \right] = \E\left[ \prod_{j \in J} I_j \right] \cdot \E\left[ \prod_{i \notin J} I_i\ \middle|\ \prod_{j \in J} I_j=1 \right]\\
    = & \Pr_{\sigma : (C_e+1)k \text{-wise}}[\sigma(Y')=\theta \wedge \max \sigma(Y\setminus Y')<\theta \wedge \min \sigma(B_J)>\theta] \\
     & \cdot \left( \prod_{i \notin J} \left( \Pr_{\sigma \sim U}[\min \sigma(B_i) > \theta] \pm 2 \cdot 2^{-C_s \cdot t}\right) \pm 2\cdot 2^{-C_s \cdot t \cdot k} \right).
\end{align*}
Moreover, by \eqref{eq:pf_all_events},
$$
\sum_{Y'} \Pr_{\sigma : (C_e+1)k \text{-wise}}[\sigma(Y')=\theta \wedge \max \sigma(Y\setminus Y')<\theta \wedge \min \sigma(B_J)>\theta]
$$
$$
= \Pr_{\sigma : (C_e+1)k \text{-wise}}[\max \sigma(Y)=\theta \wedge \min \sigma(B_J)>\theta],
$$
which finishes this proof.

\subsection{Proof of \Cref{clm:kminwise_PRG_error}}\label{sec:proof_error_kminwise}

If $|X| < 2^{0.5 C_s \cdot t + 1}$, then the last factor $2^{- C_s \cdot t \cdot k + 1}$ implies that this summation is at most $2^{-0.5 C_s \cdot t \cdot k}/|X|^k$.

Otherwise $|X| \ge 2^{0.5 C_s \cdot t + 1}$ such that \Cref{lem:allocation_k_min_wise} implies that all $B_i$ has $|B_i| = (1 \pm 0.1) \cdot |X \setminus Y| / \ell$. So $|B_J| = (1 \pm 0.1) \cdot k' \cdot |X \setminus Y| / \ell$. 

If we neglect the $2^{-C_s \cdot t \cdot k + 1}$ factor and consider $\sum_{\theta \in [M]} \Pr_{\sigma : (C_e+1) \cdot k \text{-wise}}[\max \sigma(Y)=\theta ~\wedge~ \min \sigma(B_J) > \theta ]$, this is the exact probability of $B_J \cup Y$ satisfying the $k$-min-wise hash condition under $(C_e + 1) \cdot k$-wise independence. Feigenblat, Porat and Shiftan \cite{minwise_FPS11} have shown this bounded by $2 k! / |B_J|^k$ when $C_e$ is large enough. We state their results for completeness.

\begin{theorem}[Theorem 1.1 in \cite{minwise_FPS11}]
    There exists a constant $c$ such that for any $\varepsilon > 0$, any $c \cdot \left( k \log \log (1/\varepsilon) + \log (1 / \varepsilon) \right)$-wise independent function from $[N]$ to $[M]$ is a $k$-min-wise hash family of error $\varepsilon$ when $M = \Omega(N /\varepsilon)$. 
\end{theorem}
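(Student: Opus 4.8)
The plan is to fix a worst-case instance and exploit that $t$-wise independence degrades only mildly under conditioning. Fix $X\subseteq[N]$ with $|X|=n\le N$ and $Y\subseteq X$; it suffices to handle $|Y|=k$, since for $|Y|=k'<k$ the same argument applies verbatim with $k'$ in place of $k$, and the hypothesis then supplies more independence than needed. Write $Z:=X\setminus Y$, $b:=|Z|=n-k$, and $s:=t-k$ where $t:=c\bigl(k\log\log(1/\varepsilon)+\log(1/\varepsilon)\bigr)$. The first step is an elementary conditioning fact: if $\sigma$ is $t$-wise independent, then conditioned on any event depending on at most $k$ of the coordinates the remaining coordinates are $s$-wise independent (conditioning on a fixed assignment to $\le k$ coordinates leaves the rest $s$-wise independent, uniformly in the assignment, and this survives averaging over the assignments making up the event). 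Taking the event $\{\max\sigma(Y)=\theta\}$ gives, for every $\theta$,
\[
\Pr_{\sigma:\,t\text{-wise}}\!\bigl[\min\sigma(Z)>\theta \,\big|\, \max\sigma(Y)=\theta\bigr]=\Pr_{\tau:\,s\text{-wise}}\!\bigl[\min\tau(Z)>\theta\bigr]=:r_s(\theta),
\]
and since $\max\sigma(Y)$ reads only $k\le t$ coordinates, $q^*(\theta):=\Pr_{\sigma:\,t\text{-wise}}[\max\sigma(Y)=\theta]=((\theta+1)/M)^k-(\theta/M)^k$ is the fair value. Hence
\[
\Pr_{\sigma:\,t\text{-wise}}\!\bigl[\max\sigma(Y)<\min\sigma(Z)\bigr]=\sum_\theta q^*(\theta)\,r_s(\theta),\qquad \Pr_{\sigma\sim U}\!\bigl[\max\sigma(Y)<\min\sigma(Z)\bigr]=\sum_\theta q^*(\theta)\,r^*(\theta),
\]
with $r^*(\theta)=(1-\theta/M)^b$. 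By the fair-probability fact \eqref{eq:fair_prob} and $M=\Omega(N/\varepsilon)$, the second sum equals $(1\pm O(\varepsilon))/\binom nk$, so it remains to bound $\sum_\theta q^*(\theta)\,|r_s(\theta)-r^*(\theta)|$ by $O(\varepsilon)/\binom nk$; a harmless rescaling of $\varepsilon$ (equivalently of $c$) then turns the $O(\varepsilon)$ into $\varepsilon$.

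The second step is to estimate $|r_s(\theta)-r^*(\theta)|$ as a function of $\mu:=b\theta/M$ using \Cref{lem:t_wise_ind} (with $C_0$ the universal constant there). For $\mu\le s/(2e)$, part~(1) gives $|r_s(\theta)-r^*(\theta)|\le\mu^s/s!\le(s/(2e))^s/s!\le 2^{-s}$ by Stirling. For larger $\mu$ I would use part~(2) with the effective independence parameter re-chosen: an $s$-wise independent distribution is also $s'$-wise independent for any $s'\le s$, so taking $s'=\Theta(\min(s,\mu))$ (so that $C_0 s'/\mu\le\tfrac14$) yields $r_s(\theta)\le 2^{-\Omega(\min(s,\mu))}$, while for $\mu\gg s$ part~(2) with $s'=s$ gives the polynomial tail $r_s(\theta)\le(C_0 s/\mu)^{s/2}$; in every case $r^*(\theta)\le e^{-\mu}$ is no larger. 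I would then split $\sum_\theta$ at $\mu\le s/(2e)$, at $s/(2e)<\mu\le 2C_0 s$, and at $\mu>2C_0 s$, approximating $\sum_\theta$ by $(M/b)\!\int\! d\mu$ with $q^*(\theta)\approx(k/b)(\mu/b)^{k-1}$. The bookkeeping needs three facts: $\sum_{\mu\le\lambda}q^*(\theta)\le(\lambda/b)^k$; on the middle range the error is $\le 2^{-\Omega(s)}$ uniformly, contributing $\le 2^{-\Omega(s)}(2C_0 s/b)^k$; and on the far range $\sum_{\mu>2C_0 s}q^*(\theta)(C_0 s/\mu)^{s/2}\le 2^{-\Omega(s)}(C_0 s/b)^k$, using that $\int\mu^{k-1-s/2}\,d\mu$ converges since $s>2k$ for $c$ large. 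Each of the three contributions therefore has the form $2^{-\Omega(s)}\cdot(O(s)/n)^k$ (up to the benign $b\approx n$ correction), hence is $\le\varepsilon/\binom nk$ as soon as $2^{-\Omega(s)}(O(s))^k\le\varepsilon\,k!$, i.e.\ $s\gtrsim\log(1/\varepsilon)+k\log(s/k)$; since $s=t-k\ge(c-1)\bigl(k\log\log(1/\varepsilon)+\log(1/\varepsilon)\bigr)$ and $\log(s/k)=O(\log\log(1/\varepsilon))$ in the relevant range, this holds for a sufficiently large absolute constant $c$.

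I expect the last balancing step to be the main obstacle. Crude bounds do not work: the total mass $\sum_\theta q^*(\theta)$ is $\Theta(1)$, whereas the target probability $1/\binom nk$ is exponentially small in $k$, so any argument that bounds $|r_s(\theta)-r^*(\theta)|$ by $2^{-\Omega(s)}$ uniformly and then pays the full mass $1$ costs a factor $\binom nk$ and would force $s=\Omega(k\log N)$, far more than claimed. The resolution is to use simultaneously that $|r_s(\theta)-r^*(\theta)|$ decays in $\theta$ and that $q^*$, although not concentrated, has $k$-th-moment decay, so that weighting a $\theta$-decaying quantity against $q^*$ recovers a $1/(\text{large})^k$ factor. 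Making the first point quantitative on the intermediate range of $\theta$ — where neither estimate of \Cref{lem:t_wise_ind} is sharp as stated — forces the re-selection of the effective moment $s'\le s$, and it is precisely the resulting $(s/k)^k$-type loss with $s/k=\Theta(\log\log(1/\varepsilon))$ that pins down the $k\log\log(1/\varepsilon)$ term in the required independence (on top of the plain $\log(1/\varepsilon)$ already needed when $k=1$, i.e.\ Indyk's bound).
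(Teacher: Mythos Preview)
The paper does not prove this theorem: it is quoted as Theorem~1.1 of \cite{minwise_FPS11} and used as a black box in \Cref{sec:proof_error_kminwise}, with no proof supplied in the present paper. So there is no ``paper's own proof'' to compare against.

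That said, your proof plan is essentially the standard argument (and, to my understanding, the one in \cite{minwise_FPS11}). The reduction to bounding $\sum_\theta q^*(\theta)\,|r_s(\theta)-r^*(\theta)|$ via conditioning on $\max\sigma(Y)=\theta$ is correct: conditioning on any value of the $k$ coordinates in $Y$ leaves the rest $(t-k)$-wise independent, and averaging over values preserves this. The three-range split using both parts of \Cref{lem:t_wise_ind}, including the trick of applying part~(2) with a smaller $s'\le s$ to handle the intermediate range, is exactly how one extracts the $(O(s)/n)^k$ factor needed to match $1/\binom{n}{k}$, and the final balancing $2^{-\Omega(s)}(O(s))^k\le\varepsilon\,k!$ indeed forces $s\gtrsim\log(1/\varepsilon)+k\log\log(1/\varepsilon)$. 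Your identification of the obstacle --- that a naive uniform bound loses a factor $\binom{n}{k}$ --- and its resolution are both on point. The plan is sound; only the bookkeeping in the middle and far ranges needs to be written out carefully.
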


So we simplify the summation as
$$
\frac{O(1)}{{|B_J \cup Y| \choose |Y|}} \cdot 2^{-C_s \cdot t \cdot k + 1} \le \frac{2 k!}{|B_J|^k} \cdot 2^{-C_s \cdot t \cdot k + 1} \le 2k! \cdot 2^{-C_s \cdot t \cdot k + 1} \cdot \left( \frac{\ell}{0.9 \cdot k' \cdot |X \setminus Y|} \right)^k \le 2^{-0.5 C_s \cdot t \cdot k} / |X|^k,
$$

\noindent given $\ell=2^{t}$.

\subsection{Proof of \Cref{clm:kminwise_main_term}}\label{sec:proof_kminwise_mainterm}

First of all, it would be more convenient to rewrite the summation as 
\begin{align*}
    & \sum_{\theta \in [M]} \Pr_{\sigma : (C_e+1) \cdot k \text{-wise}}[\max \sigma(Y)=\theta \wedge \min \sigma(B_J) > \theta ] \cdot  \prod_{i \notin J} \left(\Pr_{\sigma \sim U}[\min \sigma(B_i) > \theta] \pm 2 \cdot 2^{-C_s \cdot t} \right) \\
    = & \sum_{\theta \in [M]} \Pr_{\sigma \sim U}[\max \sigma(Y)=\theta] \cdot \Pr_{\sigma : C_e \cdot k \text{-wise}}[\min \sigma(B_J) > \theta ] \cdot  \prod_{i \notin J} \left(\Pr_{\sigma \sim U}[\min \sigma(B_i) > \theta] \pm 2 \cdot 2^{-C_s \cdot t} \right) \\
    = & \sum_{\theta \in [M]} \frac{\theta^k-(\theta-1)^k}{M^k} \cdot \Pr_{\sigma : C_e \cdot k \text{-wise}}[\min \sigma(B_J) > \theta ] \cdot  \prod_{i \notin J} \left(\Pr_{\sigma \sim U}[\min \sigma(B_i) > \theta] \pm 2 \cdot 2^{-C_s \cdot t} \right).
\end{align*}

This proof considers the three cases in \Cref{lem:allocation_k_min_wise} according to the size of $|X|$. Recall that we have assumed $k = \log^{O(1)} N$. Set $\varepsilon := 2 \cdot 2^{-C_s \cdot t}$ for simplicity.

\paragraph{The first case of $|X| \le \ell^{0.9}$.}
\Cref{lem:allocation_k_min_wise} implies $|B_i| \le C_g + 10 \cdot \frac{k \log |X|}{\log N/ \log \log N},\ \forall i \in [\ell]$ and $|B_J| \le C_g \cdot k$. Thus the middle term $\Pr_{\sigma: C_e k\text{-wise}}[\min \sigma(B_J)>\theta]$ has no error. Let us focus on the product
$$
\prod_{i \notin J} \left(\Pr_{\sigma \sim U}[\min \sigma(B_i)>\theta] \pm 2 \cdot 2^{-C_s t} \right) = \prod_{i \notin J} \left( (1 - \theta / M)^{|B_i|} \pm \varepsilon
\right).
$$

Without loss of generality, we assume $J=\{\ell-k'+1,\ldots,\ell\}$ and $|B_1| \ge |B_2| \ge \cdots \ge |B_{\ell-k'}|$. When $\theta$ is small, say $(1-\theta/M)^{|B_1|} \ge 2^{-0.5 C_s \cdot t}$, this product has a small multiplicative error:
\begin{align*}
      & \prod_{i \notin J} \left( (1 - \theta / M)^{|B_i|} \pm \varepsilon \right) = \prod_{i \notin J} (1 - \theta / M)^{|B_i|} \cdot (1 \pm 2^{-0.5 C_s \cdot t}) \\
    = & ( 1 \pm \ell \cdot 2^{-0.5 C_s \cdot t} ) \cdot \prod_{i \notin J} (1 - \theta / M)^{|B_i|} = (1 \pm 2^{-(0.5 C_s - 1) \cdot t}) \cdot \prod_{i \notin J} (1 - \theta / M)^{|B_i|}.
\end{align*}

Otherwise $(1 - \theta / M)^{|B_1|} < 2^{-0.5 C_s \cdot t}$ is already small enough such that we only need to give a tail bound. Let $S := 0.5 k \cdot \log \log N$ and $B_{\notin J} := \bigcup_{i = 1}^{\ell - k'} B_i$ for convince.
\begin{enumerate}
    \item The easy case is $|B_{\notin J}| \ge |B_1| \cdot k \log \log N$, which tells that $(1 - \theta/M)^{|B_{\notin J}|} \le (2^{-0.5 C_s \cdot t})^{k \log \log N} = N^{-0.5 C_s \cdot k}$ is negligible. We show $\prod_{i \notin J} \left( (1 -\theta / M)^{|B_i|} \pm \varepsilon \right)$ is also negligible in this case.
    \begin{itemize}
        \item If $(1 - \theta / M)^{|B_S|} \le 2^{-t - 1}$, $\prod_{i = 1}^{S} \left( (1 - \theta / M)^{|B_i|} \pm \varepsilon \right) \le N^{-0.5 k}$ is sufficiently small.
    
        \item If not, then $\prod_{i = S + 1}^{\ell - k'}\left( (1 - \theta / M)^{|B_i|} \pm \varepsilon \right) \le 2 \cdot \prod_{i = S + 1}^{\ell - k'} ( 1 -\theta / M)^{|B_i|} = N^{-\Omega(k)}$ is sufficiently small.
    \end{itemize}

    \item When $|B_{\notin J}| \le |B_1| \cdot k \log \log N = \log^{O(1)} N$, this further implies $|B_i| \le C_g + 10 \cdot \frac{k (\log \log N)^2}{\log N}$.
    \begin{itemize}
        \item If $10 \cdot \frac{k (\log \log N)^2}{\log N} \ge C_g$, then $k \ge 0.1 C_g \cdot \frac{\log N}{(\log \log N)^2}$ such that the number of non-empty buckets is at least $\frac{k}{20\frac{k (\log \log N)^2}{\log N}}=\frac{\log N}{20 (\log \log N)^2}$. So we could prove $\prod_{i = S + 1}^{\ell - k'} \left( (1 - \theta / M)^{|B_i|} \pm \varepsilon \right)$ is negligible by considering $B_S$ again like the above case.

        \item If $10 \cdot \frac{k (\log \log N)^2}{\log N} < C_g$, then each bucket has at most $2 C_g$ elements. Also we have $k=O\left(\frac{\log N}{(\log \log N)^2}\right)$ and $|X|=O(k \log \log N)=O\left(\frac{\log N}{\log \log N}\right)$ such that $1/|X|^k=2^{-O\left(\frac{\log N}{\log \log N}\right)}$.
        
        From the condition $(1 - \theta / M)^{|B_1|} < 2^{-0.5 C_s \cdot t}$, we have $\theta / M \ge 1 - 2^{-(C_s \cdot t) / (4 C_g)}$. The number of such $\theta$'s is at most $M \cdot 2^{-(C_s \cdot t) / (4 C_g)}$. Using the fact $\frac{\theta^k - (\theta - 1)^k}{M^k} \le k / M$, this implies the additive error is $k \cdot 2^{-(C_s \cdot t) / (4 C_g)} < 2^{-3 C \cdot t}$.
    \end{itemize}
\end{enumerate}

\paragraph{The second case of $|X| \in (\ell^{0.9},\ell^{1.1})$.} The failure probability in \Cref{lem:allocation_k_min_wise} has a negligible impact on the final multiplicative error. Thus we assume that the max-load $\max_{i \in [\ell]} |B_i|$ is bounded by $2 \ell^{0.1}$.

When $\theta / M \le 0.5 \ell^{-0.2}$, $(1 - \theta / M)^{|B_J|} \ge 1 - |B_J| \cdot \theta / M \ge 1 - k \cdot \ell^{-0.1} \ge 0.5$. By the first statement of \Cref{lem:t_wise_ind}, we have
$$
\Pr_{\sigma : \text{$C_e k$-wise}}[\min \sigma(B_J) > \theta] = (1 - \theta / M)^{|B_J|} \pm (|B_J| \cdot \theta / M)^{C_e \cdot k} = (1 - \theta / M)^{|B_J|} \cdot \left( 1 \pm 2 \left( \frac{k}{\ell^{0.1}} \right)^{C_e \cdot k} \right),
$$

\noindent and for $i \notin J$:
$$
(1 - \theta / M)^{|B_i|} \pm \varepsilon = (1 - \theta / M)^{|B_i|} \cdot (1 \pm 2 \varepsilon).
$$

Since $k = \log^{O(1)} N$ and $\ell = 2^t$, which tells that $k \ll \ell$, the multiplicative error $B_J$ is bounded by $2 \left( \frac{k}{\ell^{0.1}} \right)^{C_e \cdot k} \le 2 \ell^{-0.05 C_e \cdot k}$. So
\begin{align*}
      & \Pr_{\sigma : \text{$C_e k$-wise}}[\min \sigma(B_J) > \theta] \cdot \prod_{i \notin J} \left( (1 - \theta / M)^{|B_i|} \pm \varepsilon \right) \\
    = & (1 - \theta / M)^{|X \setminus Y|} \cdot \left( 1 \pm \left( \frac{4}{\ell^{0.05 C_e \cdot k}} + 4 (\ell - k') \cdot \varepsilon \right) \right) \\
    = & (1 - \theta / M)^{|X \setminus Y|} \cdot (1 \pm 2^{-0.5 C_s \cdot t}).
\end{align*}

When $\theta / M > 0.5 \ell^{-0.2}$, we only need to estimate $\sum_{\theta > 0.5 M \cdot \ell^{-0.2}} \frac{\theta^k - (\theta - 1)^k}{M^k} \cdot \prod_{i \notin J} \left( (1 - \theta / M)^{|B_i|} + \varepsilon \right)$. Consider the $S := k \cdot C' \cdot \log \log N$ largest number among $\{|B_i|\}_{i \notin J}$, denoted by $b$.
\begin{enumerate}
    \item If $(1 - \theta / M)^b \ge \varepsilon \cdot \ell$, then
    \begin{align*}
          & \sum_{\theta > 0.5 M \cdot \ell^{-0.2} \wedge (1 - \theta / M)^b \ge \varepsilon \cdot \ell} \frac{\theta^k - (\theta - 1)^k}{M^k} \cdot \prod_{i \notin J} \left( (1 - \theta / M)^{|B_i|} + \varepsilon \right) \\
        \le & \sum_{\theta > 0.5 M \cdot \ell^{-0.2} \wedge (1 - \theta / M)^b \ge \varepsilon \cdot \ell} \frac{\theta^k - (\theta - 1)^k}{M^k} \cdot \prod_{i \notin J : |B_i| > b} \left( (1 - \theta / M)^{|B_i|} + \varepsilon \right) \\
        \le & \sum_{\theta > 0.5 M \cdot \ell^{-0.2} \wedge (1 - \theta / M)^b \ge \varepsilon \cdot \ell} \frac{\theta^k - (\theta - 1)^k}{M^k} \cdot e \cdot (1 - \theta / M)^{|X \setminus Y| - (S + k') \cdot 2 \ell^{0.1}} \\
        \le & \exp(1 - 0.9 |X| \cdot 0.5 \ell^{-0.2}) \le \exp(-\ell^{0.6}).
    \end{align*}

    \item If $(1 - \theta / M)^b < \varepsilon \cdot \ell$, then
    \begin{align*}
          & \sum_{\theta > 0.5 M \cdot \ell^{-0.2} \wedge (1 - \theta / M)^b < \varepsilon \cdot \ell} \frac{\theta^k - (\theta - 1)^k}{M^k} \cdot \prod_{i \notin J} \left( (1 - \theta / M)^{|B_i|} + \varepsilon \right) \\
        \le & \sum_{\theta > 0.5 M \cdot \ell^{-0.2} \wedge (1 - \theta / M)^b < \varepsilon \cdot \ell} \frac{\theta^k - (\theta - 1)^k}{M^k} \cdot (\varepsilon \cdot \ell + \varepsilon)^S \\
        \le & 2^{2 S - C' \cdot (C_s - 1) \cdot t \cdot S} = 1 / N^{O(k)}.
    \end{align*}
\end{enumerate}

\paragraph{The third case of $|X| \ge \ell^{1.1}$.} Again, the proof of this case is the same as the second case, and is thus omitted.
\section{Extractors}\label{sec:extractor}

We restate \Cref{lem:extractor_unif} here and finish its proof in this section.
Different from previous works, this randomness extractor has an extra property: $\Ext(U_n, s) = U_m$ for any fixed seed $s$.

\begin{lemma}\label{lem:extractor_unif_2}
    Given any $n$ and $k < n$, for any error $\varepsilon$, there exists a randomness extractor $\Ext : \{0, 1\}^n \times \{0, 1\}^d \to \{0, 1\}^m$ with $m = k/ 2$ and $d = O(\log (n / \varepsilon))$. Moreover, $\Ext$ satisfies an extra property: $\Ext(U_n, s) = U_m$ for any fixed seed $s$. 
\end{lemma}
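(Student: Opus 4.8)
The plan is to take a known \emph{strong, linear‑seeded} randomness extractor with the desired seed length and output length, and then ``patch'' it so that for \emph{every} fixed seed the induced map is $\mathbb{F}_2$‑linear \emph{of full rank}; full rank immediately forces $\Ext(U_n,s)=U_m$ while barely affecting the extractor guarantee. Recall (cf.\ \Cref{def:extractor}) that an extractor $E:\{0,1\}^n\times\{0,1\}^d\to\{0,1\}^m$ is \emph{linear‑seeded} if $L_s:=E(\cdot,s)$ is $\mathbb{F}_2$‑linear for each $s$, and \emph{strong} if $(S,E(X,S))\approx_{\varepsilon}(S,U_m)$ whenever $H_\infty(X)\ge k$, where $S=U_d$. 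The classical constructions of Nisan--Zuckerman, Trevisan and Guruswami--Umans--Vadhan \cite{NZ96,Trevisan,GUV} (the last precomposed with a linear condenser to bring the seed down to $O(\log(n/\varepsilon))$ while keeping output length $\Omega(k)$) supply such an $E$ with output length at least $k/2$ and $d=O(\log(n/\varepsilon))$; truncating the output to its first $k/2$ coordinates preserves both linearity and the extractor property, so we may take $m=k/2$. The single arithmetic fact we use is that a linear map $L:\{0,1\}^n\to\{0,1\}^m$ of rank exactly $m$ sends $U_n$ to $U_m$ \emph{exactly}, since every fibre of $L$ is a coset of $\ker L$ and thus has the same size $2^{n-m}$. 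It therefore suffices to force $\mathrm{rank}(L_s)=m$ for all $s$.

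Call a seed $s$ \emph{bad} if $\mathrm{rank}(L_s)<m$. For a bad $s$ the distribution $E(U_n,s)=L_s(U_n)$ is uniform on an affine subspace of size at most $2^{m-1}$, hence is at statistical distance at least $1/2$ from $U_m$. On the other hand, applying the strong‑extractor guarantee to the source $X=U_n$ (which has min‑entropy $n>k$) yields $\E_{s\sim U_d}\bigl[\mathrm{SD}(E(U_n,s),U_m)\bigr]\le\varepsilon$, so at most a $2\varepsilon$ fraction of the seeds are bad.

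Now define $\Ext(x,s):=L_s(x)$ for good $s$ and $\Ext(x,s):=\pi(x)$ for bad $s$, where $\pi:\{0,1\}^n\to\{0,1\}^m$ is an arbitrary fixed surjective linear map (say, projection onto the first $m$ coordinates). Then $\Ext(\cdot,s)$ is a rank‑$m$ linear map for \emph{every} $s$, so $\Ext(U_n,s)=U_m$ for every $s$ --- the promised extra property. Since $\Ext$ and $E$ agree on all good seeds, for any $X$ with $H_\infty(X)\ge k$,
\[
\mathrm{SD}\bigl(\Ext(X,U_d),\,U_m\bigr)\ \le\ \mathrm{SD}\bigl(E(X,U_d),\,U_m\bigr)+\Pr_{s\sim U_d}[\,s \text{ bad}\,]\ \le\ \varepsilon+2\varepsilon ,
\]
so $\Ext$ is a $(k,3\varepsilon)$‑extractor (indeed still a strong one, by the same estimate applied to the joint distribution $(S,\cdot)$); running the construction with base error $\varepsilon/3$ gives error $\varepsilon$ and only changes the hidden constant in $d=O(\log(n/\varepsilon))$. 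Explicitness is preserved: given $s$ one forms the matrix of $L_s$ and computes its rank in time $\poly(n)$, so the good/bad case split --- and hence $\Ext$ --- is efficiently computable whenever the base $E$ is.

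The genuinely delicate point, which I expect to be the main obstacle, is securing the base object: a \emph{strong, linear‑seeded} $(k,\varepsilon)$‑extractor with seed length $O(\log(n/\varepsilon))$ and output length $\Omega(k)$ for \emph{all} $k<n$. Trevisan's extractor is strong and linear‑seeded but its seed length is quadratic in $\log(n/\varepsilon)$; the near‑optimal extractors built from Parvaresh--Vardy–type codes \cite{GUV} have the right seed length but their condensing step raises the source polynomial to high powers and is not linear. The intended fix is to precompose with a \emph{linear} lossless condenser of seed length $O(\log(n/\varepsilon))$ that shrinks the entropy deficiency to $o(k)$, and then apply a strong linear‑seeded extractor for sources of entropy rate $1-o(1)$, where short‑seed linear constructions are available; the composition of two linear‑seeded maps is again linear‑seeded, so the patching argument above applies verbatim. (In all of our applications $k=\Theta(n)$, a regime in which a linear‑seeded extractor with the stated parameters is directly available, so this subtlety does not affect the rest of the paper.)
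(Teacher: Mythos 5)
Your patching argument is correct and takes a genuinely different route from the paper. Both constructions exploit the fact that a rank-$m$ linear map sends $U_n$ exactly to $U_m$, but you enforce surjectivity at the last moment: take a finished strong, linear-seeded extractor $E$, declare a seed $s$ bad when $\mathrm{rank}(L_s)<m$, bound the bad-seed fraction by $2\varepsilon$ via Markov applied to the strong-extractor guarantee on $X=U_n$, and replace $L_s$ by a fixed full-rank projection on the bad seeds, for a clean $3\varepsilon$ total error. The paper instead surjectivizes each \emph{intermediate} step of the iterated GUV construction: each linear condenser matrix (\Cref{lem:linear_condenser}) and each leftover-hash multiplier (\Cref{clm:new_leftover}) is modified to be full rank for every seed, and --- crucially --- after each step the working source is replaced by its projection onto the kernel of the just-applied map, so that the exact conditional-distribution structure is preserved through the iteration and no probabilistic bad-seed argument is ever needed. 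Your version is more modular; the paper's avoids any appeal to strongness of the outer object.

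The gap is the base object, which you flag but underestimate. You need a \emph{strong, linear-seeded} $(k,\varepsilon)$-extractor with $d=O(\log(n/\varepsilon))$ and output length $\Omega(k)$, and no off-the-shelf construction hands you this: Trevisan has $d=\Theta(\log^2(n/\varepsilon))$, the leftover hash lemma has $d=\Theta(n)$, and your proposed two-step fix (condense once, then extract from a high-rate source) does not work as stated, because a single GUV linear condenser leaves entropy deficiency on the order of $\alpha k$ for a \emph{constant} $\alpha$, not $o(k)$; the second step is then again ``extract $\Omega(k)$ bits from a constant-rate source with an $O(\log(n/\varepsilon))$-bit seed,'' which is the original problem. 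Closing this requires the iterated GUV-style construction (Theorem 5.12 in \cite{GUV}) with every layer verified to be linear, at which point the paper's per-step surjectivization comes essentially for free and makes the final-layer patch unnecessary. So the proposal is sound as an alternative \emph{reduction}, but the object it reduces to is built by essentially the same machinery the paper develops.
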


We show how to design explicit extractors for \Cref{lem:extractor_unif_2} in the rest of this section. Our construction will be based on the linear form of the lossless condenser in \cite{GUV,CI_GUV_linear}. 

\begin{definition}
    $h:\mathbf{F}_2^n \times \mathbf{F}_2^d \to \mathbf{F}_2^m$ is a $(k,\varepsilon)$-lossless condenser if for any random source $X$ of min-entropy at least $k$, the distribution $(Z, h(X,Z))$ is $\varepsilon$-close to some distribution with min-entropy at least $k+d$, where $Z$ is an independent seed uniformly distributed in $\mathbf{F}_2^d$.
\end{definition}

We strengthen the linear seeded condensers in \cite{GUV,CI_GUV_linear} such that they are surjective for every seed. We state the main properties of lossless condensers as follows, which reformulates Corollary 38 of \cite{CI_GUV_linear} with an extra surjective guarantee.

\begin{lemma}\label{lem:linear_condenser}
    Let $\alpha>0$ be any constant. For any $n$, any $k<n$,  and $\varepsilon>0$, there exists an explicit $(k,\varepsilon)$-lossless condenser $h : \mathbf{F}_2^n \times \mathbf{F}_2^d \to \mathbf{F}_2^m$ with $d \le (1 + 1 / \alpha) \cdot \log (n k / \varepsilon) + O(1)$ and $m \le d+(1+\alpha)k$. Moreover, this lossless condenser satisfies the following two properties:
    \begin{enumerate}
        \item For every seed $s \in \mathbf{F}_2^d$, $h(x,s)$ is a linear function on $x$.
        \item For every seed $s \in \mathbf{F}_2^d$, $h(x,s)$ is surjective such that $h(U_n,s)=U_m$.
    \end{enumerate}
\end{lemma}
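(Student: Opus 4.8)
The plan is to take an off‑the‑shelf \emph{linear} seeded lossless condenser and patch it so that its restriction to every fixed seed is a surjective linear map. The key observation is that for a linear map $L:\mathbf{F}_2^n\to\mathbf{F}_2^m$ the image $L(U_n)$ is uniform on $\mathrm{Im}(L)$, so ``$L(U_n)=U_m$'' is the same as ``$L$ is surjective'', i.e. ``$L$ has rank $m$''. Thus Property~2 follows once we guarantee full rank for every seed, and Property~1 (linearity in $x$) is simply inherited from the base construction; all the work is in arranging full rank uniformly over seeds while keeping the lossless‑condenser guarantee and the parameters $d$ and $m$.

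First I would invoke Corollary~38 of \cite{CI_GUV_linear} (the linearized form of the Guruswami--Umans--Vadhan condenser \cite{GUV}) with error $\varepsilon/2$: for the given $n$, $k<n$, $\varepsilon$ and constant $\alpha>0$ it produces an explicit $(k,\varepsilon/2)$-lossless condenser $h_0:\mathbf{F}_2^n\times\mathbf{F}_2^d\to\mathbf{F}_2^m$ with $d\le(1+1/\alpha)\log(nk/\varepsilon)+O(1)$ and $m\le d+(1+\alpha)k$, in which $h_0(\cdot,s)$ is $\mathbf{F}_2$-linear for each fixed $s$. Concretely, one reads $x\in\mathbf{F}_2^n$ as the coefficient vector of a polynomial $f_x$ of bounded degree over an extension field $\mathbf{F}_q$, and $h_0(x,s)$ outputs the evaluations of $f_x$ at a short list of points $P_s\subseteq\mathbf{F}_q$ determined by $s$; linearity in $x$ is immediate, and (in the regime $m\le n$ in which this lemma is applied) the degree bound is at least $|P_s|$, so $h_0(\cdot,s)$ has full rank $m$ exactly when the points of $P_s$ are pairwise distinct.

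Next I would locate the ``degenerate'' seeds and repair them. From the structure of the point lists in \cite{GUV,CI_GUV_linear} (multiplicative or additive shifts of a seed‑dependent base point), the points of $P_s$ fail to be distinct only for a small set $\mathcal{B}\subseteq\mathbf{F}_2^d$ of seeds---in the cleanest instantiation just the single seed whose base point is $0$---so in particular $|\mathcal{B}|\le\poly(n)$. I then define $h$ by $h(\cdot,s):=h_0(\cdot,s)$ for $s\notin\mathcal{B}$ and $h(\cdot,s):=L_\star$ for $s\in\mathcal{B}$, where $L_\star:\mathbf{F}_2^n\to\mathbf{F}_2^m$ is any fixed explicit surjective linear map (which exists since $m\le n$). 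Now $h(\cdot,s)$ is linear and surjective for every $s$, so Properties~1 and~2 hold and the seed and output lengths are unchanged. To see the condenser guarantee survives, take any source $X$ with $H_\infty(X)\ge k$ and an independent uniform $Z\in\mathbf{F}_2^d$: the pairs $(Z,h(X,Z))$ and $(Z,h_0(X,Z))$ coincide whenever $Z\notin\mathcal{B}$, so their statistical distance is at most $\Pr[Z\in\mathcal{B}]=|\mathcal{B}|\cdot 2^{-d}$, which is at most $\varepsilon/2$ since $2^d\ge nk/\varepsilon$ dwarfs $|\mathcal{B}|$; combining with the $(\varepsilon/2)$-guarantee of $h_0$ via the triangle inequality shows $(Z,h(X,Z))$ is $\varepsilon$-close to a distribution of min-entropy $\ge k+d$, so $h$ is a $(k,\varepsilon)$-lossless condenser with the claimed parameters.

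The hard part is the middle step: one must open up the construction of \cite{CI_GUV_linear} far enough to confirm that (i) for every seed it is genuinely an evaluation map, so that linearity and the ``rank $=$ number of distinct evaluation points'' reasoning apply, and (ii) surjectivity can fail only on a tiny set of seeds, of size $o(\varepsilon\cdot 2^d)$. Once these two structural facts are in hand, the remapping of bad seeds and the statistical‑distance bookkeeping are completely routine. (A secondary point to handle with care is the parameter regime $m\le n$, under which a surjective linear map $\mathbf{F}_2^n\to\mathbf{F}_2^m$ exists and the polynomial degree bound dominates $|P_s|$; this is exactly the regime relevant to \Cref{lem:extractor_unif_2}.)
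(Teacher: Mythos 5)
Your approach and the paper's both start from the linear lossless condenser of Corollary~38 of \cite{CI_GUV_linear} and both aim to repair the seeds whose associated linear maps fail to be surjective, but the two repair strategies are fundamentally different, and yours leaves a genuine gap that the paper's argument does not have.

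The gap in your proposal is exactly the step you flag as ``the hard part'': you need the set $\mathcal{B}$ of seeds with rank-deficient matrices to be small (of size $o(\varepsilon \cdot 2^d)$), because you replace those seeds' maps wholesale by a fixed surjective map $L_\star$ that has nothing to do with the condenser, and then pay for it with a statistical-distance term $|\mathcal{B}|\cdot 2^{-d}$. For this you must open up the internals of the \cite{CI_GUV_linear} construction and verify a structural claim about which evaluation-point lists are degenerate; that claim is plausible, but it is not established in your write-up, and it is unnecessary. The paper's repair is strictly gentler: for a bad seed $s$ with $h(x,s) = M_s x$ and $\mathrm{rank}(M_s) < m$, it keeps the independent rows of $M_s$ and only replaces the linearly \emph{dependent} rows by new rows making the matrix full rank, yielding $M'_s$ with $\ker(M'_s) \subseteq \ker(M_s)$. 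The key observation is then that for \emph{every} fixed $s$, passing from $M_s$ to $M'_s$ refines the coset partition of $\mathbf{F}_2^n$, so $M'_s X$ is never more concentrated than $M_s X$; equivalently, in the bipartite-graph view of a lossless condenser, the neighborhood of any left-set can only grow, so the lossless-expansion (and hence the $(k,\varepsilon)$-lossless-condenser) guarantee is preserved with \emph{zero} additional error and with no need to count how many seeds were bad. This buys you robustness: the argument goes through no matter how many seeds are degenerate, and the error budget and seed length are untouched rather than being split in half. If you wish to keep your route, you must actually prove the bound on $|\mathcal{B}|$ for the specific \cite{CI_GUV_linear} instantiation; otherwise, adopt the row-completion fix and the monotonicity-of-min-entropy (or neighborhood-monotonicity) argument, which makes the structural claim moot.
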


\begin{proof}
    Corollary 38 in \cite{CI_GUV_linear} provides a linear lossless condenser with the above parameters although it is not necessarily surjective. In particular, its construction provides a condenser \emph{for every $n$ and every $k<n$} as long as there exists an irreducible univariate polynomial of degree $n$ in some finite extension field of $\mathbf{F}_2$. Irreducible polynomials of degree $n$ always exist because the number of irreducible polynomials of degree $n$ in the finite field $\mathbf{F}_q$ is $\frac{1}{n}\sum_{d \mid n}\mu(n/d)q^d$ greater than 0 (by the Gauss formula). Moreover, there are efficient algorithms \cite{Shoup_irreducible} to find such an irreducible polynomial.
    
    Now we modify its construction to satisfy the second property.

    Specifically, for every seed $s$, if $h(x,s)$ is not surjective, we expand it into a surject function on $\mathbf{F}_2^m$. Namely, if $h(x,s)=M_s \cdot x$ for a matrix $M_s \in \mathbf{F}_2^{m \times n}$ whose rank is not $m$, then we replace every linearly dependent row in $M_s$ by an independent vector. Let $M'_s$ be the new matrix and $h'(x,s)=M'_s \cdot x$ be the new condenser after guaranteeing the new matrix is of rank $m$.

    Hence $h'(\cdot,s)$ is surjective and $h'(\cdot,s)$ is still linear. Moreover, the min-entropy of $h'(X,s)$ is not less than the min-entropy of $h(X,s)$ for any fixed $s$. Thus $H_{\infty}(Z,h'(X,Z)) \ge H_{\infty}(Z,h(X,Z))$.
\end{proof}

Similar to \Cref{lem:linear_condenser}, we modify the classical leftover hash lemma to construct a linear extractor $E$ of optimal error such that $E(U_n,s)=U_m$. 

\begin{claim}\label{clm:new_leftover}
    For any $n$, any $k<n$, and $m<k$, there exists a $(k,2 \cdot 2^{\frac{m-k}{2}})$-strong extractor $E:\{0,1\}^n \times \{0,1\}^{n-1} \rightarrow \{0,1\}^m$ such that
    \begin{enumerate}
        \item $E(x,s)$ is a linear function of $x$ for any seed $s$. 
        \item $E(U_n,s)$ is surjective such that $E(U_n,s)=U_m$ for any seed $s$.
    \end{enumerate}
\end{claim}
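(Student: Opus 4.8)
The plan is to instantiate the classical leftover hash lemma with a $2$-universal family of \emph{linear} maps whose seed set is the \emph{whole} field $\mathbf{F}_{2^{n-1}}$, and then repair the single non-surjective seed exactly as in the proof of \Cref{lem:linear_condenser}. Throughout we may assume $n\ge 2$ and $1\le m<k<n$, so that $m\le n-2$; the remaining cases are trivial. The first step is to fix an explicit degree-$(n-1)$ irreducible polynomial over $\mathbf{F}_2$ (which exists and is efficiently computable, exactly as invoked in the proof of \Cref{lem:linear_condenser}), and use it to identify the seed space $\{0,1\}^{n-1}$ with $\mathbf{F}_{2^{n-1}}$ and the source space $\{0,1\}^n$ with $\mathbf{F}_2\times\mathbf{F}_{2^{n-1}}$, both by fixed $\mathbf{F}_2$-linear bijections. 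Writing an input as $(x_0,x')$ and letting $e_1=(1,0,\dots,0)\in\mathbf{F}_2^m$, define for every seed $s\in\mathbf{F}_{2^{n-1}}$
\[
E_s(x_0,x')\ :=\ \mathsf{trunc}_m(s\cdot x')\ \oplus\ x_0\cdot e_1,
\]
where $s\cdot x'$ is multiplication in $\mathbf{F}_{2^{n-1}}$ and $\mathsf{trunc}_m$ keeps the first $m\le n-1$ coordinates.

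The core calculation is to check that $\{E_s\}$ is a genuinely $2$-universal family of $\mathbf{F}_2$-linear maps. Linearity in $(x_0,x')$ is immediate. For $(x_0,x')\ne(\tilde x_0,\tilde x')$: if $x'=\tilde x'$ the two outputs differ by $(x_0-\tilde x_0)e_1=e_1\ne 0$, so they never collide; if $x'\ne\tilde x'$ then $s\cdot(x'-\tilde x')$ is uniform over \emph{all} of $\mathbf{F}_{2^{n-1}}$ since $s$ is uniform, hence $\mathsf{trunc}_m$ of it is uniform over $\mathbf{F}_2^m$ and the collision probability is exactly $2^{-m}$. Therefore the leftover hash lemma yields that $E(x,s):=E_s(x)$ is a $(k,\tfrac12\cdot 2^{(m-k)/2})$-strong extractor.

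Next I would handle surjectivity and the fix-up. For $s\ne 0$ the map $x'\mapsto s\cdot x'$ is a bijection of $\mathbf{F}_{2^{n-1}}$ and $\mathsf{trunc}_m$ is onto $\mathbf{F}_2^m$, so $E_s$ is a surjective linear map, whence $E_s(U_n)=U_m$. The only exception is $s=0$, where $E_0(x_0,x')=x_0e_1$; replace $E_0$ by the linear surjection sending $(x_0,x')$ to its first $m$ coordinates, obtaining a family $\{E'_s\}$ and extractor $E'(x,s):=E'_s(x)$. Then every $E'_s$ is linear and surjective, so $E'(U_n,s)=U_m$ for every seed (properties~1 and 2). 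Finally, $E'$ and $E$ agree on all but one of the $2^{n-1}$ seeds and statistical distance lies in $[0,1]$, so the strong-extractor error of $E'$ exceeds that of $E$ by at most $2^{-(n-1)}=2^{1-n}$; since $m\ge 1$ and $k\le n-1$ we have $2^{1-n}\le 2^{(m-k)/2}$, hence $E'$ has error at most $\tfrac12 2^{(m-k)/2}+2^{1-n}\le \tfrac32 2^{(m-k)/2}<2\cdot 2^{(m-k)/2}$, as required.

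The one genuinely delicate point — the main obstacle — is that the seed set must be the whole field $\mathbf{F}_{2^{n-1}}$ rather than an $(n-1)$-dimensional \emph{subspace} of $\mathbf{F}_{2^n}$: for such a subspace, $\mathsf{trunc}_m(s\cdot\Delta)$ vanishes on a fraction $2^{-m}$ or $2^{1-m}$ of seeds depending on $\Delta$, i.e.\ only $2$-almost-universality, which loses a constant factor in the error and is useless when $m\ll k$. The extra coordinate $x_0\cdot e_1$ is exactly what lets us spend all $n-1$ seed bits inside the single smaller field while still separating inputs that differ only in $x_0$; everything else is routine field bookkeeping and the elementary leftover-hash-lemma estimate.
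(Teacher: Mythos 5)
Your proof is correct but takes a genuinely different route from the paper's, so let me compare them.

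The paper works directly in the big field $\mathbf{F}_{2^n}$: it associates each of the $2^{n-1}$ seeds with a \emph{distinct nonzero} element $y_s\in\mathbf{F}_{2^n}$ and sets $E(x,s):=(x\cdot y_s)_m$. Surjectivity for every seed is automatic because $y_s\ne 0$. The price is that the seed set is only half of $\mathbf{F}_{2^n}$, so the family is no longer exactly the classical leftover-hash family; the paper absorbs this by a simple averaging (``Markov'') argument: restricting a nonnegative average to a subset of half the mass at most doubles it, yielding the stated $2\cdot 2^{(m-k)/2}$.

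You instead split the source as $\mathbf{F}_2\times\mathbf{F}_{2^{n-1}}$, take the seed space to be \emph{all} of $\mathbf{F}_{2^{n-1}}$, and define $E_s(x_0,x')=\mathsf{trunc}_m(s\cdot x')\oplus x_0 e_1$. The $x_0e_1$ term is the key move: it turns what would otherwise be a $2$-almost-universal family (because multiplication-then-truncate over a proper subspace of seeds only gives collision probability $2^{-m}$ or $2^{1-m}$) into a genuinely $2$-universal one, so the sharp leftover hash lemma applies verbatim with error $\tfrac12\cdot 2^{(m-k)/2}$. You then pay a small additive repair: the single seed $s=0$ is not surjective and is swapped for a fixed linear projection, costing at most $2^{1-n}\le 2^{(m-k)/2}$ in statistical distance. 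The net error $\tfrac32\cdot 2^{(m-k)/2}$ is in fact slightly better than the paper's bound, though both clear the claimed $2\cdot 2^{(m-k)/2}$. In short, the paper trades a clean full family for a sub-sampled one and pays a multiplicative factor of $2$; you trade a clean seed set for a one-seed repair and pay an additive $2^{1-n}$. Both deliver linearity and per-seed surjectivity, and both use the same irreducible-polynomial machinery already invoked for \Cref{lem:linear_condenser}.

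One micro-comment: when you argue the repair costs at most $2^{1-n}$, it is worth being explicit that the strong-extractor distance decomposes as $\E_{s}[\Delta(E'_s(X),U_m)]$, so changing the extractor on one seed shifts the total by at most $\Pr[S=0]\cdot 1$. You clearly have this in mind, and the arithmetic confirming $2^{1-n}\le 2^{(m-k)/2}$ from $m\ge 1$, $k\le n-1$ is correct.
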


\begin{proof}
    We consider the extension field $\mathbf{F}_{2^n}$ of size $2^n$ and view each element $\alpha \in \mathbf{F}_{2^n}$ as a vector in $\{0,1\}^n$. 
    For every seed $s \in \{0,1\}^{n-1}$, we pick a distinct \emph{non-zero} element $y_s \in \mathbf{F}_{2^n}$ and define $E(x,s):=(x \cdot y_s)_{m}$ to output the first $m$ bits of the product $x \cdot y_s \in \mathbf{F}_{2^n}$. This guarantees that $E(U_n,s)=U_m$ for any seed $s$ (since $y_s \neq 0$).
    
    Then we bound its error by $2 \cdot 2^{\frac{m-k}{2}}$. The standard leftover hash lemma shows that $\big( y,(xy)_{m} \big) \approx_{2^{\frac{m-k}{2}}} (U_n,U_m)$ when $x \sim X$ of min-entropy $k$ and $y \sim \mathbf{F}_{2^n}$. However, the support size of $y$ is $2^{n-1}$ instead of $2^n$ in our construction. But this will only increase the error by a factor of 2 (via the Markov inequality).   
\end{proof}

The two extra properties in \Cref{clm:new_leftover} are identical to the properties in \Cref{lem:linear_condenser}. These properties help us to design an extractor such that $E(U_n,s)=U_m$. The second property guarantees that the output is uniformly distributed over $\mathbf{F}_2^m$ for any seed $s$ whenever we apply \Cref{lem:linear_condenser} or \Cref{clm:new_leftover} to a uniform source. The first property further shows that $U_n \mid E(U_n,s)$ is still a uniform random source in the dual space of $h(\cdot,s)$ of dimension $n-m$. This allows our construction to continue this type of randomness extraction and condensation.

While our construction of \Cref{lem:extractor_unif_2} follows the same outline of Theorem 5.12 in \cite{GUV}, one \emph{subtle difference} is that after every application of \Cref{lem:linear_condenser} or \Cref{clm:new_leftover}, we replace the random source $X$ by its projection onto the dual space of the linear map.

\begin{proof}[Proof of \Cref{lem:extractor_unif_2}]
    The difference between our construction and Theorem 5.12 in \cite{GUV} are (1) we replace every operation of leftover hash lemma by \Cref{clm:new_leftover} and replace every operation of lossless condensers by \Cref{lem:linear_condenser}; (2) after each operation, we project the random source to the dual space of that operation. Specifically, let $X_0 \in \mathbf{F}_2^n$ be the initial random source and $X_i \in \mathbf{F}_2^{n_i}$ be the random source after applying $i$ times \Cref{clm:new_leftover} and \Cref{lem:linear_condenser}. For example, suppose the $(i+1)$-th operation applies the lossless condenser $h_i:\mathbf{F}_2^{n_i} \times \mathbf{F}_2^{d_i} \rightarrow \mathbf{F}_2^{m_i}$ with seed $s_i$ in \Cref{lem:linear_condenser}, say $h_i(X_{i},s_i):=A \cdot X_{i}$ for some full rank matrix $A \in \mathbf{F}_2^{m_i \times n_i}$. Then we set the next random source to be $X_{i+1}:=A^{\bot} \cdot X_i$ where  $A^{\bot} \in \mathbf{F}_2^{(n_i-m_i) \times n_i}$ is the dual of $A$. This works because \Cref{lem:linear_condenser} and \Cref{clm:new_leftover} hold for any $n$ and any $k<n$.

    To prove $E(U_n,s)=U_m$, we use the two extra properties in \Cref{lem:linear_condenser} and \Cref{clm:new_leftover}. Our modification guarantees that every $X_i$ is uniform (by inductions).

    The analysis of the error follows the same argument of \cite{GUV}. A small difference is to bound the min-entropy $X_{i+1}$ given $h_i(X_i,s_i)$. Since $h_i(\cdot,s_i)$ is linear, $X_{i+1}$ is the exact distribution of $X_i$ conditioned on $h_i(X_i,s_i)$ such that $H_{\infty} \big( X_{i+1} \big)=H_{\infty}\big( X_i\mid h_i(X_i,s_i) \big)$.
\end{proof}

\section*{Acknowledgements}

We appreciate anonymous reviewers for their helpful comments. We are also grateful to  Parikshit Gopalan for pointing out some typos in an earlier version of this paper.

\bibliographystyle{alpha} 
\bibliography{rand}

\appendix

\section{Connection between Min-wise Hash and PRG for Combinatorial Rectangles}\label{sec:hash_poly_error}

Here we describe the connection between min-wish hash families and pseudo-random generators for combinatorial rectangles \cite{SSZZ00, GY15}.

As mentioned earlier $M=\Omega(N)$, we assume $\Pr_{h \sim U}[h(y) < \min h(X \setminus y)] \ge 1 / (2 |X|)$ and $\Pr_{h \sim U}[\max h(Y) < \min h(X \setminus Y)] \ge 1/ (2 {|X| \choose k})$ for $Y \subseteq X$ of size $k$ in this work.

\begin{lemma}\label{lem:reduction}
    Let $G:\{0,1\}^{s} \rightarrow [M]^N$ be a PRG that fools combinatorial rectangles within additive error $\delta$. Then $G$ provides a min-wise hash family of size $2^s$ and error $2NM \cdot \delta$ and a $k$-min-wise hash family of size $2^s$ and error $\frac{N^k}{k!} \cdot 4M  \delta$.
\end{lemma}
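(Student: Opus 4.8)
The plan is to reproduce (and make quantitative) the Saks--Srinivasan--Zhou--Zuckerman reduction: expand the min-wise event over the value taken at the distinguished coordinate(s), observe that each resulting term is a combinatorial rectangle (or a difference of two of them), and then push the PRG's additive guarantee through the sum, finally converting the accumulated additive error into a multiplicative one using the assumed lower bound on the fair probability.

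First I would handle the min-wise case. Fix $X \subseteq [N]$ and $y \in X$, and enumerate $\theta := h(y)$ to write
\[
\Pr_{h}[h(y) < \min h(X \setminus y)] = \sum_{\theta \in [M]} \Pr_{h}[h(y) = \theta \wedge \min h(X \setminus y) > \theta].
\]
For each fixed $\theta$, the event $h(y) = \theta \wedge \min h(X \setminus y) > \theta$ is a combinatorial rectangle in $[M]^N$: it factors as $\mathbf{1}(v_y = \theta) \cdot \prod_{x \in X \setminus y} \mathbf{1}(v_x > \theta)$ times the trivial constraint on the remaining coordinates. Hence $G$ fools each term within additive error $\delta$, and summing the $M$ terms gives
\[
\Pr_{h = G(s)}[h(y) < \min h(X \setminus y)] = \Pr_{h \sim U}[h(y) < \min h(X \setminus y)] \pm M\delta.
\]
Since $\Pr_{h \sim U}[h(y) < \min h(X \setminus y)] \ge 1/(2|X|)$ and $|X| \le N$, the additive error $M\delta$ is at most $2NM\delta$ times the uniform probability, yielding the claimed multiplicative error $2NM\delta$ (which, together with the standard fact that this uniform probability is $\tfrac{1}{|X|}(1\pm o(1))$ for $M$ large, gives a min-wise hash family).

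For the $k$-min-wise case the same scheme applies, with the single twist that $\max h(Y) = \theta$ is not itself a combinatorial rectangle. I would resolve this by the telescoping identity $\mathbf{1}(\max h(Y) = \theta) = \mathbf{1}(\max h(Y) \le \theta) - \mathbf{1}(\max h(Y) \le \theta - 1)$, so that
\[
\Pr_h[\max h(Y) = \theta \wedge \min h(X \setminus Y) > \theta] = \Pr_h[R_\theta] - \Pr_h[R'_\theta],
\]
where $R_\theta$ (resp.\ $R'_\theta$) is the combinatorial rectangle ``$h(y) \le \theta$ (resp.\ $\le \theta - 1$) for all $y \in Y$, and $h(x) > \theta$ for all $x \in X \setminus Y$''. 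Each of $R_\theta, R'_\theta$ is fooled within error $\delta$, so each term of the $\theta$-sum carries error at most $2\delta$; summing over $\theta \in [M]$ gives additive error $2M\delta$ between the $G$-probability and the uniform probability of $\max h(Y) < \min h(X \setminus Y)$. Finally, invoking $\Pr_{h\sim U}[\max h(Y) < \min h(X\setminus Y)] \ge 1/(2\binom{|X|}{|Y|})$ and the crude bound $\binom{|X|}{|Y|} \le \binom{N}{k} \le N^k/k!$ converts the additive error $2M\delta$ into the multiplicative error $\tfrac{N^k}{k!}\cdot 4M\delta$.

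The only genuinely non-routine point is recognizing that $\max h(Y) = \theta$, though not a rectangle, is a difference of two rectangles; everything else is the decomposition over $\theta$ plus bookkeeping of the additive-to-multiplicative conversion, where the leading constants $2$ and $4$ are exactly what the lower bounds $1/(2|X|)$ and $1/(2\binom{|X|}{|Y|})$ supply. A minor subtlety to state carefully is that the telescoping produces a signed sum, so one should bound $|\Pr_h[R_\theta]-\Pr_h[R'_\theta] - (\Pr_U[R_\theta]-\Pr_U[R'_\theta])|$ by the triangle inequality term-by-term rather than trying to fool the difference directly.
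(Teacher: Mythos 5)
Your proposal matches the paper's own proof essentially line for line: decompose over $\theta = h(y)$ (resp.\ $\theta = \max h(Y)$), observe each summand is a combinatorial rectangle (resp.\ a telescoping difference of two rectangles), accumulate additive error $M\delta$ (resp.\ $2M\delta$), and divide by the fair-probability lower bounds $1/(2N)$ and $k!/(2N^k)$ to get the multiplicative errors $2NM\delta$ and $\frac{N^k}{k!}\cdot 4M\delta$. Nothing to add; the approach and constants are the same.
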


\begin{proof}
    Let $x_1, \ldots, x_n$ be the elements in $X \setminus y$. We rewrite $\Pr_{h \sim \mathcal{H}}[h(y) < \min h(X \setminus y)]$ as
    $$
    \sum_{\theta \in [M]} \Pr_{h \sim \mathcal{H}}\left[ h(y) = \theta \wedge h(x_1) > \theta \wedge \cdots \wedge h(x_n) > \theta \right].
    $$
    
    \noindent Since $\mathbf{1}(h(y) = \theta \wedge h(x_1) > \theta \wedge \cdots \wedge h(x_n) > \theta)$ is a combinatorial rectangle in $[M]^N$, the additive error of each term is at most $\delta$. Hence the total additive error is $M \cdot \delta$. Then, by the assumption $\Pr_{h \sim U}[h(y) < \min h(X \setminus y)] \ge 1 / (2 N)$, we have
    \begin{align*}
          & \sum_{\theta \in [M]} \Pr_{h \sim \mathcal{H}}\left[ h(y) = \theta \wedge h(x_1) > \theta \wedge \cdots \wedge h(x_n) > \theta \right] \\
        = & \sum_{\theta \in [M]} \Pr_{h \sim U}\left[ h(y) = \theta \wedge h(x_1) > \theta \wedge \cdots \wedge h(x_n) > \theta \right] \pm M \cdot \delta \\
        = & \Pr_{h \sim U}[h(y) < \min h(X \setminus y)] \cdot (1 \pm 2 N M \delta).
    \end{align*}

    Similarly, for the $k$-min-wise hash, we express $\Pr[\max h(Y)=\theta \wedge \min h(X \setminus Y)>\theta]$ as 
    $$
    \Pr[\max h(Y) \le \theta \wedge \min h(X \setminus Y)>\theta] -         \Pr[\max h(Y) \le \theta-1 \wedge \min h(X \setminus Y)>\theta].
    $$
    
    \noindent Then we rewrite $\Pr_{h \sim \mathcal{H}}[\max h(Y) < \min h(X \setminus Y)]$ as
    \begin{align*}
          & \sum_{\theta \in [M]} \Pr_{h \sim \mathcal{H}}[\max h(Y)=\theta \wedge \min h(X \setminus Y)>\theta] \\
        = & \sum_{\theta \in [M]} \left(  \Pr_{h \sim \mathcal{H}}[\max h(Y) \le \theta \wedge \min h(X \setminus Y)>\theta] -         \Pr_{h \sim \mathcal{H}}[\max h(Y) \le \theta - 1 \wedge \min h(X \setminus Y)>\theta] \right)\\
        = & \sum_{\theta \in [M]} \left(     \Pr_{h \sim U}[\max h(Y) \le \theta \wedge \min h(X \setminus Y)>\theta] -         \Pr_{h \sim U}[\max h(Y) \le \theta-1 \wedge \min h(X \setminus Y)>\theta] \right) \pm 2 M \delta \\
        = & \Pr_{h \sim U}[\max h(Y) < \min h(X \setminus Y)] \pm 2 M \delta.
    \end{align*}
    
    \noindent Since we assume $\Pr_{h \sim U}[\max h(Y) < \min h(X \setminus Y)] \ge k! / (2 N^k)$, this $k$-min-wise hash family has an error $\frac{N^k}{k!} \cdot 4 M \delta$.
\end{proof}

Plugging \Cref{thm:PRG_comb_rect} to \Cref{lem:reduction}, Gopalan and Yehudayoff \cite{GY15} had the following results of min-wise and $k$-min-wise hash with small errors.

\begin{theorem}\label{thm:min_wise_small_error}
    Given any $N$ and multiplicative error $\delta$, there is an explicit min-wise hash family of seed length $O\big( \log (NM/\delta) \cdot \log \log (N M / \delta) \big)$.

    More generally, given any $k$, there is an explicit $k$-min-wise hash family of seed length $O\big( (k \log N + \log (1/\delta)) \cdot \log (k\log N + \log (1/\delta)) \big)$.
\end{theorem}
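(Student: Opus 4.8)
The plan is to derive \Cref{thm:min_wise_small_error} as a direct instantiation of the black-box reduction \Cref{lem:reduction} with the combinatorial-rectangle PRG of \Cref{thm:PRG_comb_rect}, choosing the PRG's additive error small enough that the polynomial blow-up in \Cref{lem:reduction} still leaves multiplicative error $\delta$. Throughout I write $\varepsilon$ for the additive error fed to \Cref{thm:PRG_comb_rect} (the symbol $\delta$ in \Cref{lem:reduction} plays that role there).

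First, for the min-wise case: \Cref{lem:reduction} turns a PRG $G:\{0,1\}^s\to[M]^N$ fooling combinatorial rectangles in $[M]^N$ to additive error $\varepsilon$ into a min-wise hash family of size $2^s$ and multiplicative error $2NM\varepsilon$. I would set $\varepsilon:=\delta/(2NM)$ and invoke \Cref{thm:PRG_comb_rect} with this $\varepsilon$; since the hash family is literally $G$, its seed length equals the PRG's, namely
\[
O\!\left(\log\!\Big(\tfrac{M\log N}{\varepsilon}\Big)\cdot\log\log\tfrac{M}{\varepsilon}\right)=O\!\left(\log\!\Big(\tfrac{NM\log N}{\delta}\Big)\cdot\log\log\tfrac{NM}{\delta}\right)=O\big(\log(NM/\delta)\cdot\log\log(NM/\delta)\big),
\]
using $\log(NM\log N/\delta)=O(\log(NM/\delta))$. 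Explicitness is inherited verbatim.

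Next, for the $k$-min-wise case: \Cref{lem:reduction} turns the same PRG into a $k$-min-wise family of multiplicative error $\tfrac{N^k}{k!}\cdot 4M\varepsilon$, so I would take $\varepsilon:=\tfrac{k!\,\delta}{4MN^k}$, whence $\log(1/\varepsilon)=O(k\log N+\log M+\log(1/\delta))$. Plugging into \Cref{thm:PRG_comb_rect} gives seed length
\[
O\!\left(\log\!\Big(\tfrac{M\log N}{\varepsilon}\Big)\cdot\log\log\tfrac{M}{\varepsilon}\right)=O\big((k\log N+\log(1/\delta))\cdot\log(k\log N+\log(1/\delta))\big),
\]
after absorbing $\log M$ using the standing normalization $M=\Omega(N/\delta)$ with $M=(N/\delta)^{O(1)}$, so that $\log M=O(\log N+\log(1/\delta))$; again explicitness is inherited.

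Since both parts are one-line substitutions into already-established results, I do not expect a genuine obstacle. The only care needed is the routine parameter bookkeeping: verifying that $\varepsilon=\delta/(2NM)$ (resp. $\varepsilon=k!\delta/(4MN^k)$) collapses the seed-length expression of \Cref{thm:PRG_comb_rect} to the claimed form, and invoking $M=\Omega(N/\delta)$ so that the uniform family already meets the multiplicative target (cf. \eqref{eq:fair_prob}) and the $\log M$ terms are absorbed. If one prefers not to bound $M$ polynomially, the bounds should simply retain an additive $\log M$ inside the first logarithmic factor.
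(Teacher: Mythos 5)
Your proposal is correct and takes exactly the route the paper intends: the paper's entire justification is the one-liner ``Plugging \Cref{thm:PRG_comb_rect} to \Cref{lem:reduction}$\ldots$'', and you have simply carried out the routine substitution of $\varepsilon=\delta/(2NM)$ (resp.\ $\varepsilon=k!\,\delta/(4MN^k)$) and absorbed $\log M$ using the standing normalization $M=(N/\delta)^{O(1)}$. (A tiny bookkeeping slip: with $\varepsilon=\delta/(2NM)$ you get $M/\varepsilon = 2NM^2/\delta$, not $NM/\delta$, but this only changes constants inside the logarithms, so the final $O(\log(NM/\delta)\cdot\log\log(NM/\delta))$ bound is unaffected.)
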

\section{Omitted Calculation in \Cref{sec:second_proof_independent_seeds}}\label{sec:proof_X_big_ind_seed}

Here we complete the calculation omitted for the third case of $|X| > \ell^{1.1}$ in \Cref{sec:second_proof_independent_seeds}. Recall that $\varepsilon = 2 \cdot 2^{-C_s \cdot t}$, and $\mathcal{H}'$ is the hash function family after replacing $s_1, \ldots, s_\ell$ by independent random samples in $\{0, 1\}^{C_e \cdot t}$.

We want to prove that for any $X \subseteq [N]$ with size larger than $\ell^{1.1}$ and any $y \in X$, it holds that $\Pr_{h' \sim \mathcal{H}'}[h'(y) < \min h'(X \setminus y)] = (1 \pm 2^{-2 C \cdot t}) \cdot \Pr_{\sigma \sim U}[\sigma(y) < \min \sigma(X \setminus y)]$.

Again, by \Cref{lem:allocation_small_error}, the failure probability only affects the multiplicative error by at most $|X| / |X|^{3 C} \le 1 / \ell^{1.1 \cdot (3 C - 1)} < 2^{-2.5 C \cdot t}$. 

Hence, we can suppose that in this case all buckets satisfy $|B_i| = (1 \pm 0.1) \cdot |X| / \ell$.

When $\frac{|X|}{\ell} \cdot \frac{\theta}{M} \le \ell^{-0.1}$, there comes $(1 - \theta / M)^{|B_i|} \ge 1 - |B_i| \cdot \theta / M \ge 1 - 1.1 / \ell^{0.1} \ge 0.5,\ \forall i \in [\ell]$. We use the first statement in \Cref{lem:t_wise_ind} to estimate $\Pr_{\sigma : \text{$0.1 C_s$-wise}}[ \min \sigma(B_{j}) > \theta ]$:
\begin{align*}
    \Pr_{\sigma : \text{$0.1 C_s$-wise}}[\min \sigma(B_{j}) > \theta] & = (1 - \theta / M)^{|B_{j}|} \pm \left( |B_{j}| \cdot \dfrac{\theta}{M} \right)^{0.1 C_s} \\
     & = (1 - \theta / M)^{|B_{j}|} \left( 1 \pm 2 \left( \dfrac{1.1}{\ell^{0.1}} \right)^{0.1 C_s} \right) \\
     & = (1 - \theta / M)^{|B_{j}|} \left( 1 \pm O(1 / \ell^{0.01 C_s}) \right).
\end{align*}

For the remaining buckets, we have
$$
(1 - \theta / M)^{|B_i|} \pm \varepsilon = (1 - \theta / M)^{|B_i|} \left( 1 \pm 2 \varepsilon \right),\ \forall i \ne j.
$$

Thus, for relatively small $\theta$, we obtain that
\begin{align*}
     & \frac{1}{M} \sum_{\frac{|X|}{\ell} \cdot \frac{\theta}{M} \le \ell^{-0.1}} \Pr_{\sigma : \text{$0.1 C_s$-wise}}[\min \sigma(B_{j}) > \theta] \cdot \prod_{i \ne j} ((1 - \theta / M)^{|B_i|} \pm \varepsilon) \\
    = & \frac{1}{M} \sum_{\frac{|X|}{\ell} \cdot \frac{\theta}{M} \le \ell^{-0.1}} (1 - \theta / M)^{|X| - 1} \cdot \left( 1 \pm \left( O(1 / \ell^{0.01 C_s}) + 4 (\ell - 1) \varepsilon \right) \right) \\
    = & \frac{1}{M} \sum_{\frac{|X|}{\ell} \cdot \frac{\theta}{M} \le \ell^{-0.1}} (1 - \theta / M)^{|X| - 1} \cdot \left( 1 \pm 2^{-0.001 C_s \cdot t} \right).
\end{align*}

When $\frac{|X|}{\ell} \cdot \frac{\theta}{M} > \ell^{-0.1}$, we need to bound $\frac{1}{M} \sum_{\frac{|X|}{\ell} \cdot \frac{\theta}{M} > \ell^{-0.1}} \prod_{i \ne j} ((1 - \theta / M)^{|B_i|} + \varepsilon)$. Suppose that $j = \ell$, and $1.1 \cdot |X| / \ell \ge |B_1 | \ge |B_2 | \ge \cdots \ge |B_{\ell - 1}|$, and set $S = C' \cdot \log \log N \le \ell - 1$. We split the sum into two cases depending on whether $(1 - \theta / M)^{|B_S|} > \varepsilon \cdot \ell$.
\begin{enumerate}
    \item For $(1 - \theta / M)^{|B_S|} > \varepsilon \cdot \ell$, since $|X| \cdot \theta / M = \ell \cdot \frac{|X|}{\ell} \cdot \frac{\theta}{M} \ge \ell^{0.9}$, we further simplify it as
    \begin{align*}
         & \dfrac{1}{M} \sum_{\frac{|X|}{\ell} \cdot \frac{\theta}{M} > \ell^{-0.1} \wedge (1 - \theta / M)^{|B_S|} > \varepsilon \cdot \ell} \prod_{i \ne j} ((1 - \theta / M)^{|B_i|} + \varepsilon) \\
        \le & \dfrac{1}{M} \sum_{\frac{|X|}{\ell} \cdot \frac{\theta}{M} > \ell^{-0.1} \wedge (1 - \theta / M)^{|B_S|} > \varepsilon \cdot \ell} \prod_{i = S + 1}^{\ell - 1} (1 - \theta / M)^{|B_i|} (1 + 1 / \ell) \\
        \le & \dfrac{1}{M} \sum_{\frac{|X|}{\ell} \cdot \frac{\theta}{M} > \ell^{-0.1} \wedge (1 - \theta / M)^{|B_S|} > \varepsilon \cdot \ell} e \cdot (1 - \theta / M)^{|X| - 1.1 (S + 1) \cdot |X| / \ell} \\
        \le & \exp( 1 - (|X| - 1.1 (S + 1) \cdot |X| / \ell) \cdot \theta / M ) \le \exp(-\ell^{0.8}).
    \end{align*}

    \item For $(1 - \theta / M)^{|B_S|} \le \varepsilon \cdot \ell$, we bound it as
    \begin{align*}
         & \dfrac{1}{M} \sum_{\frac{|X|}{\ell} \cdot \frac{\theta}{M} > \ell^{-0.1} \wedge (1 - \theta / M)^{|B_S|} \le \varepsilon \cdot \ell} \prod_{i \ne j} ((1 - \theta / M)^{|B_i|} + \varepsilon) \\
        \le & \dfrac{1}{M} \sum_{\frac{|X|}{\ell} \cdot \frac{\theta}{M} > \ell^{-0.1} \wedge (1 - \theta / M)^{|B_S|} \le \varepsilon \cdot \ell} \prod_{i = 1}^S (\varepsilon \cdot \ell + \varepsilon) \\
        \le & \dfrac{1}{M} \sum_{\frac{|X|}{\ell} \cdot \frac{\theta}{M} > \ell^{-0.1} \wedge (1 - \theta / M)^{|B_S|} \le \varepsilon \cdot \ell} \prod_{i = 1}^S 2^{2 - (C_s - 1) \cdot t} \\
        \le & 2^{2 C' \cdot \log \log N - C' \cdot (C_s - 1) \cdot \log N} = N^{-O(1)}.
    \end{align*}
\end{enumerate}

Thus, we bound the total multiplicative error as
$$
\dfrac{1}{|X|^{3 C - 1}} + 2^{-0.001 C_s \cdot t} + |X| \cdot \left( \exp(-\ell^{0.8}) + N^{-O(1)} \right) < 2^{-2 C \cdot t}.
$$
\section{Proof of \Cref{lem:allocation_k_min_wise}}\label{sec:proof_allocation_k_min_wise}

Here we complete the proof of \Cref{lem:allocation_k_min_wise}. We reproduce the lemma below for easy reference.

\begin{lemma}
    For the allocation function $g$, let $B_i := \{x \in X \setminus Y : g(x) = i\}$. Particularly, define $B_J := \bigcup_{i = 1}^{k'} B_{j_i}$. Then $g$ guarantees that:
    \begin{enumerate}
        \item When $|X| \le \ell^{0.9}$, with probability $1 - \frac{1}{\ell^{3 C} \cdot |X|^k}$, $|B_i| \le C_g + 10 \cdot \frac{k \log |X|}{\log N / \log \log N}$ for all $i \in [\ell]$ and $|B_J| \le C_g \cdot k$.

        \item When $|X| \in (\ell^{0.9}, \ell^{1.1})$, with probability $1 - 1 / \ell^{3 C \cdot k}$, the max-load $\max_{i \in [\ell]} |B_i| \le 2 \ell^{0.1}$.
        
        \item When $|X| \ge \ell^{1.1}$, with probability $1-|X|^{-3 C \cdot k}$, all buckets satisfy $|B_i| = (1 \pm 0.1) \cdot |X| / \ell$.
    \end{enumerate}
\end{lemma}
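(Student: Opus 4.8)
The plan is to reproduce the proof of \Cref{lem:allocation_small_error} essentially verbatim, with $C_g$-wise independence replaced by $C_g\cdot k$-wise independence; the extra factor $k$ in the independence parameter is exactly what is needed to push the three failure probabilities down to $\tfrac{1}{\ell^{3C}|X|^k}$, $\ell^{-3Ck}$, and $|X|^{-3Ck}$. Throughout, fix a bucket $i$ and write $|B_i|=\sum_{x\in X\setminus Y}Z_x$ with $Z_x:=\mathbf{1}(g(x)=i)$, a sum of $C_gk$-wise independent Bernoulli variables of mean $1/\ell$; recall $\ell=2^{t}$ with $t=\tfrac{\log N}{\log\log N}$, so $\log_2\ell=t$, and, since $k=\log^{O(1)}N$, we have $\ell^{\Omega(1)}\gg C_gk$ for $N$ large — the one quantitative input behind everything.

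For the per-bucket estimates I would mirror the three cases of \Cref{lem:allocation_small_error}. In case (1), where $|X|\le\ell^{0.9}$, put $b^\star:=C_g+10k\log|X|/t$; since $|X|\le\ell^{0.9}$ gives $\log|X|\le0.9t$ this is $O(C_g+k)$, so $C_gk$-wise independence justifies the union-over-subsets tail $\Pr[|B_i|\ge b^\star]\le\binom{|X|}{b^\star}\ell^{-b^\star}\le(|X|/\ell)^{b^\star}\le\ell^{-0.1b^\star}$ (for $k=1$ one instead uses $\Pr[|B_i|\ge C_g]\le\ell^{-0.1C_g}$, which together with $|X|\le\ell^{0.9}$ still suffices), and here $\ell^{-0.1b^\star}=\ell^{-0.1C_g}\cdot\ell^{-k\log|X|/t}=\ell^{-0.1C_g}|X|^{-k}$ precisely because $\log_2\ell=t$; taking $C_g$ large in terms of $C$ makes this at most $\tfrac12\ell^{-(3C+1)}|X|^{-k}$, and the union bound over the $\ell$ buckets is absorbed. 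In cases (2) and (3) I would instead bound a central moment of order $C_gk$ (taking $C_g$ even), $\E\big[(|B_i|-\E|B_i|)^{C_gk}\big]\le O\big(C_gk\cdot\max(\E|B_i|,C_gk)\big)^{C_gk/2}$, apply Markov at deviation $\ell^{0.1}$ in case (2) (where $\E|B_i|<\ell^{0.1}$, so $\max(\E|B_i|,C_gk)<\ell^{0.1}$) and at deviation $0.1|X|/\ell$ in case (3) (where $\E|B_i|\ge\tfrac12\ell^{0.1}\gg C_gk$), and use $\ell^{0.1}\gg C_gk$ to swallow the $(C_gk)^{C_gk/2}$ factor; this leaves per-bucket failure $\ell^{-\Omega(C_gk)}$ in case (2) and $\big(O(C_gk)\cdot\ell/|X|\big)^{\Omega(C_gk)}$ in case (3), each small enough after the union bound once $C_g$ is chosen large enough relative to $C$. (In cases (2)--(3) the stated max-load follows since $|B_i|$ lies within the chosen deviation of $\E|B_i|$.)

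The one genuinely new ingredient is the bound $|B_J|\le C_gk$ in case (1), which I would obtain by conditioning on $g|_Y$. Fixing $g|_Y$ pins down the set $J=g(Y)$ of at most $k$ buckets met by $Y$; and, having used only $|Y|\le k$ of the $C_gk$ independent coordinates, the conditional law of $(g(x))_{x\in X\setminus Y}$ is still $(C_g-1)k$-wise independent. Hence $|B_J|=\sum_{x\in X\setminus Y}\mathbf{1}(g(x)\in J)$ is a sum of $(C_g-1)k$-wise independent Bernoullis of mean $\le k/\ell$, and the union-over-subsets tail at threshold $(C_g-1)k$ gives $\Pr[|B_J|\ge C_gk\mid g|_Y]\le\binom{|X|}{(C_g-1)k}(k/\ell)^{(C_g-1)k}\le\ell^{-0.05(C_g-1)k}$ (using $|X|\le\ell^{0.9}$ and $\ell$ large), which is at most $\tfrac12\ell^{-3C}|X|^{-k}$ for $C_g$ large; since this holds for every value of $g|_Y$ it holds unconditionally, and together with the per-bucket bound it proves case (1).

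The part requiring care — and the main obstacle — is the constant bookkeeping in cases (2) and (3): one must verify that the moment-method tail is genuinely below $\ell^{-3Ck}$ and $|X|^{-3Ck}$ uniformly over the whole range of $|X|$, and this is tightest near $|X|\approx\ell^{1.1}$, where $\E|B_i|\approx\ell^{0.1}$ is only a fixed polynomial factor above $C_gk$. It is exactly here that the growth rate $\ell=N^{1/\log\log N}\gg\log^{O(1)}N$ is used, and it is what forces $C_g$ to be chosen sufficiently large as a function of $C$ (while remaining a fixed constant, independent of $k$ and $N$). Apart from this and the $|B_J|$-conditioning step, the proof is the computation of \Cref{lem:allocation_small_error} with $C_g$ replaced by $C_gk$.
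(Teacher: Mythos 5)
Your proposal is correct and follows essentially the same route as the paper: the union-over-subsets tail for case (1), and the $C_gk$-th central moment plus Markov for cases (2) and (3), exactly as in \Cref{lem:allocation_small_error} with the independence parameter inflated by a factor of $k$. Two small points where your write-up is slightly more careful than the paper's sketch: for the $|B_J|$ bound you condition on $g|_Y$ before applying the subset union bound, which keeps the argument honestly within the $C_gk$-wise independence budget (the paper's one-line bound $\binom{r}{C_gk}(k'/\ell)^{C_gk}$ tacitly uses $C_gk+|Y|>C_gk$ coordinates and the set $J$ is itself random, so your conditioning step is the cleaner way to justify it); and you note that when $k=1$ the threshold $b^\star=C_g+10\log|X|/t$ can exceed the independence budget $C_gk=C_g$, handling it by dropping back to threshold $C_g$, whereas the paper's assertion that $v<C_gk$ is only valid for $k\ge 2$. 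Both of these are constant-level fixes that do not change the structure of the argument.
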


\begin{proof}
    For convenience, set $r := |X \setminus Y|$ in this proof. We prove these three cases separately.
    
    When $|X| \le \ell^{0.9}$, let $v := C_g + 10 \cdot \frac{k \cdot \log |X|}{\log N / \log \log N}$. Note that $v < C_g \cdot k$, and there comes
    $$
    \Pr_{g : \text{$C_g k$-wise}}[|B_i| \ge v] \le \binom{r}{v} \cdot (1 / \ell)^v \le (r / \ell)^v \le 1 / \ell^{0.1 v} \le \frac{1}{\ell^{4 C} \cdot |X|^k},\ \forall i \in [\ell].
    $$

    \noindent Moreover, for $B_J$, note that $k \ll \ell$. Thus it follows that
    \begin{align*}
        \Pr_{g : \text{$C_g k$-wise}}[|B_J| \ge C_g \cdot k] & \le \binom{r}{C_g \cdot k} \cdot \left( \frac{k'}{\ell} \right)^{C_g \cdot k} \le \left( \frac{r \cdot k}{\ell} \right)^{C_g \cdot k} \\
         & \le \left( \frac{k}{\ell} \right)^{0.1 C_g \cdot k} \le \left( \frac{1}{\ell} \right)^{0.5 C_g \cdot k} \le \ell^{-5 C \cdot k} \le \frac{1}{\ell^{4 C} \cdot |X|^k}.
    \end{align*}

    \noindent By a union bound, with probability $1 - \frac{1}{\ell^{3 C} \cdot |X|^k}$, $|B_i| \le v$ for all buckets as well as $|B_J| \le C_g \cdot k$.
    
    When $|X| > \ell^{0.9}$, the proof follows exactly the same logic as \Cref{lem:allocation_small_error}. Here we only show the analysis for $|X| \in (\ell^{0.9}, \ell^{1.1})$. Fix $i \in [\ell]$ and define $Z_x := \mathbf{1}(g(x) = i)$. Then $\E_{g : \text{$C_g k$-wise}}[(|B_i| - \E[|B_i|])^{C_g \cdot k}] \le O(C_g \cdot k \cdot r / \ell)^{C_g \cdot k / 2}$. Because $k \ll \ell$, we have
    $$
    \Pr_{g : \text{$C_g k$-wise}}[|B_i| - \E[|B_i|] \ge \ell^{0.1}] \le \frac{O(C_g \cdot k \cdot r / \ell)^{C_g \cdot k / 2}}{\ell^{0.1 C_g \cdot k}} \le \frac{O_{C_g}(k^{C_g \cdot k / 2})}{\ell^{0.05 C_g \cdot k}} \le \frac{1}{\ell^{0.01 C_g \cdot k}} \le \ell^{-(3 C + 1) \cdot k}.
    $$

    \noindent We obtain that with probability $1 - 1 / \ell^{3 C \cdot k}$, $\max_{i \in [\ell]} |B_i| \le 2 \ell^{0.1}$ after a union bound.
\end{proof}

\end{document}